\definecolor{frangreen}{rgb}{0.040, 0.475, 0.435}
	\newlength{\myl}
\let\origequation=\equation
\let\origendequation=\endequation
\resizebox{\linewidth}{!}{\ensuremath{\BODY}}}  
\newtheorem{teo}{Theorem}[section]
\newtheorem{prop}[teo]{Proposition}
\newtheorem{lemma}[teo]{Lemma}
\newtheorem{corol}[teo]{Corollary}
\newtheorem{remark}{Remark}[section]
\theoremstyle{definition}
\newtheorem{defin}[teo]{Definition}
\newcommand{\bea}{\begin{eqnarray}}
\newcommand{\eea}{\end{eqnarray}}
\def\v{\v}
\def\12{\frac{1}{2}}
\newcommand{\be}{\begin{equation}}
\newcommand{\ee}{\end{equation}}
\newcommand{\ba}{\begin{aligned}}
	\newcommand{\ben}{\begin{eqnarray}\displaystyle}
	\newcommand{\een}{\end{eqnarray}}
\gdef\@fpheader{}
\begin{document}

		\title{On the convergence of Nekrasov functions}

		\author{Paolo Arnaudo, Giulio Bonelli, Alessandro Tanzini}
		\affiliation{International School of Advanced Studies (SISSA), via Bonomea 265, 34136 Trieste, Italy}
		\affiliation{INFN, Sezione di Trieste, Trieste, Italy}
		\affiliation{Institute for Geometry and Physics, IGAP, via Beirut 2, 34136 Trieste, Italy}
    \emailAdd{parnaudo@sissa.it}
    \emailAdd{bonelli@sissa.it}
    \emailAdd{tanzini@sissa.it}
		
		\date{\today}

\abstract{
		In this note we present some results on the convergence of Nekrasov 
partition functions as power series in the instanton counting parameter.
We focus on $U(N)$ ${\mathcal N}=2$ gauge theories in four dimensions with matter in the adjoint and in the fundamental representations of the gauge group respectively and find rigorous lower bounds for the convergence radius in the two cases:
if the theory is {\it conformal}, then the series has at least a {\it finite} radius of convergence, while if it is
{\it asymptotically free} it has {\it infinite} radius of convergence. Via AGT correspondence, this implies that the related irregular conformal blocks of $W_N$ algebrae admit 
a power expansion in the modulus converging in the whole plane. By specifying to the $SU(2)$ case, we apply our 
results to analyse the convergence properties of the corresponding Painlev\'e $\tau$-functions.
}
	\maketitle
	
\section{Introduction}

The study of the analytic structure in the coupling constant of partition functions and correlation functions in quantum field theory (QFT) is a theme of paramount importance in this subject since its birth
\cite{Dyson,tHooft:1977xjm}.
Having control on the exact general coefficient of these complete expansions is very rare, but nonetheless crucial to study in depth quantum field theories beyond their incomplete perturbative definition. To this end the control on non-perturbative effects is very important.

The aim of this paper is to 
focus on the convergence properties of instanton effects in
a case in which these can be calculated explicitly.
This happens for BPS partition functions of supersymmetric enough QFTs so that, 
due to no renormalisation theorems, higher loop corrections vanish because of boson-fermion cancellation. This happens for ${\mathcal N}=2$ gauge theories in $D=4$ where
 the coefficients of the instanton expansion are exactly known, so that one can precisely estimate the radius of convergence of the series.

The resulting instanton series, known as Nekrasov functions
\cite{Nekrasov:2002qd}, have many applications in mathematical physics problems, ranging from
quantization of integrable systems
\cite{Nekrasov:2009rc}, relation to conformal blocks of Virasoro algebra
\cite{Alday:2009aq}, isomonodromic deformation theory 
\cite{Gamayun:2012ma,Bonelli:2016qwg},
non perturbative approaches to the quantisation of integrable systems
\cite{Grassi:2019coc,Fioravanti:2019vxi}
and Heun functions \cite{Bonelli:2022ten,Lisovyy:2022flm}. 
Also from these perspectives, a rigorous analysis of the convergence properties of the Nekrasov functions is important.  

In this paper we prove some theorems on the convergence of Nekrasov 
functions as power series in the complexified gauge coupling
$q=e^{2\pi i\tau}$, with $\tau=\frac{\theta}{2\pi}+i\frac{4\pi}{g^2}$, as $Z_{\rm inst}=\sum_{k\geq 0}q^k z_k$.
The general coefficient $z_k$ of the series for the theory with $U(N)$ gauge group is 
the equivariant volume of the moduli space $\mathcal{M}_{k,N}$ of
k-instantons in $U(N)$ gauge theory. This can be computed via equivariant localisation formulae as a sum over the fixed points of the algebraic torus action $\left(\mathbb{C}^*\right)^{N+2}$ on $\mathcal{M}_{k,N}$. From the gauge theory viewpoint, the associated equivariant weights $a_i, \, i=1,\ldots, N$ and 
$\epsilon_1,\epsilon_2$ are
 the vevs of the Higgs field and the parameters of the so-called $\Omega$-background respectively.
 The fixed points are classified by {\it coloured partitions} of $k$, described by a collection of  $N$ Young diagrams
with total number of boxes $k$. The equivariant parameters of line bundles over $\mathcal{M}_{k,N}$
describe the masses of the matter content of the theory. The coefficients $z_k$ turns out to be 
rational functions of the equivariant parameters. 
Therefore, to analyse the convergence properties of the partition functions, one needs to 
estimate uniformly in the equivariant parameters the behaviour at large $k$ of these intricate rational functions.
We will adapt to the case at hand some known combinatorial results to prove
some -- physically meaningful -- estimates for the convergence radius of these partition functions.
In short, we will prove that
\begin{itemize}
    \item if the theory is {\it asymptotically free}, then the multiinstanton series has {\it infinite} radius of convergence;
    \item if the theory is {\it conformal}, then the multiinstanton series has at least a {\it finite} radius of convergence.
\end{itemize}

More precisely, we can establish the above results in a number of cases and under some genericity assumptions on the background parameters.
We analyse the cases of the gauge theory where an ${\mathcal N}=2$ $U(N)$ vector multiplet 
is coupled to a massive hypermultiplet in the adjoint representation (that is ${\mathcal N}=2^*$ theory) and the case in which it is coupled to $N_f\leq 2N$ hypermultiplets in the fundamental.
These two cases are respectively split between sections 2 and 3 of the paper.

Natural genericity assumptions on the Higgs field vacuum expectation values -- which are taken to be immeasurable with respect to the $\Omega$-background parameters -- 
are imposed in order to avoid potential poles in the rational functions coefficients $z_k$, while mass parameters are not constrained.
In section 2 we analyse ${\mathcal N}=2^*$ theory by assuming generic  
$\Omega$-background parameters, while
in section 3 the study of the gauge theory with matter in the fundamental representation of the gauge group is performed in the self-dual $\Omega$-background 
$\epsilon_1+\epsilon_2=0$.

In section 2 we prove Theorem \ref{teoN=2} stating that the instanton partition function of the $\mathcal{N}=2$ $U(N)$ gauge theory with an adjoint multiplet with mass $m$
and ${\rm Arg}(\epsilon_1/\epsilon_2)\neq0$
is a power series converging absolutely within the disk
\begin{equation}\label{th2}
|q|<\left(1+\frac{|m|}{D(\vec{a},\epsilon_1,\epsilon_2)}\right)^{-2(N-1)},
\end{equation}
where 
$$
D(\vec{a},\epsilon_1,\epsilon_2)=\min_{1\le i\ne j\le N}\{\min_{p\in
{\mathbb Z}\epsilon_1+{\mathbb Z}\epsilon_2}\{|a_i-a_j-p|\}\}.
$$

In section 3 we consider the instanton partition function of the $\mathcal{N}=2$ $U(N)$ gauge theory with $N_f\leq2N$ multiplets in the fundamental representation and self-dual $\Omega$-background, with $\epsilon_1=-\epsilon_2=\epsilon$. For $N_f=2N$ we prove that,
as a power series in $q$, it
converges absolutely within 
a disk whose radius depends on the values of the Higgs 
vacuum expectation values $a_i=\epsilon\alpha_i$
as
\begin{equation}
|q|<\prod_{\{(i,j)\in\{1,\dots,N\}^2\ |\ i\ne j\}}\left[\frac{2^4}{\min\{1,|\alpha_i-\alpha_j|\}}\left( 1+\frac{|\alpha_i-\alpha_j|}{C_{ij}(\vec{\alpha})}\right)\right]^{-1}\, ,
\end{equation}
where $$C_{ij}(\vec{\alpha})=\min_{n\in\mathbb{Z}}|\alpha_i-\alpha_j-n|.$$
This is the main content of Theorem \ref{teoremafondamentale}.
By standard holomorphic decoupling, one finds that asymptotically free cases $N_f<2N$ have instead infinite radius of convergence as power series in the corresponding renormalisation group invariant scale.

In section 4 we apply our results to analyse the convergence properties of some Painlev\'e $\tau$-functions.
The Kiev formula conjectured in \cite{Gamayun:2013auu}
states that Painlev\'e $\tau$-functions can be expressed as discrete Fourier transforms of 
suitable full $SU(2)$ Nekrasov partition functions. Generically, 
$$\tau_P(q;a,s)\propto\sum_{n\in{\mathbb Z}}s^n q^{(a+n)^2}Z_{{\rm 1loop}}(a+n)Z_{{\rm inst}}(q,a+n)$$
where $Z_{{\rm 1loop}}(a)$ is the perturbative 1-loop contribution while $a$ and $s$ parametrize the initial conditions of the Painlev\'e flow.
The study of the convergence of the Fourier series is done applying the above theorems and shows 
that the absolute convergence bounds on $Z_{{\rm inst}}(q,a)$ extend to $\tau_P(q;a,s)$. 
Our result generalises the one for PIII$_3$ equation obtained in \cite{Its:2014lga}.

By AGT correspondence \cite{Alday:2009aq}, the instanton partition function is identified with the Virasoro,
and more in general $W$-algebras \cite{Wyllard:2009hg}, 
conformal blocks\footnote{
The Liouville central charge is $c=1+6Q^2$, $Q=b+1/b$, 
$b=\sqrt{\frac{\epsilon_1}{\epsilon_2}}$. Liouville momenta are parametrised as 
$\alpha=Q/2+ia$, where $a\in{\mathbb R}$ is a  Cartan or a mass parameter.}. 
In particular, the ${\mathcal N}=2^*$
theory corresponds to one point conformal blocks on the torus, while the ${\mathcal N}=2$ with $N_f=2N$ fundamentals corresponds to four point 
conformal blocks on the Riemann sphere. The cases $N_f<2N$
instead involve correlators with $W_N$ irregular states
\cite{Gaiotto:2009ma,Bonelli:2011aa,Gaiotto:2012sf,Kanno:2013vi}. 
In this context our results provide a lower bound for the convergence radius of the regular conformal blocks of $W$-algebrae and establish that  irregular conformal blocks expansion has
an infinite radius of convergence.
We remark that the latter result is in line with the expectations coming from special choices of external momenta for which the correlator reduces to known functions \cite{Lisovyy:2018mnj}. For the regular conformal blocks analogous reasoning and modular invariance of the correlators leads to the expectation that the radius of convergence is actually one. Our estimate from {\it brute force} direct inspection of the combinatorial formulae for the coefficients is therefore not optimal and could be hopefully improved using other arguments related to S-duality properties of the corresponding supersymmetric gauge theories.
Let us remark that some results for the particular case of Virasoro algebra 
and non generic background parameters have been recently derived by using a probabilistic approach\footnote{See also 
\cite{REMY}
at 
\url{http://www.math.columbia.edu/~remy/files/Modular_Equation.pdf}} 
in \cite{Colin,Ghosal:2020}.
Previous studies on the convergence radius of the instanton series for pure $\mathcal{N}=2$  Super Yang-Mills  with $SU(2)$ gauge group appeared in \cite{Its:2014lga} which discussed the $\epsilon_1+\epsilon_2=0$ case for the four dimensional gauge theory, and in \cite{Bershtein:2016aef,Felder:2017rgg} where the case of five dimensional gauge theory on $\mathbb{R}^4\times S^1$ was addressed for non-generic values of $\Omega$-background parameters.

\vspace{.3cm}

There are several interesting questions to be further investigated. 

Notice that the bound we found after the proof of theorem \ref{teoN=2}
regularly extends to the excluded ray ${\rm Arg}(\epsilon_1/\epsilon_2)=0$.
It is therefore conceivable that it could be proved, with other techniques,
for any non zero value of the $\Omega$-background parameters.

For technical reasons the case of fundamental matter was analysed for $\epsilon_1+\epsilon_2=0$ 
$\Omega$-background, but we believe that our results can be extended also to generic values of the $\epsilon$s. For example, one should be able to extend formula
\eqref{polynomialidentity} in order to study the case $\epsilon_1/\epsilon_2\in {\mathbb Q_{<0}}$.

It would be also very interesting to 
be able to extend the study of Nekrasov function combinatorics in the Nekrasov-Shatashvili limit $\epsilon_1=0$ \cite{Nekrasov:2009rc},
corresponding to the classical limit
of conformal blocks.

Regarding the above issues, complementing our analysis with blow-up equations for the Nekrasov partition
functions \cite{Gottsche:2010ig}  could improve our results. 

The results we obtained provide an explanation of the unreasonable effectiveness of instanton counting also in strongly coupled phases, such as the $\mathcal{N}=1$ confining vacua \cite{Fucito:2005wc}. One would be then tempted to apply to Argyres-Douglas superconformal points \cite{Argyres:1995jj}, with the caveat that these are reached via a double scaling limit where the parameters appearing in the coefficients of the instanton series, in particular the vevs of the Higgs field, are redefined. We did not address this issue in this paper.

One obvious extension of our analysis is to linear and circular quiver gauge theories in general $\Omega$-background, which, on the two-dimensional CFT counterpart, correspond to conformal blocks with several insertions on the sphere and on the torus, respectively. 

It would also be interesting to extend the approach and results of this paper 
to the corresponding five dimensional gauge theories on a circle.

As mentioned above, it would be interesting to complement our analysis 
basing on electromagnetic duality and the induced modular properties of the BPS partition functions, from which one expects
that the radius of convergence for the conformal
cases is $|q|<1$ for generic values of the mass parameters. 
For an exhaustive analysis of their analytic properties, one could exploit the relation of these partition functions 
with the solutions of isomonodromic deformation problems (a.k.a. differential equations of Painlev\'e type) and their singularity theory.
Let us remark that Fredholm  determinant representations of the related $\tau$-functions have been derived \cite{Gavrylenko:2016zlf} and their analysis could be used to provide a proof of the desired convergence properties.

\vskip 1cm
{\bf Acknowledgments}:
We would like to thank 
M. Bershtein, H. Desiraju, G. Felder, A. Grassi, O. Lisovyy, G. Remy and A. Shchechkin
for useful questions, discussions and clarifications. 
This research is partially supported by the INFN Research Projects GAST and ST\&FI, PRIN "Geometria delle varietà algebriche", PRIN "Non-perturbative Aspects Of Gauge Theories And Strings", PRIN "String Theory as a bridge between Gauge Theories and Quantum Gravity" and INdAM.

\section{Convergence of $U(N)$ Instanton Partition Function with adjoint matter}

We begin our analysis with the study of the convergence properties of the instanton partition function of $\mathcal{N}=2^*$ $U(N)$ gauge theory 
\begin{equation}\label{ciao}
\begin{aligned}
&Z_{\mathrm{inst}}^{\mathcal{N}=2^*, U(N)}=
\sum_{k\ge 0}q^k\sum_{|\vec{Y}|=k}\prod_{i=1}^N\prod_{s\in Y_i}\left(1-\frac{m}{-\epsilon_1L_{Y_i}(s)+\epsilon_2(A_{Y_i}(s)+1)}\right)\left(1-\frac{m}{\epsilon_1(L_{Y_i}(t)+1)-\epsilon_2A_{Y_i}(s)}\right)\\&\prod_{1\le i\ne j\le N}\prod_{s\in Y_i}\left(1-\frac{m}{a_i-a_j-\epsilon_1L_{Y_j}(s)+\epsilon_2(A_{Y_i}(s)+1)}\right)
\prod_{t\in Y_j}\left(1-\frac{m}{-a_j+a_i+\epsilon_1(L_{Y_i}(t)+1)-\epsilon_2A_{Y_j}(t)}\right).
\end{aligned}
\end{equation}
We refer to Appendix \ref{appendixA} for the notations used.
In the products above we collected first the pairs with $i=j$ (in what follows we will call these contributions \emph{diagonal}), and then the pairs $(i,j)$ with $i\ne j$ (in what follows we will call these contributions \emph{nondiagonal}).
From a direct inspection of \eqref{ciao}, one can see that the coefficients of the series are well defined under the assumptions
\begin{equation}\label{conditions}
\mathrm{Arg}\left(\frac{\epsilon_2}{\epsilon_1}\right)\ne 0 \quad {\rm and} \quad\pm(a_i-a_j)\notin\Lambda(\epsilon_1,\epsilon_2)\quad
\forall 1\le i<j\le N\, ,
\end{equation} where $\Lambda(\epsilon_1,\epsilon_2)$ is the 2-dimensional lattice
\begin{equation}
\Lambda(\epsilon_1,\epsilon_2)=\{z\in\mathbb{C}\ | z\in\epsilon_1\mathbb{Z}+\epsilon_2\mathbb{Z}\},
\end{equation}
which we will use in the proof of the 
\begin{teo}\label{teoN=2}
The instanton partition function of the $\mathcal{N}=2^*$ $U(N)$ gauge theory, as a power series in the complex parameter $q$, is absolutely convergent at least for
\begin{equation}
|q|<\left(1+\frac{|m|}{D(\vec{a},\epsilon_1,\epsilon_2)}\right)^{-2(N-1)},
\end{equation}
where $m$ is the mass of the adjoint multiplet, and 
\begin{equation}
D(\vec{a},\epsilon_1,\epsilon_2)=\min_{1\le i\ne j\le N}\{\min_{p\in\Lambda(\epsilon_1,\epsilon_2)}\{|a_i-a_j-p|\}\}.
\end{equation}
\end{teo}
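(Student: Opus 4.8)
The plan is to apply the Cauchy--Hadamard criterion to the series $Z_{\mathrm{inst}}=\sum_{k\ge0}q^k z_k$, where $z_k=\sum_{|\vec Y|=k}\prod(\cdots)$ is the coefficient read off from \eqref{ciao}. It suffices to prove the uniform estimate $\limsup_{k\to\infty}|z_k|^{1/k}\le\left(1+\frac{|m|}{D(\vec a,\epsilon_1,\epsilon_2)}\right)^{2(N-1)}$, since the radius of convergence is the reciprocal of this $\limsup$. To bound $|z_k|$ I would first pass to absolute values termwise, using $\left|1-\frac{m}{x}\right|\le 1+\frac{|m|}{|x|}$ for every factor in \eqref{ciao}, and then estimate the \emph{diagonal} and \emph{nondiagonal} blocks separately, the point being that only the latter contributes to the exponential growth rate.

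For the nondiagonal block I would note that each denominator there has the form $\pm(a_i-a_j)-p$ with $p\in\Lambda(\epsilon_1,\epsilon_2)$: indeed $-\epsilon_1 L+\epsilon_2(A+1)$ and $\epsilon_1(L+1)-\epsilon_2 A$ both lie in $\Lambda(\epsilon_1,\epsilon_2)$. Hence each such denominator has modulus at least $D(\vec a,\epsilon_1,\epsilon_2)$ by the very definition of $D$, so every nondiagonal factor is bounded by $1+\frac{|m|}{D}$. A simple count shows that, for a fixed coloured partition $\vec Y$ with $|\vec Y|=k$, the nondiagonal block contains exactly $\sum_{i\ne j}(|Y_i|+|Y_j|)=2(N-1)k$ factors, so the whole block is bounded by $\left(1+\frac{|m|}{D}\right)^{2(N-1)k}$. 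This already produces the claimed exponent.

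It remains to show that the diagonal block and the sum over partitions contribute only subexponentially. Using the hypothesis $\mathrm{Arg}(\epsilon_2/\epsilon_1)\ne0$, i.e. that $\epsilon_1$ and $\epsilon_2$ are not positive real multiples of one another, a compactness argument on the positive cone gives a constant $c=c(\epsilon_1,\epsilon_2)>0$ with $|{-\epsilon_1}L+\epsilon_2(A+1)|\ge c\,h(s)$ and $|\epsilon_1(L+1)-\epsilon_2 A|\ge c\,h(s)$ for all boxes, where $h(s)=A_{Y_i}(s)+L_{Y_i}(s)+1$ is the hook length (this is exactly where the excluded ray would spoil the bound, since there the nonnegative combinations can vanish). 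Bounding $1+t\le e^{t}$, the diagonal block for $\vec Y$ is then at most $\exp\!\big(\tfrac{2|m|}{c}\sum_{i}\sum_{s\in Y_i}\tfrac1{h(s)}\big)$. The key combinatorial input is the uniform estimate $\sum_{s\in Y_i}\frac1{h(s)}=o(|Y_i|)$, which I would obtain from $h(s)\ge\max(A(s)+1,L(s)+1)$ together with the fact that along each row and each column the hook lengths are strictly decreasing, hence distinct; this forces the diagonal block to be $e^{o(k)}$ uniformly in $\vec Y$. Since the number of $N$-coloured partitions of $k$ grows subexponentially (its generating function is $\prod_{n\ge1}(1-x^n)^{-N}$), collecting the three estimates gives $|z_k|\le e^{o(k)}\left(1+\frac{|m|}{D}\right)^{2(N-1)k}$, and the root test yields the theorem.

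The main obstacle is precisely the combinatorial lemma bounding $\sum_{s\in Y}1/h(s)$ sublinearly and uniformly over all partitions of a given size: the naive per-box bound $h(s)\ge\mathrm{const}$ only gives a linear estimate, and hence a spurious extra exponential factor, so one genuinely must exploit the distinctness of hook lengths along rows and columns (balancing the row count against the column count to cover both wide and tall diagrams). Care is also required to make all constants ($c$, $D$, and the implied constant in $o(k)$) depend only on the fixed parameters $\vec a,\epsilon_1,\epsilon_2,m$ and not on $k$ or on the particular coloured partition, so that the $\limsup$ estimate is genuinely uniform.
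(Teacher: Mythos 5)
Your overall architecture is sound: the nondiagonal factors are indeed bounded by $1+|m|/D$ each, their number for fixed $\vec Y$ is exactly $2(N-1)k$, and your compactness argument on the simplex does give a constant $c(\epsilon_1,\epsilon_2)>0$ with both diagonal denominators bounded below by $c\,h(s)$ precisely when $\mathrm{Arg}(\epsilon_2/\epsilon_1)\neq 0$ (this is a cleaner, uniform version of the paper's two-subcase estimates). The genuine gap is in the step you yourself flag as the crux: the uniform sublinear bound $\sum_{s\in Y}1/h_Y(s)=o(|Y|)$. The justification you sketch --- distinctness of hooks along rows and columns plus ``balancing the row count against the column count'' --- gives at best $\min\{r(1+\log \lambda_1),\,\lambda_1(1+\log r)\}$, where $r$ is the number of rows and $\lambda_1$ the number of columns; for thin L-shaped (hook) diagrams one has $r\sim\lambda_1\sim k/2$, so this bound is of order $k\log k$ and does \emph{not} establish $o(k)$, even though for such shapes the true sum is only $O(\log k)$. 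The lemma itself is true, but needs a different use of monotonicity: since hook lengths strictly decrease along each row and each column, two boxes with equal hook length are incomparable, so the boxes with $h(s)=j$ form an antichain; any antichain in a diagram with $k$ boxes has at most $\sqrt{2k}$ elements, hence $N_j:=\#\{s: h_Y(s)=j\}\le\sqrt{2k}$ for every $j$, and splitting $\sum_j N_j/j$ at $j\sim\sqrt k$ (using $\sum_j N_j=k$ for the tail) gives $\sum_{s\in Y}1/h_Y(s)=O(\sqrt k\log k)$ uniformly, which closes your argument. Without this (or an equivalent) input, the diagonal block only gets the trivial bound $e^{O(k)}$ and produces a spurious exponential factor that destroys the claimed radius.

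It is worth noting that the paper avoids any such combinatorial lemma: it pairs the two diagonal factors of each box, using that the two denominators $x,y$ satisfy $x+y=\epsilon_1+\epsilon_2$, so the pair equals $1+\bigl(m^2-m(\epsilon_1+\epsilon_2)\bigr)/(xy)$ and is bounded by $1+C/h_Y(s)^2$; the sum over all partitions of $\prod_s\bigl(1+C/h(s)^2\bigr)$ weighted by $x^{|Y|}$ is then evaluated in closed form by the Nekrasov--Okounkov identity \eqref{teo1.2}, giving an explicit Euler-function bound convergent exactly for $|q|\,(1+|m|/D)^{2(N-1)}<1$. Your route, once the $O(\sqrt k\log k)$ hook-sum estimate is properly proved, is more elementary (no hook-length formula needed) and your simplex/compactness bound is tidier than the paper's case analysis; the paper's route buys an explicit closed-form majorant and sidesteps the uniform combinatorics entirely.
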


From this result, two corollaries can be proved. The first comes from the fact that the $\mathcal{N}=2^*$ instanton partition function reduces to the $\mathcal{N}=2$ SYM instanton partition function in the double scaling limit $q\to 0$ and $m\to\infty$ with $\Lambda=qm^{2N}$ kept finite.
\begin{corol}\label{corolN=2}
The instanton partition function of the $U(N)$ pure gauge theory, as a power series in the complex parameter $\Lambda$, is convergent over the whole complex plane.
\end{corol}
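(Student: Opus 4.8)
\emph{Plan.} Write the adjoint series as $Z_{\mathrm{inst}}^{\mathcal{N}=2^*,U(N)}=\sum_{k\ge0}z_k(m)\,q^k$ and recall the double scaling limit $q\to0,\ m\to\infty$ with $\Lambda=q\,m^{2N}$ fixed, under which it collapses to the pure $U(N)$ series $Z_{\mathrm{inst}}^{\mathrm{SYM}}=\sum_{k\ge0}C_k\,\Lambda^k$. The first thing to pin down is that the limit is well posed coefficient by coefficient. For a fixed coloured partition of $k$ the product in \eqref{ciao} contains exactly $2Nk$ factors of the shape $\bigl(1-m/x\bigr)$, namely $2k$ diagonal ones and $2(N-1)k$ nondiagonal ones, so $z_k(m)$ is a polynomial in $m$ of degree $2Nk$ whose leading coefficient is $\sum_{|\vec Y|=k}\prod_\alpha x_\alpha^{-1}$. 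Hence $C_k=\lim_{m\to\infty}m^{-2Nk}z_k(m)$ exists and equals precisely this pure gauge coefficient, and the Corollary amounts to proving $\limsup_k|C_k|^{1/k}=0$.

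\emph{Repackaging Theorem \ref{teoN=2}.} For every fixed admissible $m,\vec a,\epsilon_1,\epsilon_2$ the Theorem gives absolute convergence of $\sum_k z_k(m)q^k$ for $|q|<R(m):=\bigl(1+|m|/D\bigr)^{-2(N-1)}$. Substituting $q=\Lambda/m^{2N}$ rewrites this as convergence of $f_m(\Lambda):=\sum_k\bigl(m^{-2Nk}z_k(m)\bigr)\Lambda^k$ on the disk $|\Lambda|<\rho(m)$ with
\[
\rho(m)=|m|^{2N}R(m)=|m|^{2N}\Bigl(1+\tfrac{|m|}{D}\Bigr)^{-2(N-1)}\ \xrightarrow{m\to\infty}\ \infty ,
\]
the divergence being forced by $2N>2(N-1)$ (concretely $\rho(m)\sim D^{2(N-1)}|m|^{2}$). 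So the radius in $\Lambda$ of the rescaled adjoint series can be made arbitrarily large, which is the heuristic content of the Corollary; what remains is to transfer this to the limiting coefficients $C_k$.

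\emph{The rigorous core.} The clean way is to bound $|C_k|$ directly, reusing the two ingredients of the proof of Theorem \ref{teoN=2}. The nondiagonal denominators are of the form $a_i-a_j-p$ with $p\in\Lambda(\epsilon_1,\epsilon_2)$, hence $\ge D$ in modulus, so the $2(N-1)k$ nondiagonal reciprocals contribute at most a harmless exponential base $D^{-2(N-1)k}$. The diagonal part factorises over colours into hook products: writing $h(s)=A(s)+L(s)+1$, the condition $\mathrm{Arg}(\epsilon_2/\epsilon_1)\neq0$ gives $|{-\epsilon_1L(s)+\epsilon_2(A(s)+1)}|\ge c\,h(s)$ and $|\epsilon_1(L(s)+1)-\epsilon_2A(s)|\ge c\,h(s)$, so that the diagonal reciprocals for a single Young diagram $Y$ with $|Y|=j$ are bounded by $(\text{const})^{2j}\prod_{s\in Y}h(s)^{-2}$. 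Summing over $|Y|=j$ and invoking $f^Y=j!/\prod_s h(s)$ together with $\sum_{|Y|=j}(f^Y)^2=j!$ yields $\sum_{|Y|=j}\prod_s h(s)^{-2}=1/j!$, a genuine super-exponential suppression. Convolving the $N$ colours and combining with the nondiagonal base shows $|C_k|\le (\text{const})^{k}/\,(\text{factorial growth})$, whence $\limsup_k|C_k|^{1/k}=0$ and $\sum_kC_k\Lambda^k$ is entire.

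\emph{Main obstacle.} The delicate point is precisely this super-exponential decay of the diagonal hook products, uniformly over coloured partitions of $k$: Theorem \ref{teoN=2} alone is an asymptotic-in-$k$ statement at \emph{fixed} $m$, and since the diagonal factors individually grow like $m^{2k}$, one cannot simply let $m\to\infty$ inside the Theorem's radius. If instead one insists on the decoupling route through $f_m$, the same obstruction reappears as an interchange of the limits $k\to\infty$ and $m\to\infty$: one would need a bound $\max_{|\Lambda|=r}|f_m(\Lambda)|\le G(r)$ uniform in large $m$, and after the $m^{-2Nk}$ rescaling this uniform bound is in fact equivalent to the factorial decay above. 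Establishing that the estimate from the proof of the Theorem survives the rescaling and delivers the $1/k!$-type suppression (the $U(1)$ check $z_k^{U(1)}\propto 1/(\epsilon^{2k}k!)$, giving $e^{-\Lambda/\epsilon^2}$, is the guiding model) is therefore the crux; once it is in hand either closing argument goes through immediately.
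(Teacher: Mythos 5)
Your proposal is correct, but its rigorous core goes beyond the argument the paper itself gives. The paper's proof of Corollary \ref{corolN=2} is precisely your ``repackaging'' paragraph: it takes the bound $|q|<\bigl(1+|m|/D\bigr)^{-2(N-1)}$ from Theorem \ref{teoN=2}, multiplies by $|m|^{2N}$ to get $|\Lambda|\le\bigl(1+|m|/D\bigr)^{2}\bigl(1/|m|+1/D\bigr)^{-2N}$, and notes that this constraint becomes vacuous as $m\to\infty$ because $2N>2(N-1)$; the interchange of the limits $k\to\infty$ and $m\to\infty$ that you flag is left implicit. Your ``rigorous core'' is therefore a genuinely different and more self-contained route: you bound the $2(N-1)k$ nondiagonal reciprocals of the pure-gauge coefficient by $D^{-2(N-1)k}$, the diagonal ones by $c^{-2k}\prod_s h_Y(s)^{-2}$ using the same lower bounds established in the proof of Theorem \ref{teoN=2}, and then the hook-length formula $f^Y=|Y|!/\prod_s h_Y(s)$ together with $\sum_{|Y|=j}(f^Y)^2=j!$ gives $\sum_{|Y|=j}\prod_s h_Y(s)^{-2}=1/j!$, hence $|C_k|\le \mathrm{const}^k\,N^k/k!$ after convolving the $N$ colours, and the series in $\Lambda$ is entire. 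What each approach buys: the paper's argument is short and uses nothing beyond Theorem \ref{teoN=2}, at the price of not spelling out why the growing radius of the $m$-dependent rescaled series transfers to the limiting coefficients; yours yields an explicit factorial bound on the pure-gauge coefficients, settles that uniformity question directly, and is in effect the decoupling limit of the Nekrasov--Okounkov identity \eqref{teo1.2}, which the paper never invokes for this corollary. One remark: your closing paragraph undersells your own argument --- the $1/k!$ suppression you call the ``crux'' is already fully established by the hook-length/Burnside computation in your rigorous core, so no further step is missing.
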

The second corollary comes from the fact that if the mass of the adjoint multiplet goes to zero, $m\to 0$, the $\mathcal{N}=2^*$ instanton partition function reduces to the $\mathcal{N}=4$ instanton partition function.
\begin{corol}\label{corolN=4}
The instanton partition function of the $\mathcal{N}=4$ $U(N)$ gauge theory, as a power series in the complex parameter $q$, is convergent in the region $|q|< 1$.
\end{corol}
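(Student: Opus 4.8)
The plan is to deduce Corollary~\ref{corolN=4} directly from Theorem~\ref{teoN=2} by sending the adjoint mass to zero, backed up by an explicit evaluation of the series at $m=0$ that also pins down the radius of convergence. The first thing I would check is that the hypotheses \eqref{conditions} impose no constraint on $m$, so Theorem~\ref{teoN=2} already applies at $m=0$; there the bound $(1+|m|/D)^{-2(N-1)}$ degenerates to $(1+0)^{-2(N-1)}=1$, giving absolute convergence for $|q|<1$ immediately. To see that this is no accident --- and that $1$ is the genuine radius --- I would then evaluate the coefficients directly.

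Since every factor in \eqref{ciao} has the form $1-m/(\cdots)$, setting $m=0$ collapses each one to $1$, so the $k$-th coefficient reduces to the number of $N$-tuples of Young diagrams with $k$ boxes in total (the $N$-coloured partitions of $k$), and the whole series factorises into an Euler product:
\begin{equation}
Z_{\mathrm{inst}}^{\mathcal{N}=4,\,U(N)}=\sum_{k\ge 0}q^k\!\!\sum_{|\vec{Y}|=k}\!\!1=\prod_{n\ge 1}\frac{1}{(1-q^n)^{N}},
\end{equation}
the $N$-th power of the generating function of ordinary partitions.

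It then remains to read off the convergence. Since $\sum_{n\ge 1}|q|^{n}<\infty$ on the open unit disk, the logarithm $-N\sum_{n\ge 1}\log(1-q^n)$ converges absolutely there, so the product defines a nonvanishing holomorphic function for $|q|<1$, which proves the claim. For sharpness --- not required by the statement but worth recording --- the coefficients are positive with Hardy--Ramanujan subexponential growth $\log(\#\{\vec{Y}:|\vec{Y}|=k\})=O(\sqrt{k})$, whence the radius equals exactly one and the product diverges as $q\to 1^{-}$. I expect no real obstacle here: the whole argument is a reduction, and the only points demanding care are confirming that $m=0$ is genuinely covered by Theorem~\ref{teoN=2} rather than merely reached as a limit, and invoking the standard partition asymptotics to identify the radius precisely.
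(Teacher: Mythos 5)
Your proposal is correct and follows essentially the same route as the paper: the paper evaluates the series directly at $m=0$, obtaining $\sum_k p_N(k)q^k=\phi(q)^{-N}$ convergent for $|q|<1$, and also notes, exactly as you do, that the result follows from the $\mathcal{N}=2^*$ bound \eqref{noncelha} upon setting the adjoint mass to zero. The only difference is ordering of emphasis, which is immaterial.
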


\begin{remark}
By using
known analytic properties of the partition function \eqref{ciao}, 
one can lift \eqref{conditions} to milder conditions for the values of the $a$-parameters. 
Indeed, the second condition, which 
we imposed to a priori get rid of the possible poles in the non-diagonal part, can be 
reduced to the set of actual poles as classified in 
\cite{Zamolodchikov:1984eqp, Poghossian:2017atl,
Sysoeva:2022syp}.
\end{remark}

\begin{remark}
The content of Corollary \ref{corolN=2} is an higher rank generalisation of an observation about the $SU(2)$ SYM ${\mathcal N=2}$ instanton partition function given in 
\cite{Its:2014lga}.
\end{remark}

\begin{remark}
Corollary \ref{corolN=4} is trivial. Indeed, it is well known that the ${\mathcal N=4}$ partition function is equal to $\phi(q)^{-N}$, $\phi(q)$
being the Euler function. 
\end{remark}

\subsection{Proof of Theorem \ref{teoN=2}}

It will be useful to divide the assumption
\begin{equation}
\mathrm{Arg}\left(\frac{\epsilon_2}{\epsilon_1}\right)\ne 0
\end{equation}
in the two  subcases:
\begin{enumerate}
\item $\mathrm{Im}\bigl(\frac{\epsilon_2}{\epsilon_1}\bigr)\ne 0;$
\item $\mathrm{Re}\bigl(\frac{\epsilon_2}{\epsilon_1}\bigr)<0.$
\end{enumerate}

\subsubsection{First Subcase}

Suppose $$\mathrm{Im}\left(\frac{\epsilon_2}{\epsilon_1}\right)\ne 0.$$
Let $\delta>0$ be a real number such that $$\min\biggl\{\bigg|\mathrm{Im}\biggl(\frac{\epsilon_2}{\epsilon_1}\biggr)\bigg|,\bigg|\mathrm{Im}\biggl(\frac{\epsilon_1}{\epsilon_2}\biggr)\bigg|\biggr\}>\delta.$$
Notice that  $\delta\le 1$. 

We first analyze the products over the boxes of one of the diagrams, say $Y_1$, whose contributions come from the diagonal factors, namely we look for a bound on 
\begin{equation}
\bigg|\prod_{s\in Y_1}\left(1-\frac{m}{-\epsilon_1L_{Y_1}(s)+\epsilon_2(A_{Y_1}(s)+1)}\right)\left(1-\frac{m}{\epsilon_1(L_{Y_1}(t)+1)-\epsilon_2A_{Y_1}(s)}\right)\bigg|.
\end{equation}
An analogous reasoning will also hold for the diagonal contributions of the other diagrams $Y_2,\dots,Y_N$. 

We begin by estimating the denominators in the previous product. Let us fix a box $s\in Y_1$, and let us consider the term
\begin{equation}
\frac{1}{|-\epsilon_1L_{Y_1}(s)+\epsilon_2(A_{Y_1}(s)+1)|}.
\end{equation}
By recalling the definition of {\it hook length}, 
$h_{Y_1}(s)=L_{Y_1}(s)+A_{Y_1}(s)+1$, we can without loss of generality suppose $$A_{Y_1}(s)\ge\frac{h_{Y_1}(s)-1}{2}.$$ Then, if we collect a factor of $\epsilon_1$, we have 
\begin{equation}
\begin{aligned}
|-\epsilon_1L_{Y_1}(s)+\epsilon_2(A_{Y_1}(s)+1)|&=|\epsilon_1|\cdot|L_{Y_1}(s)-\frac{\epsilon_2}{\epsilon_1}(A_{Y_1}(s)+1)|\ge |\epsilon_1|\cdot\bigg|\mathrm{Im}\biggl(\frac{\epsilon_2}{\epsilon_1}\biggr)\bigg|\cdot(A_{Y_1}(s)+1)\\
&\ge |\epsilon_1|\cdot\delta\cdot\frac{h_{Y_1}(s)+1}{2}\ge |\epsilon_1|\cdot\delta\cdot\frac{h_{Y_1}(s)}{4}.
\end{aligned}
\end{equation}
Analogously, for the term
\begin{equation}
\frac{1}{|\epsilon_1(L_{Y_1}(s)+1)-\epsilon_2A_{Y_1}(s)|},
\end{equation}
we have 
\begin{equation}
\begin{aligned}
|\epsilon_1(L_{Y_1}(s)+1)-\epsilon_2A_{Y_1}(s)|&=|\epsilon_1|\cdot|L_{Y_1}(s)+1-\frac{\epsilon_2}{\epsilon_1}A_{Y_1}(s)|\ge |\epsilon_1|\cdot\bigg|\mathrm{Im}\biggl(\frac{\epsilon_2}{\epsilon_1}\biggr)\bigg|\cdot A_{Y_1}(s)\\
&\ge |\epsilon_1|\cdot \delta\cdot\frac{h_{Y_1}(s)-1}{2}.
\end{aligned}
\end{equation}
Notice that, if $h_{Y_1}(s)=1$, then both $L_{Y_1}(s)=A_{Y_1}(s)=0$, and the previous term is simply $|\epsilon_1|=|\epsilon_1|h_{Y_1}(s)\ge |\epsilon_1|\cdot\delta\cdot\frac{h_{Y_1}(s)}{4}$, and, if $h_{Y_1}(s)\ge 2$, then $(h_{Y_1}(s)-1)/2\ge h_{Y_1}(s)/4$. Therefore, also this term is always bounded by $|\epsilon_1|\cdot \delta\cdot\frac{h_{Y_1}(s)}{4}$.

If we instead considered a box $s$ for which $$A_{Y_1}(s)<\frac{h_{Y_1}(s)-1}{2},$$ we would have $$L_{Y_1}(s)\ge\frac{h_{Y_1}(s)-1}{2}.$$ In this case, we collect factors of $\epsilon_2$ from both terms to obtain
\begin{equation}
\begin{aligned}
|-\epsilon_1L_{Y_1}(s)+\epsilon_2(A_{Y_1}(s)+1)|&\ge |\epsilon_2|\cdot\delta\cdot\frac{h_{Y_1}(s)}{4},\\
|\epsilon_1(L_{Y_1}(s)+1)-\epsilon_2A_{Y_1}(s)|&\ge |\epsilon_2|\cdot\delta\cdot\frac{h_{Y_1}(s)}{4}.
\end{aligned}
\end{equation}
Now, fix 
\begin{equation}
|\epsilon|=\min\{|\epsilon_1|,|\epsilon_2|\}.
\end{equation}
Then,
\begin{equation}\label{diagonalperentrambi}
\begin{aligned}
\frac{1}{|-\epsilon_1L_{Y_1}(s)+\epsilon_2(A_{Y_1}(s)+1)|}&\le\frac{4}{|\epsilon|\cdot\delta\cdot h_{Y_1}(s)},\\
\frac{1}{|\epsilon_1(L_{Y_1}(s)+1)-\epsilon_2A_{Y_1}(s)|}&\le \frac{4}{|\epsilon|\cdot\delta\cdot h_{Y_1}(s)}.
\end{aligned}
\end{equation}
Therefore, we have that
\begin{equation}\label{diagonal_bound}
\begin{aligned}
&\bigg|\prod_{s\in Y_1}\biggl(1-\frac{m}{-\epsilon_1 L_{Y_1}(s)+\epsilon_2(A_{Y_1}(s)+1)}\biggr)\biggl(1-\frac{m}{\epsilon_1(L_{Y_1}(t)+1)-\epsilon_2 A_{Y_1}(t)}\biggr)\bigg|=\\
&\bigg|\prod_{s\in Y_1}\biggl(1+\frac{m^2-m(\epsilon_1+\epsilon_2)}{[-\epsilon_1 L_{Y_1}(s)+\epsilon_2(A_{Y_1}(s)+1)][\epsilon_1(L_{Y_1}(t)+1)-\epsilon_2 A_{Y_1}(t)]}\biggr)\bigg|\le\\
&\prod_{s\in Y_1}\biggl(1+\frac{16|m^2-m(\epsilon_1+\epsilon_2)|}{\delta^2|\epsilon|^2h_{Y_1}(s)^2}\biggr)=\prod_{s\in Y_1}\left(1+\frac{\frac{16|m^2-m(\epsilon_1+\epsilon_2)|}{\delta^2|\epsilon|^2}}{h_{Y_1}(s)^2}\right).
\end{aligned}
\end{equation}

We now consider the remaining terms, which come from the nondiagonal contributions. We analyze the products over the boxes of $Y_1$, coming from the pairs $(1,2)$ and $(2,1)$. The products over the boxes of the diagram $Y_2$ in the same pairs will be analogous, and the same will hold for any other couple of pairs $(i,j),(j,i)$.

The terms we consider are then
\begin{equation}
\bigg|\prod_{s\in Y_1}\biggl(1-\frac{m}{a_1-a_2-\epsilon_1 L_{Y_2}(s)+\epsilon_2(A_{Y_1}(s)+1)}\biggr)\biggl(1-\frac{m}{-a_1+a_2+\epsilon_1(L_{Y_2}(s)+1)-\epsilon_2 A_{Y_1}(s)}\biggr)\bigg|.
\end{equation}
With our assumptions on the vev parameters $a_i$ we have that the denominators are never zero. Let us define
\begin{equation}
D_{ij}(\vec{a},\epsilon_1,\epsilon_2)=\min_{p\in\Lambda(\epsilon_1,\epsilon_2)}\{|a_i-a_j-p|\}.
\end{equation}
We have that
\begin{equation}\label{nondiagonal_bound}
\begin{aligned}
&\bigg|\prod_{s\in Y_1}\biggl(1-\frac{m}{a_1-a_2-\epsilon_1 L_{Y_2}(s)+\epsilon_2(A_{Y_1}(s)+1)}\biggr)\biggl(1-\frac{m}{-a_1+a_2+\epsilon_1(L_{Y_2}(s)+1)-\epsilon_2 A_{Y_1}(s)}\biggr)\bigg|\le\\
&\prod_{s\in Y_1}\biggl(1+\frac{|m|}{|a_1-a_2-\epsilon_1 L_{Y_2}(s)+\epsilon_2(A_{Y_1}(s)+1)|}\biggr)\biggl(1+\frac{|m|}{|-a_1+a_2+\epsilon_1(L_{Y_2}(s)+1)-\epsilon_2 A_{Y_1}(s)|}\biggr)\le\\
&\left(1+\frac{|m|}{D_{12}(\vec{a},\epsilon_1,\epsilon_2)}\right)^{2|Y_1|}.
\end{aligned}
\end{equation}

Putting the bounds \eqref{diagonal_bound} and \eqref{nondiagonal_bound} together, we can conclude
\begin{equation}
\begin{aligned}
&\bigg|\sum_{k\ge 0}q^k\sum_{|\vec{Y}|=k}\prod_{i=1}^N\prod_{s\in Y_i}\left(1-\frac{m}{-\epsilon_1L_{Y_i}(s)+\epsilon_2(A_{Y_i}(s)+1)}\right)\left(1-\frac{m}{\epsilon_1(L_{Y_i}(t)+1)-\epsilon_2A_{Y_i}(s)}\right)\times\\
&\ \  \prod_{1\le i\ne j\le N}\prod_{s\in Y_i}\left(1-\frac{m}{a_i-a_j-\epsilon_1L_{Y_j}(s)+\epsilon_2(A_{Y_i}(s)+1)}\right)\prod_{t\in Y_j}\left(1-\frac{m}{-a_j+a_i+\epsilon_1(L_{Y_i}(t)+1)-\epsilon_2A_{Y_j}(t)}\right)\bigg|\le\\
&\le\sum_{k\ge 0}|q|^k\sum_{|\vec{Y}|=k}\prod_{i=1}^N\prod_{s\in Y_i}\left(1+\frac{\frac{16|m^2-m(\epsilon_1+\epsilon_2)|}{\delta^2|\epsilon|^2}}{h_{Y_i}(s)^2}\right)\prod_{i=1}^N\prod_{j\ne i}\left(1+\frac{|m|}{D_{ij}(\vec{a},\epsilon_1,\epsilon_2)}\right)^{2|Y_i|}.
\end{aligned}
\end{equation}
Now, let us define
\begin{equation}
D(\vec{a},\epsilon_1,\epsilon_2)=\min_{1\le i\ne j\le N}\{D_{ij}(\vec{a},\epsilon_1,\epsilon_2)\}.
\end{equation}
The following result (which is Theorem 1.2 in \cite{nekrasovokounkov}) will be useful:
\begin{prop}
For any complex number $z$ the following holds:
\begin{equation}\label{teo1.2}
\sum_{Y\in\mathbb{Y}}x^{|Y|}\prod_{s\in Y}\biggl(1-\frac{z}{(h_Y(s))^2}\biggr)=\prod_{j= 1}^{\infty}(1-x^j)^{z-1}=\phi(x)^{z-1},
\end{equation}
where $\phi$ is the Euler function.
\end{prop}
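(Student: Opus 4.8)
The identity \eqref{teo1.2} is the Nekrasov--Okounkov hook-length formula, quoted here from \cite{nekrasovokounkov}; I would reprove it by passing to the free-fermion (infinite-wedge) encoding of partitions and reducing the product over boxes to a product over particle--hole pairs. To a partition $Y=(Y_1\ge Y_2\ge\cdots)$ associate its Maya diagram, i.e. the occupied set $S(Y)=\{Y_i-i:\ i\ge 1\}\subset\mathbb{Z}$, which differs from the Dirac sea $\mathbb{Z}_{<0}$ in only finitely many positions. A classical bijection identifies the boxes of $Y$ with the inversions of $S(Y)$, namely the pairs $(p,q)$ with $p\in S(Y)$, $q\notin S(Y)$ and $p>q$; under it the hook length $h_Y(s)$ of the corresponding box equals the gap $p-q$, and $|Y|$ equals the number of such pairs. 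Hence the left-hand side of \eqref{teo1.2} becomes a sum over charge-zero fermion configurations,
\[
\sum_{S}\ \prod_{\substack{p\in S,\ q\notin S\\ p>q}} x\left(1-\frac{z}{(p-q)^2}\right),
\]
whose weight is a product over \emph{all} particle--hole inversions.

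The next step is to evaluate this configuration sum as a fermionic vacuum correlator. Writing $1-z/d^2=(1-w/d)(1+w/d)$ with $w^2=z$, the weight depends on a configuration only through the gaps $d=p-q$, and the sum over $S$ is of the type that the infinite-wedge calculus of \cite{nekrasovokounkov} evaluates: the determinantal (Slater) structure of the fermionic states turns the configuration sum into an infinite determinant whose entries are built from the gap-dependent factors above. Bosonizing, the unweighted piece (the specialization $z=0$) reproduces the bare partition count $\sum_Y x^{|Y|}=\prod_{j\ge1}(1-x^j)^{-1}=\phi(x)^{-1}$, while the hook reweighting is expected to contribute a single extra factor $\phi(x)^{z}$, giving the claimed $\phi(x)^{z-1}$.

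\textbf{Main obstacle.} The crux is precisely this last collapse: the pair weight $\prod(1-z/(p-q)^2)$ is a genuine two-body interaction, not a free Gaussian, so showing that the resulting determinant factorizes into $\phi(x)^{z-1}$ rather than a more complicated series is the real content of the theorem. Two checks control the computation and make the exponent $z-1$ believable: at $z=0$ both sides equal $\phi(x)^{-1}$, and at $z=1$ every nonempty $Y$ has a corner box with $h_Y(s)=1$, whose factor $1-1/1^2$ vanishes, so the left-hand side collapses to the empty-partition term $1=\phi(x)^{0}$, matching $\phi(x)^{z-1}$. A complementary route, useful for pinning down the factorization, rests on \emph{polynomiality in} $z$: for each $n$ the coefficient of $x^n$ on the left is the polynomial $\sum_{|Y|=n}\prod_{s}(1-z/h_Y(s)^2)$ in $z$, and on the right the coefficient of $x^n$ in $\phi(x)^{z-1}=\exp\!\big((z-1)\sum_{j}\log(1-x^j)\big)$ is likewise polynomial in $z$; since two polynomials agreeing at infinitely many points coincide, it suffices to establish \eqref{teo1.2} for infinitely many values of $z$. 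Taking $z=t^2$ with $t\in\mathbb{Z}_{\ge1}$, the surviving sum can be reorganized through the $t$-core/$t$-quotient calculus into a Macdonald-type $\eta$-function identity, which is known; this both cross-checks the fermionic evaluation and, via polynomiality, yields the identity for all complex $z$.
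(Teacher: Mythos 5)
The paper does not actually prove this proposition: it is imported verbatim as Theorem 1.2 of \cite{nekrasovokounkov}, so the paper's ``proof'' is a citation. Your attempt is therefore more ambitious than what the paper does, but as written it is a plan rather than a proof, and you concede the central gap yourself. In the free-fermion route, the translation of boxes into particle--hole inversions of the Maya diagram, with $h_Y(s)=p-q$ and $|Y|$ the number of inversions, is correct; but the weight $\prod_{p>q}\bigl(1-z/(p-q)^2\bigr)$ is a genuine two-body interaction, not of Slater-determinant/Wick type for general $z$, so there is no free determinantal structure to bosonize, and the asserted collapse of the configuration sum to the single factor $\phi(x)^{z}$ is precisely the content of the theorem. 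Labelling it ``the main obstacle'' does not discharge it.

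Your second route is sound in outline and is essentially the known elementary proof (Han's refinement of Nekrasov--Okounkov): for each $n$ the coefficient of $x^n$ on both sides is a polynomial in $z$, so it suffices to verify the identity at $z=t^2$ for all integers $t\ge 1$; there the factor $1-t^2/h^2$ annihilates every partition containing a hook of length $t$, so only $t$-cores survive, and the weighted $t$-core generating function is matched with $\phi(x)^{t^2-1}$ through Macdonald's identity for the affine root system of type $A_{t-1}$. But the two nontrivial ingredients --- the evaluation of $\prod_{s\in Y}\bigl(1-t^2/h_Y(s)^2\bigr)$ and of $|Y|$ for a $t$-core in terms of its lattice-point parametrization, and the identification of the resulting sum with Macdonald's expansion --- are asserted as ``known'' rather than carried out, and your checks at $z=0$ and $z=1$ only confirm two values of $z$. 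So either complete the $t$-core/Macdonald computation, or do as the paper does and simply cite \cite{nekrasovokounkov}; as it stands the argument has a genuine hole at its key step.
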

We remind that $\phi(x)$ is convergent for $|x|<1$.
Then,
\begin{equation}
\begin{aligned}
&\sum_{k\ge 0}|q|^k\sum_{|\vec{Y}|=k}\prod_{i=1}^N\prod_{s\in Y_i}\left(1+\frac{\frac{16|m^2-m(\epsilon_1+\epsilon_2)|}{\delta^2|\epsilon|^2}}{h_{Y_i}(s)^2}\right)\prod_{i=1}^N\prod_{j\ne i}\left(1+\frac{|m|}{D_{ij}(\vec{a},\epsilon_1,\epsilon_2)}\right)^{2|Y_i|}\le\\
&\sum_{k\ge 0}\left[|q|\left(1+\frac{|m|}{D(\vec{a},\epsilon_1,\epsilon_2)}\right)^{2(N-1)}\right]^k\sum_{|\vec{Y}|=k}\prod_{i=1}^N\prod_{s\in Y_i}\left(1+\frac{\frac{16|m^2-m(\epsilon_1+\epsilon_2)|}{\delta^2|\epsilon|^2}}{h_{Y_i}(s)^2}\right)=\\
&\sum_{Y_1,\dots,Y_N\in\mathbb{Y}}\left[|q|\left(1+\frac{|m|}{D(\vec{a},\epsilon_1,\epsilon_2)}\right)^{2(N-1)}\right]^{\sum_{i=1}^N|Y_i|}\prod_{i=1}^N\prod_{s\in Y_i}\left(1+\frac{\frac{16|m^2-m(\epsilon_1+\epsilon_2)|}{\delta^2|\epsilon|^2}}{h_{Y_i}(s)^2}\right)=\\
&\left\{\sum_{Y\in\mathbb{Y}}\left[|q|\left(1+\frac{|m|}{D(\vec{a},\epsilon_1,\epsilon_2)}\right)^{2(N-1)}\right]^{|Y|}\prod_{s\in Y}\left(1+\frac{\frac{16|m^2-m(\epsilon_1+\epsilon_2)|}{\delta^2|\epsilon|^2}}{h_{Y}(s)^2}\right)\right\}^N=\\
&\phi\left(|q|\left(1+\frac{|m|}{D(\vec{a},\epsilon_1,\epsilon_2)}\right)^{2(N-1)}\right)^{N\left(-\frac{16|m^2-m(\epsilon_1+\epsilon_2)|}{\delta^2|\epsilon|^2}-1\right)},
\end{aligned}
\end{equation}
where in the last line we used \eqref{teo1.2} with $x=|q|\left(1+\frac{|m|}{D(\vec{a},\epsilon_1,\epsilon_2)}\right)^{2(N-1)}$ and $z=-\frac{16|m^2-m(\epsilon_1+\epsilon_2)|}{\delta^2|\epsilon|^2}$.

Hence, we can conclude that the instanton partition function is convergent in the region defined by
\begin{equation}
|q|< \left(1+\frac{|m|}{D(\vec{a},\epsilon_1,\epsilon_2)}\right)^{-2(N-1)}.
\end{equation}

\subsubsection{Second Subcase}

Suppose otherwise that $$\mathrm{Re}\biggl(\frac{\epsilon_2}{\epsilon_1}\biggr)<0,$$ and let $$\beta=-\mathrm{Re}\biggl(\frac{\epsilon_2}{\epsilon_1}\biggr)>0.$$
Again, we start by analyzing the products over the boxes of one of the diagrams, say $Y_1$, coming from the diagonal contributions, that is, we look for a bound on 
\begin{equation}
\bigg|\prod_{s\in Y_1}\left(1-\frac{m}{-\epsilon_1L_{Y_1}(s)+\epsilon_2(A_{Y_1}(s)+1)}\right)\left(1-\frac{m}{\epsilon_1(L_{Y_1}(t)+1)-\epsilon_2A_{Y_1}(s)}\right)\bigg|.
\end{equation}
We begin by estimating the denominators. For every box $s\in Y_1$, we have 
\begin{equation}
\begin{aligned}
&|-\epsilon_1L_{Y_1}(s)+\epsilon_2(A_{Y_1}(s)+1)|=|\epsilon_1|\cdot \bigg|L_{Y_1}(s)-\frac{\epsilon_2}{\epsilon_1}(A_{Y_1}(s)+1)\bigg|\ge\\
&|\epsilon_1|\cdot\bigg|\mathrm{Re}\left[L_{Y_1}(s)-\frac{\epsilon_2}{\epsilon_1}(A_{Y_1}(s)+1)\right]\bigg|=\\
&|\epsilon_1|\cdot\bigg|L_{Y_1}(s)-(A_{Y_1}(s)+1)\mathrm{Re}\left(\frac{\epsilon_2}{\epsilon_1}\right)\bigg|=\\
&|\epsilon_1|\cdot\bigg|L_{Y_1}(s)+\beta(A_{Y_1}(s)+1)\bigg|.
\end{aligned}
\end{equation}
Fix $\gamma=\min\{\beta,1\}$. Then,
\begin{equation}\label{fattoreuno}
\frac{1}{|-\epsilon_1L_{Y_1}(s)+\epsilon_2(A_{Y_1}(s)+1)|}\le\frac{1}{|\epsilon_1|}\frac{1}{\gamma h_{Y_1}(s)}.
\end{equation}
Analogously, for the other term in the product, we have 
\begin{equation}\label{fattoredue}
\frac{1}{|\epsilon_1(L_{Y_1}(s)+1)-\epsilon_2A_{Y_1}(s)|}\le\frac{1}{|\epsilon_1|}\frac{1}{\gamma h_{Y_1}(s)}.
\end{equation}
Then,
\begin{equation}
\begin{aligned}
&\bigg|\prod_{s\in Y_1}\left(1-\frac{m}{-\epsilon_1L_{Y_1}(s)+\epsilon_2(A_{Y_1}(s)+1)}\right)\left(1-\frac{m}{\epsilon_1(L_{Y_1}(t)+1)-\epsilon_2A_{Y_1}(s)}\right)\bigg|=\\
&\bigg|\prod_{s\in Y_1}\biggl(1+\frac{m^2-m(\epsilon_1+\epsilon_2)}{[-\epsilon_1 L_{Y_1}(s)+\epsilon_2(A_{Y_1}(s)+1)][\epsilon_1(L_{Y_1}(t)+1)-\epsilon_2 A_{Y_1}(t)]}\biggr)\bigg|\le\\
&\prod_{s\in Y_1}\left(1+\frac{|m^2-m(\epsilon_1+\epsilon_2)|}{|\epsilon_1|^2\gamma^2 h_{Y_1}(s)^2}\right)=\prod_{s\in Y_1}\left(1+\frac{\frac{|m^2-m(\epsilon_1+\epsilon_2)|}{|\epsilon_1|^2\gamma^2}}{h_{Y_1}(s)^2}\right).
\end{aligned}
\end{equation}
An analogous bound holds for all diagrams $Y_2,\dots,Y_N$. For the nondiagonal contributions we will use the same bound \eqref{nondiagonal_bound} of the previous subsection.
Therefore, in this case we have
\begin{equation}
\begin{aligned}
&\bigg|\sum_{k\ge 0}q^k\sum_{|\vec{Y}|=k}\prod_{i=1}^N\prod_{s\in Y_i}\left(1-\frac{m}{-\epsilon_1L_{Y_i}(s)+\epsilon_2(A_{Y_i}(s)+1)}\right)\left(1-\frac{m}{\epsilon_1(L_{Y_i}(t)+1)-\epsilon_2A_{Y_i}(s)}\right)\\
&\ \ \ \ \ \ \prod_{1\le i\ne j\le N}\prod_{s\in Y_i}\left(1-\frac{m}{a_i-a_j-\epsilon_1L_{Y_j}(s)+\epsilon_2(A_{Y_i}(s)+1)}\right)\prod_{t\in Y_j}\left(1-\frac{m}{-a_j+a_i+\epsilon_1(L_{Y_i}(t)+1)-\epsilon_2A_{Y_j}(t)}\right)\bigg|\le\\
&\sum_{k\ge 0}|q|^k\sum_{|\vec{Y}|=k}\prod_{i=1}^N\prod_{s\in Y_i}\left(1+\frac{\frac{|m^2-m(\epsilon_1+\epsilon_2)|}{|\epsilon_1|^2\gamma^2}}{h_{Y_1}(s)^2}\right)\prod_{i=1}^N\prod_{j\ne i}\left(1+\frac{|m|}{D_{ij}(\vec{a},\epsilon_1,\epsilon_2)}\right)^{2|Y_i|}\le\\
&\sum_{k\ge 0}\left[|q|\left(1+\frac{|m|}{D(\vec{a},\epsilon_1,\epsilon_2)}\right)^{2(N-1)}\right]^k\sum_{|\vec{Y}|=k}\prod_{i=1}^N\prod_{s\in Y_i}\left(1+\frac{\frac{|m^2-m(\epsilon_1+\epsilon_2)|}{|\epsilon_1|^2\gamma^2}}{h_{Y_1}(s)^2}\right)=\\
&\sum_{Y_1,\dots,Y_N\in\mathbb{Y}}\left[|q|\left(1+\frac{|m|}{D(\vec{a},\epsilon_1,\epsilon_2)}\right)^{2(N-1)}\right]^{\sum_{i=1}^N|Y_i|}\prod_{i=1}^N\prod_{s\in Y_i}\left(1+\frac{\frac{|m^2-m(\epsilon_1+\epsilon_2)|}{|\epsilon_1|^2\gamma^2}}{h_{Y_1}(s)^2}\right)=\\
&\left\{\sum_{Y\in\mathbb{Y}}\left[|q|\left(1+\frac{|m|}{D(\vec{a},\epsilon_1,\epsilon_2)}\right)^{2(N-1)}\right]^{|Y|}\prod_{s\in Y}\left(1+\frac{\frac{|m^2-m(\epsilon_1+\epsilon_2)|}{|\epsilon_1|^2\gamma^2}}{h_{Y}(s)^2}\right)\right\}^N=\\
&\phi\left(|q|\left(1+\frac{|m|}{D(\vec{a},\epsilon_1,\epsilon_2)}\right)^{2(N-1)}\right)^{N\left(-\frac{|m^2-m(\epsilon_1+\epsilon_2)|}{|\epsilon_1|^2\gamma^2}-1\right)},
\end{aligned}
\end{equation}
where in the last line we used \eqref{teo1.2} with $x=|q|\left(1+\frac{|m|}{D(\vec{a},\epsilon_1,\epsilon_2)}\right)^{2(N-1)}$ and $z=-\frac{|m^2-m(\epsilon_1+\epsilon_2)|}{|\epsilon_1|^2\gamma^2}$. Hence, as in the previous case, the instanton partition function is convergent in the region defined by
\begin{equation}\label{noncelha}
|q|< \left(1+\frac{|m|}{D(\vec{a},\epsilon_1,\epsilon_2)}\right)^{-2(N-1)}.
\end{equation}

\begin{remark} 
Let us note that in the case $\epsilon_2/\epsilon_1\in\mathbb{R}_{<0}$ the 2-dimensional lattice $\Lambda(\epsilon_1,\epsilon_2)$ degenerates into a 1-dimensional lattice. Therefore, if we move sufficiently away from the line spanned by $\epsilon_1$ in the complex plane, that is, if, for every $i\ne j$, $a_i-a_j$ has a big enough distance from the set $\{z\in\mathbb{C}\ |\ z=r\epsilon_1, r\in\mathbb{R}\}$, the constant $D(\vec{a},\epsilon_1,\epsilon_2)$ can become very large and the radius of convergence tends to 1.
\end{remark}

\subsection{Corollary \ref{corolN=2}: from $\mathcal{N}=2^*$ to $\mathcal{N}=2$ SYM}

The results on the convergence of the $\mathcal{N}=2$ instanton partition function can be deduced from the ones on the $\mathcal{N}=2^*$ instanton partition function. Indeed, if one considers the double scaling limit in which the mass of the adjoint multiplet $m$ becomes large $m\to\infty$ and the instanton parameter $q$ becomes small, $q\to 0$, in such a way that $\Lambda:=qm^{2N}$ remains finite, the instanton partition function of the $\mathcal{N}=2^*$ $U(N)$ theory \eqref{N=2*UN} reduces to \eqref{N=2UN} in the expansion parameter $\Lambda$ instead of $q$. 

From \eqref{noncelha}, we find
\begin{equation}
|m|^{2N}|q|\leq \left(1+\frac{|m|}{D(\vec{a},\epsilon_1,\epsilon_2)}\right)^{2}
\left(\frac{1}{|m|}+\frac{1}{D(\vec{a},\epsilon_1,\epsilon_2)}\right)^{-2N}
\end{equation}
which in the above limit reduces to 
$|\Lambda|<\infty$.

\subsection{Corollary \ref{corolN=4}: from $\mathcal{N}=2^*$ to $\mathcal{N}=4$}

The instanton partition function of the $\mathcal{N}=4$ $U(N)$ gauge theory can be written as
\begin{equation}
\begin{aligned}
Z_{\mathrm{inst}}^{\mathcal{N}=4, U(N)}=\sum_{k\ge 0}q^k\sum_{|Y|=k}1=\sum_{k\ge 0}q^kp_N(k)=\prod_{j=1}^{\infty}\frac{1}{(1-q^j)^N}=\phi(q)^{-N},
\end{aligned}
\end{equation}
which is convergent in the region $|q|< 1$.

This result can also be obtained from the analysis of the $\mathcal{N}=2^*$ $U(N)$ theory setting to zero the mass of the adjoint multiplet, as it is obvious from \eqref{noncelha}.



\section{$U(N)$ Instanton Partition Functions with Fundamental Matter}

Also in this case we will work under the assumptions \eqref{conditions}. Moreover, we assume $\epsilon_1+\epsilon_2=0$ and set
\begin{equation}\label{mu}
\epsilon:=\epsilon_1=-\epsilon_2, \quad \alpha_i:=a_i/\epsilon, \quad \mu_r:=m_r/\epsilon. 
\end{equation}
In this notation, the instanton partition function reads
\begin{equation}\label{functionfundamental}
\begin{aligned}
Z_{\mathrm{inst}}^{U(N)\ , N_f}=\sum_{k\ge 0}\left(q\epsilon^{N_f-2N}\right)^k\sum_{|\vec{Y}|=k}\prod_{i,j=1}^N&\prod_{(m,n)\in Y_i}\frac{1}{\alpha_i-\alpha_j-h_{Y_i}((m,n))+(Y'_i)_m-(Y'_j)_m}\\
&\prod_{(m,n)\in Y_j}\frac{1}{\alpha_i-\alpha_j+h_{Y_j}((m,n))-(Y'_j)_m+(Y'_i)_m}\\
\prod_{i=1}^N&\prod_{(m,n)\in Y_i}\prod_{r=1}^{N_f}\left[\alpha_i+\mu_r+m-n\right].
\end{aligned}
\end{equation}
We observe that in this case \eqref{conditions} reduces to
$\alpha_i-\alpha_j\notin\mathbb{Z}$ for every $1\le i<j\le N$. 

The main result we find is
\begin{teo}\label{teoremafondamentale}
The instanton partition function of the $U(N)$ gauge theory with $N_f=2N$ fundamental multiplets has at least a finite radius of convergence.

The instanton partition function of the $U(N)$ gauge theory with $N_f<2N$ fundamental multiplets is absolutely convergent over the whole complex plane.

\end{teo}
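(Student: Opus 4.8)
The plan is to prove the conformal case $N_f=2N$ by a direct estimate on the coefficients and then to obtain every asymptotically free case $N_f<2N$ from it by holomorphic decoupling, exactly as Corollary \ref{corolN=2} was deduced from Theorem \ref{teoN=2}. For $N_f=2N$ the prefactor $\epsilon^{N_f-2N}$ is trivial, so I would bound $Z^{U(N),\,2N}_{\mathrm{inst}}$ by the series of absolute values of its coefficients and reduce everything to an estimate, uniform in the coloured partition $\vec{Y}$, of the form $G(\vec{\alpha})^{|\vec{Y}|}\prod_{i=1}^N\prod_{s\in Y_i}\bigl(1+C/h_{Y_i}(s)^2\bigr)$. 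Once this is in place the sum over $\vec{Y}$ factorises into $N$ identical one-diagram sums, each evaluated by the Nekrasov--Okounkov identity \eqref{teo1.2}, and absolute convergence follows for $|q|<G(\vec{\alpha})^{-1}$; the target is to show that one may take $G(\vec{\alpha})=\prod_{i\ne j}\bigl[\tfrac{2^4}{\min\{1,|\alpha_i-\alpha_j|\}}\bigl(1+|\alpha_i-\alpha_j|/C_{ij}(\vec{\alpha})\bigr)\bigr]$, with $C_{ij}(\vec{\alpha})=\min_{n\in\mathbb{Z}}|\alpha_i-\alpha_j-n|>0$ by genericity.

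For the matter numerator I would use that, writing $c(s)=m-n$ for the content of $s=(m,n)$, each inner product $\prod_{s\in Y_i}(\alpha_i+\mu_r+c(s))$ is the value at $x=\alpha_i+\mu_r$ of the monic content polynomial $P_{Y_i}(x)=\prod_{s\in Y_i}(x+c(s))$, to which the hook-content formula applies: $P_{Y}(x)=\bigl(\prod_{s\in Y}h_Y(s)\bigr)\prod_{s\in Y}\frac{x+c(s)}{h_Y(s)}$. The decisive feature of $N_f=2N$ is that the number $N_f|\vec{Y}|=2N|\vec{Y}|$ of matter factors equals the number $\sum_{i,j}(|Y_i|+|Y_j|)=2N|\vec{Y}|$ of vector denominators, so that the factorial growth of the numerator is matched, globally, by powers of hook products; the surviving mass dependence then lives only on the $O(\sqrt{|Y_i|})$ boxes of small content, where $|\alpha_i+\mu_r+c(s)|$ cannot be replaced by $|c(s)|$. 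Concretely I would establish a complex-shift dimension bound $\prod_{s\in Y}\frac{|x+c(s)|}{h_Y(s)}\le C(|x|)^{|Y|}$, the analogue of $s_\lambda(1^n)\le n^{|\lambda|}$, extract from it a $\mu$-independent exponential rate together with a subexponential remainder of the shape $\prod_{s}\bigl(1+C/h_{Y_i}(s)^2\bigr)$, and feed the latter into \eqref{teo1.2}. I expect the identity \eqref{polynomialidentity} to be precisely what upgrades this cancellation from asymptotic to exact.

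For the vector denominator the diagonal blocks $i=j$ collapse, in the self-dual background, to $\prod_{s\in Y_i}(-h_{Y_i}(s)^2)$, absorbing two of the $2N$ matter factors. For the off-diagonal blocks each factor has the form $|\alpha_i-\alpha_j-p|$ with $p\in\mathbb{Z}$ built from a hook length and the column heights $(Y'_i)_m,(Y'_j)_m$, whence $|\alpha_i-\alpha_j-p|\ge C_{ij}(\vec{\alpha})$ by genericity. A case distinction on whether $|p|$ is small or large relative to $|\alpha_i-\alpha_j|$---using $|\alpha_i-\alpha_j-p|\ge C_{ij}$ in the former regime and $|\alpha_i-\alpha_j-p|\ge|p|-|\alpha_i-\alpha_j|$ in the latter---separates each factor into a mass-independent constant, contributing the $2^4/\min\{1,|\alpha_i-\alpha_j|\}$ and $1+|\alpha_i-\alpha_j|/C_{ij}$ of the statement, times a residual hook-type decay which, again via \eqref{polynomialidentity}, soaks up the $2N-2$ surplus powers of $\prod_{s}h_{Y_i}(s)$ left over by the matter after the diagonal has taken two. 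Multiplying over ordered pairs assembles the manifestly mass-independent $G(\vec{\alpha})$, and summing the collected bound by \eqref{teo1.2} delivers the asserted disk.

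Finally, for $N_f<2N$ I would not estimate afresh but decouple from $N_f=2N$: sending the extra masses $\mu_{N_f+1},\dots,\mu_{2N}\to\infty$ while rescaling the coupling so that $q\prod_{r>N_f}\mu_r$ stays finite degenerates \eqref{functionfundamental} to the $N_f$-flavour partition function with renormalisation-group scale set by the surviving coupling. Since the radius $G(\vec{\alpha})^{-1}$ is independent of the masses, the admissible disk for the rescaled coupling dilates by $\prod_{r>N_f}|\mu_r|\to\infty$, so the limit series converges on the whole plane, in complete parallel with the $\mathcal{N}=2^{*}\to\mathcal{N}=2$ passage of Corollary \ref{corolN=2}. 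The hard part will be the matter estimate of the second step: the matter/hook cancellation is global rather than box by box---the naive inequality $|\alpha_i+\mu_r+c(s)|\le\mathrm{const}\cdot h_{Y_i}(s)$ already fails at corner boxes of small hook and large content---so the uniform dimension bound, controlling how contents accumulate along the diagonals of $Y_i$, is the genuine technical core of the argument.
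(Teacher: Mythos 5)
Your overall architecture matches the paper's -- split the coefficient into mass-over-hook products and mass-over-(off-diagonal vector denominator) products, extract a mass-independent exponential rate $\prod_{i\ne j}g_{ij}(\vec\alpha)$ per ordered pair, and then obtain $N_f<2N$ by holomorphic decoupling (your last step is essentially verbatim the paper's, and is fine \emph{conditional} on the mass-independence of the radius). But the genuine technical core, which you yourself flag as ``the hard part'', is left as an expectation rather than proved, and this is exactly where the paper's proof lives. The uniform bound on $\prod_{s\in Y}\bigl|z+c(s)\bigr|/h_Y(s)$ is \emph{not} obtained in the paper by a ``complex-shift dimension bound'' $C(|z|)^{|Y|}$ (whose rate would depend on the masses and would wreck both the mass-independent radius and the decoupling argument); it is obtained from the character expansion \eqref{polynomialidentity} combined with two further ingredients you never invoke: the character estimate $|\chi^Y(\sigma)|\le (2\,\mathrm{sq}(Y))^{c(\sigma)}\le(2\sqrt{k})^{c(\sigma)}$ and the cycle-counting expectation $\frac{1}{k!}\sum_{\sigma\in S_k}m^{c(\sigma)}=\binom{k+m-1}{k}$, which together with Stirling give a bound $f(z,k)$ depending only on $k$ (not on the diagram) with $f(z,k)^{1/k}\to1$. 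Saying you ``expect'' \eqref{polynomialidentity} to upgrade the cancellation is naming the right tool but not supplying the argument; without the character and expectation bounds the estimate does not follow.

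Two further points. First, your plan to package the subexponential remainder as $\prod_s\bigl(1+C/h_{Y_i}(s)^2\bigr)$ and feed it into the Nekrasov--Okounkov identity \eqref{teo1.2} is both unjustified and unnecessary: the mass-dependent growth is concentrated on the $O(\sqrt{k})$ boxes of small content, which need not have large hooks, so there is no evident way to dominate it by a hook-squared product with controlled constant; the paper avoids this entirely by using the diagram-independent bound $f(z,k)$, the estimate $p_N(k)\le p(k)^{N+1}$ with Ramanujan--Hardy, and the root test ($f^{1/k}\to1$, $p(k)^{1/k}\to1$), so that only the exponential factors $g_{ij}(\vec\alpha)^k$ determine the radius. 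Second, your off-diagonal estimate is only a sketch: a case distinction on $|p|$ small or large does not by itself produce the stated constants, because the integer $p=h_{Y_1}((i,j))-(Y'_1)_i+(Y'_2)_i$ can vanish, leaving no decay from that factor; the paper must split the boxes into the set $B_1$ where this happens (and show there is at most one such box per row, then use AM--GM and binomial bounds to get the $2^k/\min\{1,|\alpha_1-\alpha_2|\}^k$ factor) and the complementary set $B_2$ (yielding the $(1+|\alpha_1-\alpha_2|/C_{12})^{|Y_1|}$ and $8^{|Y_1|}$ factors via row-by-row binomial estimates). These combinatorial lemmas are what actually produce $g_{ij}(\vec\alpha)$, and they are absent from your proposal.
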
 

We will consider in \eqref{functionfundamental} the sum starting from $k\ge 1$, as it does not change the convergence properties of the series.

There will be many steps necessary to arrive at our final result, so it will be useful to divide the coefficient functions into simpler factors and analyze them separately. 

We start by considering the products over the boxes of one specific Young diagram, let us take $Y_1$, which are
\begin{equation}\label{1}
\prod_{(m,n)\in Y_1}\frac{\prod_{r=1}^{N_f}\left[\alpha_1+\mu_r+m-n\right]}{h_{Y_1}((m,n))^2}\prod_{j\ne 1}\prod_{(m,n)\in Y_1}\frac{1}{(\alpha_1-\alpha_j-h_{Y_1}((m,n))+(Y'_1)_m-(Y'_j)_m)^2}.
\end{equation}
We first analyze the $N_f=2N$ case of the theorem, in which we have the same number of factors in the numerator and denominator of \eqref{1}. In particular, we can factor \eqref{1} in two types of products:
\begin{equation}
\prod_{(m,n)\in Y_1}\frac{\alpha_1+\mu_r+m-n}{h_{Y_1}((m,n))}\ \ \text{with}\ \ r\in\{1,\dots,N_f\},
\end{equation}
and
\begin{equation}\label{2}
\prod_{(m,n)\in Y_1}\frac{\alpha_1+\mu_r+m-n}{\alpha_1-\alpha_j-h_{Y_1}((m,n))+(Y'_1)_m-(Y'_j)_m}\ \ \text{with}\ \ r\in\{1,\dots,N_f\}\ \text{and}\ j\in\{2,\dots,N\}.
\end{equation}
The key result on the first kind of product is the following
\begin{lemma}\label{lemmaU1}
For every Young diagram $Y$ with $k\ge 1$ boxes and for every fixed complex number $z$, the following inequality holds
\begin{equation}\label{chiave1}
\begin{aligned}
\prod_{(i,j)\in Y}\bigg|\frac{z+i-j}{h_Y((i,j))}\bigg|<&\sqrt{\frac{k+2\max\{1,|z|\}\sqrt{k}-1}{2\pi k(2\max\{1,|z|\}\sqrt{k}-1)}}\left(1+\frac{2\sqrt{k}\max\{1,|z|\}-1}{k}\right)^k\left(1+\frac{k}{2\sqrt{k}\max\{1,|z|\}-1}\right)^{2\max\{1,|z|\}\sqrt{k}-1}\times\\
&\times\exp(\frac{1}{12(k+2\sqrt{k}\max\{1,|z|\}-1)}-\frac{1}{12k+1}-\frac{1}{12(2\sqrt{k}\max\{1,|z|\}-1)+1}).
\end{aligned}
\end{equation}
\end{lemma}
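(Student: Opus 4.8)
The plan is to read the right-hand side as the Stirling estimate of a single (generalised) binomial coefficient and then to reduce everything to one uniform combinatorial inequality. Set $M:=\max\{1,|z|\}$ and $b:=2M\sqrt{k}-1$, so that $b>0$ for every $k\ge 1$. Writing $\binom{k+b}{k}=\Gamma(k+b+1)/\bigl(\Gamma(k+1)\,\Gamma(b+1)\bigr)$ and using Binet's form of Stirling's formula, $\Gamma(x+1)=\sqrt{2\pi x}\,(x/e)^{x}e^{\mu(x)}$ with $\tfrac{1}{12x+1}<\mu(x)<\tfrac{1}{12x}$ valid for all real $x>0$, one matches the displayed bound term by term: the factor $\sqrt{(k+b)/(2\pi k b)}$ gives the square root, the identity $(k+b)^{k+b}/(k^{k}b^{b})=(1+b/k)^{k}(1+k/b)^{b}$ gives the two power factors, and $\mu(k+b)-\mu(k)-\mu(b)<\tfrac{1}{12(k+b)}-\tfrac{1}{12k+1}-\tfrac{1}{12b+1}$ gives the exponential. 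Because the Robbins inequalities are strict, this already yields the strict sign in the lemma, so it remains to prove the clean bound
\[
\prod_{(i,j)\in Y}\Bigl|\tfrac{z+i-j}{h_{Y}((i,j))}\Bigr|\ \le\ \binom{k+b}{k}.
\]

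For the numerator I would first remove the phase of $z$: since $|z|\le M$ one has $|z+i-j|\le M+|i-j|$ for every box, so it suffices to prove the real, shape-uniform inequality $\prod_{(i,j)\in Y}\frac{M+|i-j|}{h_{Y}((i,j))}\le\binom{k+2M\sqrt{k}-1}{k}$. Grouping boxes by diagonal (the content $i-j$ is constant along each diagonal), let $m_{d}$ be the number of boxes with $i-j=d$; then the numerator equals $\prod_{d}(M+|d|)^{m_{d}}$ with $\sum_{d}m_{d}=k$. The structural fact that produces the $\sqrt{k}$ is that the sequence $(m_{d})$ is unimodal with peak at $d=0$ and drops by at most one between neighbouring diagonals, and that $m_{0}$ equals the side of the Durfee square, whence $m_{d}\le m_{0}\le\sqrt{k}$ for all $d$. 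Thus although a single content can be as large as $k-1$, diagonals of large content carry few boxes and are attached to correspondingly large hooks, and the two effects must be balanced against the hook product $\prod_{s}h_{Y}(s)=k!/f^{Y}$.

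The hard part is precisely this uniform balance: one has to show that no Young diagram of size $k$ makes $\prod_{s}(M+|i-j|)/\prod_{s}h_{Y}(s)$ exceed $\binom{k+b}{k}$. My approach would be to split the product along the main diagonal and exploit the transpose symmetry $Y\leftrightarrow Y'$, which preserves every hook length while reversing the sign of every content, so that the boxes of negative content of $Y$ become the boxes of positive content of $Y'$ and the two halves are treated on the same footing. Each half I would then compare with a semistandard-tableau count, using the hook--content identity $s_{\lambda}(1^{n})=\prod_{s}\frac{n+c(s)}{h(s)}$ as the exactly solvable model; the single-row case gives $\binom{M+k-1}{k}$ and the rectangular cases give the general $k\times b$ comparison, and the Durfee bound $m_{0}\le\sqrt{k}$ is what guarantees that passing from a general shape to this rectangular model costs no more than replacing $M$ by $2M\sqrt{k}$ in the content budget. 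Making this majorisation argument rigorous and uniform in the shape, and doing so for complex $z$ rather than only real or integer $z$, is where essentially all the work lies; once it is in place, the reduction of the first paragraph and the Stirling expansion finish the proof.
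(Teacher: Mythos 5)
Your first paragraph is fine and coincides with the end of the paper's argument: the right-hand side is exactly the Robbins--Stirling estimate of the generalised binomial coefficient $\binom{k+b}{k}$ with $b=2\max\{1,|z|\}\sqrt{k}-1$, so the lemma indeed reduces to showing $\prod_{(i,j)\in Y}\bigl|\frac{z+i-j}{h_Y((i,j))}\bigr|\le\binom{k+b}{k}$ uniformly over shapes of size $k$. The problem is that you never prove this reduction target. Your second and third paragraphs only sketch a plan (bound $|z+i-j|\le M+|i-j|$, group by diagonals, invoke unimodality of the diagonal multiplicities and the Durfee bound $m_0\le\sqrt{k}$, then compare with a hook--content/Schur evaluation on rectangles), and you explicitly concede that making the majorisation ``rigorous and uniform in the shape'' is where essentially all the work lies. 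That uniform balance between large contents and large hooks is precisely the content of the lemma, so as written the proposal has a genuine gap at its core step; in particular the assertion that passing from a general shape to the rectangular model ``costs no more than replacing $M$ by $2M\sqrt{k}$ in the content budget'' is exactly the unproven claim, and it is not clear your route can be completed without substantial new input.

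The paper closes this gap by a different and much shorter mechanism, which you may want to compare with: it uses the polynomial identity $\prod_{(i,j)\in Y}\frac{z+i-j}{h_Y((i,j))}=\frac{1}{k!}\sum_{\sigma\in S_k}\chi^Y(\sigma)\,z^{c(\sigma)}$ (Stanley, exercise 7.50), the character bound $|\chi^Y(\sigma)|\le(2\,\mathrm{sq}(Y))^{c(\sigma)}$ with $\mathrm{sq}(Y)\le\sqrt{k}$ (this is where the Durfee square enters, but packaged in a cited result rather than a shape-by-shape majorisation), and the elementary identity $\frac{1}{k!}\sum_{\sigma\in S_k}m^{c(\sigma)}=\binom{k+m-1}{k}$ extended to non-integer $m$. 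This immediately gives $\prod\bigl|\frac{z+i-j}{h_Y}\bigr|\le\binom{k+2\sqrt{k}\max\{1,|z|\}-1}{k}$ for arbitrary complex $z$, after which Stirling yields the displayed bound exactly as in your first paragraph. If you want to salvage your approach, you would either have to prove your rectangular-majorisation claim uniformly in the shape, or simply replace paragraphs two and three by the character-expansion argument above.
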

We will denote $f(z,k)$ the function on the right hand side of \eqref{chiave1}.

The key result on the second kind of product is the following
\begin{lemma}\label{lemmaU2}
For every pair of diagrams $(Y_1,Y_2)$ with $|Y_1|+|Y_2|=k\ge 1$ and for every pair of fixed complex numbers $z_1,z_2$, the following inequality holds
\begin{equation}\label{boundU2}
\begin{aligned}
&\prod_{(i,j)\in Y_1}\bigg|\frac{z_1+i-j}{\alpha_1-\alpha_2-h_{Y_1}((i,j))+(Y'_1)_i-(Y'_2)_i}\bigg|\prod_{(i,j)\in Y_2}\bigg|\frac{z_2+i-j}{\alpha_1-\alpha_2+h_{Y_2}((i,j))-(Y'_2)_i+(Y'_1)_i}\bigg|\le\\
&\left(\frac{16}{\min\{1,|\alpha_1-\alpha_2|\}}\left(1+\frac{|\alpha_1-\alpha_2|}{C_{12}(\vec{\alpha})}\right)\right)^{k}f(z_1,k)f(z_2,k),
\end{aligned}
\end{equation}
where
\begin{equation}
C_{ij}(\vec{\alpha})=\min_{n\in\mathbb{Z}}|\alpha_i-\alpha_j-n|>0.
\end{equation}
\end{lemma}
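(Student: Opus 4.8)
The plan is to peel off the numerator factors $|z_1+i-j|$ and $|z_2+i-j|$ by hand, reducing \eqref{boundU2} to Lemma~\ref{lemmaU1} together with a purely combinatorial estimate on the denominators. Set $\alpha:=\alpha_1-\alpha_2$ and, for a box $s=(i,j)$, put $n_1(s):=h_{Y_1}(s)-(Y'_1)_i+(Y'_2)_i\in\mathbb{Z}$ and $n_2(s):=h_{Y_2}(s)-(Y'_2)_i+(Y'_1)_i\in\mathbb{Z}$, so that the two denominators in \eqref{boundU2} are exactly $\alpha-n_1(s)$ (for $s\in Y_1$) and $\alpha+n_2(s)$ (for $s\in Y_2$). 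I would then factor the left-hand side of \eqref{boundU2} as
\begin{equation}
\Bigl[\prod_{(i,j)\in Y_1}\Bigl|\tfrac{z_1+i-j}{h_{Y_1}(s)}\Bigr|\Bigr]\Bigl[\prod_{(i,j)\in Y_2}\Bigl|\tfrac{z_2+i-j}{h_{Y_2}(s)}\Bigr|\Bigr]\,R,\qquad R:=\prod_{s\in Y_1}\frac{h_{Y_1}(s)}{|\alpha-n_1(s)|}\,\prod_{s\in Y_2}\frac{h_{Y_2}(s)}{|\alpha+n_2(s)|}.
\end{equation}
By Lemma~\ref{lemmaU1} the first two brackets are at most $f(z_1,|Y_1|)$ and $f(z_2,|Y_2|)$; since $f(z,\cdot)$ is nondecreasing and $|Y_1|,|Y_2|\le k$, they are bounded by $f(z_1,k)f(z_2,k)$. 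It remains to show that $R\le\bigl(\tfrac{16}{\min\{1,|\alpha|\}}(1+\tfrac{|\alpha|}{C_{12}})\bigr)^{k}$.

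Each denominator has the form $|\alpha-n|$ with $n\in\mathbb{Z}$ and, by the genericity assumption, $\alpha\notin\mathbb{Z}$. I would first record the elementary two-sided lower bound
\begin{equation}\label{elementary}
|\alpha-n|\ \ge\ \frac{\min\{1,|\alpha|\}}{1+|\alpha|/C_{12}}\,\max\{1,|n|\}\qquad(n\in\mathbb{Z}),
\end{equation}
proved by combining $|\alpha-n|\ge C_{12}$ with $|\alpha-n|\ge|n|-|\alpha|$ and distinguishing the regimes $|n|\le|\alpha|+C_{12}$ and $|n|>|\alpha|+C_{12}$ (the case $n=0$ being immediate). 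Applying \eqref{elementary} to every factor of $R$ extracts precisely the analytic prefactor $\bigl(\tfrac{1}{\min\{1,|\alpha|\}}(1+\tfrac{|\alpha|}{C_{12}})\bigr)^{k}$ and reduces the statement to the purely combinatorial inequality
\begin{equation}\label{combin}
R':=\prod_{s\in Y_1}\frac{h_{Y_1}(s)}{\max\{1,|n_1(s)|\}}\,\prod_{s\in Y_2}\frac{h_{Y_2}(s)}{\max\{1,|n_2(s)|\}}\ \le\ 16^{\,k}.
\end{equation}

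Establishing \eqref{combin} is the heart of the argument and the step I expect to be hardest. Individual ratios cannot be controlled: if a column is long in $Y_1$ but short or absent in $Y_2$ then $h_{Y_1}(s)$ far exceeds $|n_1(s)|$, so \eqref{combin} must be proved globally. The natural device is the Maya (abacus) encoding: for a box $s\in Y_1$ the hook $h_{Y_1}(s)$ and the shifted length $n_1(s)$ are the difference of one and the same particle position of $Y_1$, attached to the row of $s$, and a hole position attached to the column of $s$, taken in $Y_1$ respectively in $Y_2$; in symbols $h_{Y_1}(s)=p-q^{(1)}$ and $n_1(s)=p-q^{(2)}$, and symmetrically for $s\in Y_2$. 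This turns $R'$ into a ratio of products of differences of lattice positions attached to the pair $(Y_1,Y_2)$, which can be evaluated in closed form by the polynomial identity for the vector-multiplet denominator product of a pair of diagrams; comparing hook and relative-hook products row by row, equivalently telescoping that closed-form expression, then yields a uniform geometric bound, with the crude constant $16=2^4$ absorbing how often a relative length $|n_r(s)|$ undershoots the corresponding hook. Once \eqref{combin} is in place the lemma follows by collecting \eqref{elementary}, \eqref{combin} and the Lemma~\ref{lemmaU1} bounds, and the only further routine checks are the monotonicity of $f(z,\cdot)$ and the book-keeping of the $n_r(s)=0$ boxes, where $|\alpha-n|=|\alpha|$ produces the factor $\min\{1,|\alpha|\}$.
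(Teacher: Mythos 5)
Your reduction is sound as far as it goes: the peeling-off of the numerators via Lemma \ref{lemmaU1} (with the monotonicity of $f(z,\cdot)$ in $k$, which the paper also verifies), and your elementary lower bound $|\alpha-n|\ge \frac{\min\{1,|\alpha|\}}{1+|\alpha|/C_{12}(\vec{\alpha})}\max\{1,|n|\}$ for $n\in\mathbb{Z}$, are both correct; the latter is a clean uniform repackaging of what the paper does by splitting the boxes into $B_1$ (where the denominator is exactly $\alpha_1-\alpha_2$) and $B_2$ (where it differs from $\alpha_1-\alpha_2$ by a nonzero integer, so is at least $C_{12}(\vec{\alpha})$ in modulus). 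After this step you have correctly isolated the purely combinatorial inequality $R'\le 16^k$, i.e. that hooks exceed the ``relative hooks'' $\max\{1,|n_r(s)|\}$ by at most a factor $16$ per box on average over the pair of diagrams.

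The genuine gap is that this combinatorial inequality — which is the heart of the lemma — is asserted but not proved. Your own remark that individual ratios $h_{Y_1}(s)/\max\{1,|n_1(s)|\}$ can be arbitrarily large shows a global argument is required, and the sketch you offer (Maya/abacus encoding, a closed-form identity for the vector-multiplet denominator product, and ``telescoping'' it into a geometric bound) never exhibits such an argument: no identity is written down, no telescoping is performed, and no mechanism is given that converts the closed form into a per-box constant, so the factor $16$ is unexplained rather than derived. In the paper this is exactly the content of Lemma \ref{lemma3.7} and Lemma \ref{lemma3.9}, proved in Appendices \ref{b1} and \ref{b2} by a concrete row-by-row analysis: at most one box per row of $Y_1$ can satisfy $h_{Y_1}((i,j))=(Y'_1)_i-(Y'_2)_i$, which together with the arithmetic–geometric mean inequality and a binomial bound gives the factor $2^k$ in \eqref{primopezzo}; and on each row the boxes of $B_2$ are split at the critical column $j^*$ where the hook equals $(Y'_1)_i-(Y'_2)_i$, with explicit binomial-coefficient estimates on either side yielding the factor $8^{|Y_1|}$ in \eqref{terzopezzo} (plus the observation that for each row index only one of the two diagrams can contribute nontrivially, so the bounds combine to $16^k$ for the pair). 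Until you supply an argument of this kind — or actually carry out the abacus computation you allude to and extract a per-box constant from it — the proposal does not prove the lemma.
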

We will use the notation 
\begin{equation}
g_{ij}(\vec{\alpha})=\frac{16}{\min\{1,|\alpha_i-\alpha_j|\}}\left( 1+\frac{|\alpha_i-\alpha_j|}{C_{ij}(\vec{\alpha})}\right).
\end{equation} 

\subsection{Proofs of Lemma \ref{lemmaU1} and Lemma \ref{lemmaU2}}

We start by proving Lemma \ref{lemmaU1}. The following results will be important in the proof. First, the following formula, that appears in equation 7.207 of exercise 7.50 of \cite{stanley}, is crucial:
\begin{prop}
For a Young diagram $Y$ with $k$ boxes, if $c(\sigma)$ denotes the number of cycles in the permutation $\sigma\in S_k$, and $\chi^Y(\sigma)$ is the character of the irreducible representation of $S_k$ associated to the partition $Y$ of $k$ and evaluated in the element $\sigma\in S_k$, then
\begin{equation}\label{polynomialidentity}
\prod_{(i,j)\in Y}\frac{z+i-j}{h_Y((i,j))}=\frac{1}{k!}\sum_{\sigma\in S_k}\chi^Y(\sigma)z^{c(\sigma)}.
\end{equation}
\end{prop}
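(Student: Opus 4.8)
The plan is to deduce \eqref{polynomialidentity} from the spectral theory of the Jucys--Murphy elements of $\mathbb{C}[S_k]$ together with the hook-length formula, turning both sides into the trace over the irreducible module $V^Y$ of a single group-algebra element. First I would rewrite the left-hand side. Writing the content of a box as $\operatorname{ct}(i,j)=j-i$, the numerator $\prod_{(i,j)\in Y}(z+\operatorname{ct}(i,j))$ is the content polynomial of $Y$, while the denominator $\prod_{(i,j)\in Y}h_Y((i,j))$ equals $k!/f^Y$ by the hook-length formula, where $f^Y=\chi^Y(\mathrm{id})=\dim V^Y$. (I adopt the convention $\operatorname{ct}(i,j)=j-i$; the form appearing in \eqref{polynomialidentity} is the same identity read on the conjugate diagram, a difference that is immaterial for the absolute-value estimate of Lemma~\ref{lemmaU1}, where one sums over all Young diagrams.) The claim is then equivalent to
\[ f^Y\prod_{(i,j)\in Y}\bigl(z+\operatorname{ct}(i,j)\bigr)=\sum_{\sigma\in S_k}\chi^Y(\sigma)\,z^{c(\sigma)}, \]
whose right-hand side is manifestly the trace over $V^Y$ of the element $\sum_{\sigma\in S_k}z^{c(\sigma)}\sigma$.

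The core of the argument is to realise that element as a product of Jucys--Murphy elements. Set $X_m=\sum_{l<m}(l\,m)\in\mathbb{C}[S_k]$ for $m=1,\dots,k$, with $X_1=0$. The first step is the algebraic identity
\[ \prod_{m=1}^{k}(z+X_m)=\sum_{\sigma\in S_k}z^{c(\sigma)}\,\sigma, \]
which I would establish by induction on $k$ along the chain $S_{k-1}\subset S_k$: multiplying the product of the first $k-1$ factors by $(z+X_k)$, the summand $z$ keeps each $\tau\in S_{k-1}$ fixed, adjoining $k$ as a new fixed point and raising the number of cycles by one, while the summands $(l\,k)$, $l<k$, splice $k$ into the cycle of $\tau$ through $l$ and leave the number of cycles unchanged; together these yield a bijection onto $S_k$ carrying the correct power of $z$. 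The second step is the spectral input: symmetric functions of the $X_m$ are central in $\mathbb{C}[S_k]$, and in the Young seminormal (Gelfand--Tsetlin) basis of $V^Y$, indexed by standard tableaux, each $X_m$ acts diagonally with eigenvalue the content of the box occupied by $m$. As $m$ runs from $1$ to $k$ these eigenvalues exhaust the multiset $\{\operatorname{ct}(i,j):(i,j)\in Y\}$ for every standard tableau, so $\prod_m(z+X_m)$ acts on $V^Y$ as the scalar $\prod_{(i,j)\in Y}(z+\operatorname{ct}(i,j))$. Taking the trace over $V^Y$ of the algebraic identity then produces $f^Y\prod_{(i,j)\in Y}(z+\operatorname{ct}(i,j))$ on the left and $\sum_{\sigma}\chi^Y(\sigma)z^{c(\sigma)}$ on the right; dividing by $k!$ and reinserting the hook-length formula gives \eqref{polynomialidentity}.

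An alternative route, closer in spirit to the source \cite{stanley}, avoids the seminormal form. By the Frobenius character formula $s_Y=\sum_{\mu\vdash k}z_\mu^{-1}\chi^Y_\mu\,p_\mu$, and since a conjugacy class of cycle type $\mu$ has $\ell(\mu)$ cycles and $k!/z_\mu$ elements, the right-hand side of \eqref{polynomialidentity} is exactly the image of $s_Y$ under the specialisation $p_r\mapsto z$ for all $r\ge1$. For a positive integer $N$ this is the evaluation at $N$ unit variables, since $p_r(1^N)=N$, so the statement collapses to the hook-content formula $s_Y(1^N)=\prod_{(i,j)\in Y}\frac{N+\operatorname{ct}(i,j)}{h_Y((i,j))}$; both sides are polynomials in $N$ agreeing at every positive integer, hence agree identically, and replacing $N$ by $z$ yields the result. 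In either approach the main obstacle is the same single nontrivial ingredient -- that the Jucys--Murphy eigenvalues are precisely the box contents, equivalently the hook-content evaluation of $s_Y$ -- while everything else is bookkeeping with cycle counts and one use of the hook-length formula. Since the identity is classical and recorded as eq.~7.207 of \cite{stanley}, it is legitimate simply to cite it; the derivation sketched here is what I would include to keep the note self-contained.
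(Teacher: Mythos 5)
Your proposal is correct. Note, however, that the paper does not actually prove this proposition: it simply cites it as eq.~7.207 of Exercise~7.50 in \cite{stanley}, so there is no internal proof to compare against, and what you supply is a self-contained derivation of the cited fact. Of your two routes, the second (Frobenius expansion $s_Y=\sum_\mu z_\mu^{-1}\chi^Y_\mu p_\mu$, the specialisation $p_r\mapsto z$, the hook--content formula at $1^N$, and polynomiality in $N$) is essentially the classical argument underlying Stanley's exercise, so it is the closest in spirit to the source; the first route via Jucys--Murphy elements is genuinely different and arguably cleaner, since the identity $\prod_{m=1}^k(z+X_m)=\sum_{\sigma\in S_k}z^{c(\sigma)}\sigma$ together with the Gelfand--Tsetlin eigenvalues (box contents) gives the statement directly by taking the trace on $V^Y$ and invoking the hook-length formula once, with no appeal to symmetric-function specialisations. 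Both arguments are complete: the inductive bijection establishing the Jucys identity is stated correctly, the product of the $(z+X_m)$ indeed acts as the scalar $\prod_{(i,j)\in Y}(z+\operatorname{ct}(i,j))$ on $V^Y$ because every standard tableau carries the same multiset of contents, and your remark that the sign of the content ($i-j$ versus $j-i$) merely exchanges $Y$ with its conjugate is right and, as you say, harmless for Lemma~\ref{lemmaU1}, where only $|\chi^Y(\sigma)|\le(2\,\mathrm{sq}(Y))^{c(\sigma)}$ with $\mathrm{sq}(Y)=\mathrm{sq}(Y')$ is used and the sum runs over all diagrams. Citing the result, as the paper does, is legitimate; your sketch is a sound way to make the note self-contained.
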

This holds as a polynomial identity for all $z\in\mathbb{C}$.

Moreover, we will use also the following result, that is Lemma 5 of \cite{muller}:
\begin{prop}
Let $Y$ be a partition of $k\ge 1$. Let $\mathrm{sq}(Y)$ be the side length of the largest square contained in $Y$; that is, the largest $j$ such that $Y_j \ge j$. Let $\sigma\in S_k$ be a permutation with $c(\sigma)$ cycles. Then
\begin{equation}\label{lemmamuller}
|\chi^Y(\sigma)|\le (2\mathrm{sq}(Y))^{c(\sigma)}.
\end{equation}
\end{prop}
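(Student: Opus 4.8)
The plan is to strip the content numerators off with Lemma \ref{lemmaU1} and reduce \eqref{boundU2} to a self-contained combinatorial estimate comparing ordinary hooks with the vector-multiplet denominators. Write $w:=\alpha_1-\alpha_2$ and introduce, for $s=(i,j)\in Y_1$ and $t=(i,j)\in Y_2$, the \emph{integers}
\begin{equation}
\rho_1(s):=h_{Y_1}(s)-(Y'_1)_i+(Y'_2)_i,\qquad \rho_2(t):=h_{Y_2}(t)-(Y'_2)_i+(Y'_1)_i ,
\end{equation}
so that the two denominators in \eqref{boundU2} are exactly $w-\rho_1(s)$ and $w+\rho_2(t)$. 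Since $\rho_1,\rho_2\in\mathbb{Z}$ and $w\notin\mathbb{Z}$ (the genericity hypothesis), every denominator is bounded below by $C_{12}(\vec{\alpha})$. Multiplying and dividing by the ordinary hooks, the left side of \eqref{boundU2} factorizes as
\begin{equation}
\Big(\prod_{(i,j)\in Y_1}\tfrac{|z_1+i-j|}{h_{Y_1}((i,j))}\Big)\Big(\prod_{(i,j)\in Y_2}\tfrac{|z_2+i-j|}{h_{Y_2}((i,j))}\Big)\, R,\qquad R:=\prod_{s\in Y_1}\tfrac{h_{Y_1}(s)}{|w-\rho_1(s)|}\prod_{t\in Y_2}\tfrac{h_{Y_2}(t)}{|w+\rho_2(t)|}.
\end{equation}
By Lemma \ref{lemmaU1} the first two factors are at most $f(z_1,|Y_1|)f(z_2,|Y_2|)$, and after checking that $f(z,\cdot)$ is nondecreasing and that $f(z,k)\ge 1$ (so the empty-diagram factor $1$ is absorbed) this is $\le f(z_1,k)f(z_2,k)$. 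Everything then reduces to proving $R\le g_{12}(\vec{\alpha})^{k}$.

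To bound $R$ I would split each factor with $\max\{1,|\rho|\}$, writing $R$ as the product of
\begin{equation}
\prod_{s\in Y_1}\tfrac{\max\{1,|\rho_1(s)|\}}{|w-\rho_1(s)|}\prod_{t\in Y_2}\tfrac{\max\{1,|\rho_2(t)|\}}{|w+\rho_2(t)|}\quad\text{and}\quad \prod_{s\in Y_1}\tfrac{h_{Y_1}(s)}{\max\{1,|\rho_1(s)|\}}\prod_{t\in Y_2}\tfrac{h_{Y_2}(t)}{\max\{1,|\rho_2(t)|\}}.
\end{equation}
The first product is controlled by the elementary per-box inequality $\frac{\max\{1,|n|\}}{|w-n|}\le\frac{4}{\min\{1,|w|\}}\big(1+\frac{|w|}{C_{12}(\vec{\alpha})}\big)$, valid for all $n\in\mathbb{Z}$ and $w\notin\mathbb{Z}$ — split into the regimes $n=0$, $0<|n|\le 2|w|$ and $|n|>2|w|$, using $|w-n|\ge C_{12}(\vec{\alpha})$ and $|w-n|\ge|n|-|w|$ — giving the bound $\big(\tfrac{4}{\min\{1,|w|\}}(1+\tfrac{|w|}{C_{12}(\vec{\alpha})})\big)^{k}$. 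Together with the claim that the second product is $\le 4^{k}$, this reproduces $R\le g_{12}(\vec{\alpha})^{k}$, with the constant $16=4\cdot4$ coming out correctly.

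The main obstacle is the sharp combinatorial inequality
\begin{equation}
\prod_{s\in Y_1}\frac{h_{Y_1}(s)}{\max\{1,|\rho_1(s)|\}}\prod_{t\in Y_2}\frac{h_{Y_2}(t)}{\max\{1,|\rho_2(t)|\}}\le 4^{\,|Y_1|+|Y_2|}.
\end{equation}
This resists any box-by-box treatment — single ratios can be of order $\sqrt{k}$, for instance along the antidiagonal of a square when the partner diagram is small — and it also resists the crude combination of $\prod_{s\in Y}h_Y(s)=|Y|!/\dim Y\le|Y|!$ with a distinctness bound on the $\rho$'s, since the $\rho$-values are genuinely repeated and, in fact, the inequality is \emph{tight}: for $Y_1$ an $N\times N$ square and $Y_2=\varnothing$ one finds $\log(\text{LHS})\sim 2N^2\log 2=\log 4^{\,k}$. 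The right tool is the exact cancellation between the hook products and the relative hooks, which I would extract from the Maya-diagram / $\beta$-set encoding of the pair $(Y_1,Y_2)$: the multisets $\{\rho_1(s)\}$ and $\{\rho_2(t)\}$, as well as the ordinary hook multisets, admit explicit descriptions in terms of the beta-numbers $(Y_1)_i-i$ and $(Y_2)_j-j$, and grouping the boxes along antidiagonals — on which the multiplicity of a given $\rho$-value is governed by $\mathrm{sq}(Y)$, the very quantity appearing in \eqref{lemmamuller} — turns the inequality into a comparison of explicit integer products that can be closed by a Stirling-type estimate. Proving the constant $4^{k}$ \emph{uniformly} over the shapes of $Y_1$ and $Y_2$ is the part I expect to be genuinely delicate.
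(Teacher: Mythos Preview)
Your proposal does not address the stated Proposition at all. The statement to be proved is the character bound $|\chi^Y(\sigma)|\le (2\,\mathrm{sq}(Y))^{c(\sigma)}$ for irreducible characters of the symmetric group $S_k$, where $\mathrm{sq}(Y)$ is the side of the Durfee square of $Y$ and $c(\sigma)$ is the number of cycles of $\sigma$. What you have written is instead an outline for Lemma~\ref{lemmaU2}: you introduce $w=\alpha_1-\alpha_2$, the integers $\rho_1(s),\rho_2(t)$, the constant $C_{12}(\vec\alpha)$, and aim at $R\le g_{12}(\vec\alpha)^k$. None of this touches $\chi^Y(\sigma)$, the cycle count $c(\sigma)$, or the Durfee square; the only appearance of $\mathrm{sq}(Y)$ in your text is a passing remark in the final paragraph.

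For the record, the paper does not supply a proof of this Proposition either: it is quoted as Lemma~5 of \cite{muller} and then used as a black box inside the proof of Lemma~\ref{lemmaU1}. A proof of the character bound itself is representation-theoretic in nature (e.g.\ via the Murnaghan--Nakayama rule, controlling the number of admissible border strips per cycle by $2\,\mathrm{sq}(Y)$), which is entirely orthogonal to the hook/denominator combinatorics you develop. Even read as an attempt at Lemma~\ref{lemmaU2}, your argument is incomplete: the inequality $\prod h/\max\{1,|\rho|\}\le 4^{k}$ that you isolate as ``the main obstacle'' is essentially the combined content of Lemmas~\ref{lemma3.7} and~\ref{lemma3.9} in the paper (with a slightly different splitting yielding $2^k\cdot 8^k=16^k$ rather than $4^k\cdot 4^k$), and you do not prove it.
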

Finally, from \cite{tao,ford} and references therein, the following holds
\begin{prop}\label{proptao}
For every natural number $m$, the expectation value of $m^{c(\sigma)}$, over all the permutations of $S_k$ weighted uniformly, is equal to 
\begin{equation}
E(m^c)=\binom{k+m-1}{k}.
\end{equation}
\end{prop}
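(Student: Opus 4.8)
The plan is to reduce the claimed expectation to a closed-form evaluation of the unnormalised sum
\[
P_k(m)=\sum_{\sigma\in S_k}m^{c(\sigma)},
\]
so that $E(m^c)=P_k(m)/k!$, and then to show that $P_k(m)$ equals the rising factorial $m(m+1)\cdots(m+k-1)=\frac{(m+k-1)!}{(m-1)!}$. Granting this, one immediately obtains
\[
E(m^c)=\frac{1}{k!}\,\frac{(m+k-1)!}{(m-1)!}=\binom{m+k-1}{k},
\]
which is exactly the asserted identity. Thus the entire content is the evaluation of $P_k(m)$.

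The main step is to establish the recursion $P_k(m)=(m+k-1)\,P_{k-1}(m)$ for $k\ge 1$, with base value $P_0(m)=1$, by tracking the largest label $k$ inside the cycle decomposition of $\sigma\in S_k$ and deleting it from its cycle to produce a permutation $\sigma'\in S_{k-1}$. I would split the sum according to how $k$ sits in $\sigma$. If $k$ is a fixed point, then $\sigma'$ ranges freely over $S_{k-1}$ and $c(\sigma)=c(\sigma')+1$, contributing $m\,P_{k-1}(m)$. If instead $k$ lies in a cycle of length at least $2$, writing $\sigma(i)=k$ and $\sigma(k)=j$, the deletion sets $\sigma'(i)=j$ and leaves all other values unchanged, removing $k$ from its cycle without changing the number of cycles, so $c(\sigma)=c(\sigma')$; conversely, from any $\sigma'\in S_{k-1}$ the label $k$ can be re-inserted immediately after exactly one of the $k-1$ elements of $\{1,\dots,k-1\}$, giving $k-1$ preimages each with the same cycle count. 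Hence this case contributes $(k-1)\,P_{k-1}(m)$. Adding the two gives $P_k(m)=\bigl(m+(k-1)\bigr)P_{k-1}(m)$.

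Since $P_1(m)=m$, an immediate induction on $k$ then yields $P_k(m)=m(m+1)\cdots(m+k-1)$, completing the argument. The only step needing genuine care — and which I expect to be the sole real obstacle — is verifying that the deletion map $\sigma\mapsto\sigma'$ in the second case is a well-defined $(k-1)$-to-$1$ correspondence preserving $c(\sigma)$: one must check that re-inserting $k$ after each of the $k-1$ labels produces distinct permutations all of the same cycle count, and that these exhaust the non-fixed-point permutations. This is the standard cycle-insertion bijection underlying the Chinese restaurant process, and it is purely bookkeeping. As an alternative route avoiding casework, one could use the exponential formula: the exponential generating function of permutations weighted by $x^{c(\sigma)}$ is $\exp\bigl(-x\log(1-t)\bigr)=(1-t)^{-x}$, whose coefficient of $t^k$ is $\binom{x+k-1}{k}$ by the generalised binomial theorem, and specialising $x=m$ reproduces the claim. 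I would present the recursion proof as primary, since it is elementary and self-contained.
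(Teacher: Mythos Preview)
Your proof is correct; both the recursion/insertion argument and the exponential-formula alternative are standard and valid derivations of the identity $\sum_{\sigma\in S_k}x^{c(\sigma)}=x(x+1)\cdots(x+k-1)$. Note that the paper does not supply its own proof of this proposition but simply quotes it as a known result with references, so there is no in-paper argument to compare against; your write-up would serve as a self-contained justification where the paper has none.
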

This result can be extended to noninteger $m$ by considering the right hand side of the equation as the generalized binomial coefficient.

{\it Proof of Lemma \ref{lemmaU1}.}
From the identity \eqref{polynomialidentity}, it follows that
\begin{equation}
\begin{aligned}
\prod_{(i,j)\in Y}\bigg|\frac{z+i-j}{h_Y((i,j))}\bigg|=\bigg|\frac{1}{k!}\sum_{\sigma\in S_k}\chi^Y(\sigma)z^{c(\sigma)}\bigg|\le\frac{1}{k!}\sum_{\sigma\in S_k}|\chi^Y(\sigma)|\cdot |z|^{c(\sigma)}.
\end{aligned}
\end{equation}
Moreover, using \eqref{lemmamuller} and the fact that one has always
\begin{equation}
\mathrm{sq}(Y)\le\sqrt{k},
\end{equation}
we can conclude  
\begin{equation}
|\chi^Y(\sigma)|\le (2\sqrt{k})^{c(\sigma)},
\end{equation}
so that
\begin{equation}
\bigg|\prod_{(i,j)\in Y}\frac{z+i-j}{h_Y((i,j))}\bigg|\le\frac{1}{k!}\sum_{\sigma\in S_k}[2\sqrt{k}|z|]^{c(\sigma)}\le \frac{1}{k!}\sum_{\sigma\in S_k}[2\sqrt{k}\max\{1,|z|\}]^{c(\sigma)}.
\end{equation}
Now, the expression 
\begin{equation}
\frac{1}{k!}\sum_{\sigma\in S_k}[2\sqrt{k}\max\{1,|z|\}]^{c(\sigma)}
\end{equation}
is the expectation value of $[2\sqrt{k}\max\{1,|z|\}]^{c(\sigma)}$ with the uniform measure, where all permutations have the same probability, given by $1/k!$. 
From Proposition \ref{proptao}, we can use the generalized binomial coefficient in order to obtain 
\begin{equation}
\bigg|\prod_{(i,j)\in Y}\frac{z+i-j}{h_Y((i,j))}\bigg|\le\binom{k+2\sqrt{k}\max\{1,|z|\}-1}{k}.
\end{equation}

We can write
\begin{equation}\label{binomialcrescente}
\binom{k+2\sqrt{k}\max\{1,|z|\}-1}{k}=\frac{\Gamma(k+2\sqrt{k}\max\{1,|z|\})}{\Gamma(k+1)\Gamma(2\sqrt{k}\max\{1,|z|\})}.
\end{equation}
Using Stirling approximation in the form
\begin{equation}
\sqrt {2\pi n}\left(\frac {n}{e}\right)^{n}e^{\frac {1}{12n+1}}<\Gamma(n+1)<\sqrt {2\pi n} \left(\frac {n}{e}\right)^{n}e^{\frac {1}{12n}}, 
\end{equation}
we have 
\begin{equation}
\begin{aligned}
&\binom{k+2\sqrt{k}\max\{1,|z|\}-1}{k}<\\
&\frac{\sqrt{2\pi(k+2\max\{1,|z|\}\sqrt{k}-1)}}{\sqrt{2\pi k}\sqrt{2\pi(2\max\{1,|z|\}\sqrt{k}-1)}}\frac{(k+2\max\{1,|z|\}\sqrt{k}-1)^{k+2\max\{1,|z|\}\sqrt{k}-1}e^ke^{2\max\{1,|z|\}\sqrt{k}-1}}{k^k(2\max\{1,|z|\}\sqrt{k}-1)^{2\max\{1,|z|\}\sqrt{k}-1}e^{k+2\max\{1,|z|\}\sqrt{k}-1}}\times\\
&\times\exp(\frac{1}{12(k+2\sqrt{k}\max\{1,|z|\}-1)}-\frac{1}{12k+1}-\frac{1}{12(2\sqrt{k}\max\{1,|z|\}-1)+1})=\\
=&\sqrt{\frac{k+2\max\{1,|z|\}\sqrt{k}-1}{2\pi k(2\max\{1,|z|\}\sqrt{k}-1)}}\left(1+\frac{2\sqrt{k}\max\{1,|z|\}-1}{k}\right)^k\left(1+\frac{k}{2\sqrt{k}\max\{1,|z|\}-1}\right)^{2\max\{1,|z|\}\sqrt{k}-1}\times\\
&\times\exp(\frac{1}{12(k+2\sqrt{k}\max\{1,|z|\}-1)}-\frac{1}{12k+1}-\frac{1}{12(2\sqrt{k}\max\{1,|z|\}-1)+1}).
\end{aligned}
\end{equation}
\hfill $\square$

\begin{remark}
Let us remark that the binomial coefficient \eqref{binomialcrescente} is increasing in $k$. Indeed, considering the ratio of the binomial coefficient with $k=r+1$ and $k=r$, we have that
\begin{equation}
\begin{aligned}
&\frac{\binom{r+1+2\sqrt{r+1}\max\{1,|z_l|\}-1}{r+1}}{\binom{ r+2\sqrt{r}\max\{1,|z_l|\}-1}{r}}\ge \frac{\binom{r+2\sqrt{r}\max\{1,|z_l|\}}{r+1}}{\binom{ r+2\sqrt{r}\max\{1,|z_l|\}-1}{r}}=\frac{r+2\sqrt{r}\max\{1,|z_l|\}}{r+1}\ge 1.
\end{aligned}
\end{equation}
Therefore, when we consider a $N$-tuple of Young diagrams $Y_1,\dots,Y_N$ with $|Y_i|=k_i$ and $\sum_{i=1}^Nk_i=k\ge 1$, we can bound the quantity $\prod_{(m,n)\in Y_i}\frac{z+m-n}{h_{Y_i}((m,n))}$ with $f(z,k)$. This bound holds also if the diagram $Y_i$ is empty, since $f(z,k)>1$ if $k\ge 1$.
\end{remark}

In order to prove Lemma \ref{lemmaU2}, we need some further preliminary results which we now discuss.
Since we have found a sharp bound for the products of the form
\begin{equation}
\prod_{(i,j)\in Y_1}\frac{z+i-j}{h_{Y_1}((i,j))},
\end{equation}
we can write the second type of product \eqref{2} as (we fix $j=2$ in \eqref{2} for simplicity)
\begin{equation}
\prod_{(i,j)\in Y_1}\frac{z+i-j}{\alpha_1-\alpha_2-h_{Y_1}((i,j))+(Y'_1)_i-(Y'_2)_i}=\prod_{(i,j)\in Y_1}\frac{z+i-j}{h_{Y_1}((i,j))}\frac{h_{Y_1}((i,j))}{\alpha_1-\alpha_2-h_{Y_1}((i,j))+(Y'_1)_i-(Y'_2)_i},
\end{equation}
and so we can reduce to estimate the products of the form 
\begin{equation}
\prod_{(i,j)\in Y_1}\frac{h_{Y_1}((i,j))}{\alpha_1-\alpha_2-h_{Y_1}((i,j))+(Y'_1)_i-(Y'_2)_i}.
\end{equation}

Let us fix a pair of Young diagrams $Y_1,Y_2$ with $|Y_1|+|Y_2|=k\ge 1$. 
Let us consider first the product over the boxes of $Y_1$. We will suppose $Y_1$ to be nonempty, otherwise the product would clearly be bounded with 1, and the final estimate would also include that case. Let us divide the set of boxes of $Y_1$ in two subsets: we call $B_1(Y_1)$ the set of boxes of $Y_1$ for which $h_{Y_1}((i,j))=(Y'_1)_i-(Y'_2)_i$, and $B_2(Y_1)$ the set of boxes of $Y_1$ for which $h_{Y_1}((i,j))\ne (Y'_1)_i-(Y'_2)_i$. We have then
\begin{equation}\label{threeproducts}
\begin{aligned}
&\prod_{(i,j)\in Y_1}\frac{h_{Y_1}((i,j))}{\alpha_1-\alpha_2-h_{Y_1}((i,j))+(Y'_1)_i-(Y'_2)_i}=\\
&\prod_{(i,j)\in B_1(Y_1)}\frac{h_{Y_1}((i,j))}{\alpha_1-\alpha_2-h_{Y_1}((i,j))+(Y'_1)_i-(Y'_2)_i}\prod_{(i,j)\in B_2(Y_1)}\frac{h_{Y_1}((i,j))}{\alpha_1-\alpha_2-h_{Y_1}((i,j))+(Y'_1)_i-(Y'_2)_i}=\\
&\prod_{(i,j)\in B_1(Y_1)}\frac{h_{Y_1}((i,j))}{\alpha_1-\alpha_2}\prod_{(i,j)\in B_2(Y_1)}\frac{h_{Y_1}((i,j))}{\alpha_1-\alpha_2-h_{Y_1}((i,j))+(Y'_1)_i-(Y'_2)_i}\prod_{(i,j)\in B_2(Y_1)}\frac{-h_{Y_1}((i,j))+(Y'_1)_i-(Y'_2)_i}{-h_{Y_1}((i,j))+(Y'_1)_i-(Y'_2)_i}=\\
&\prod_{(i,j)\in B_1(Y_1)}\frac{(Y'_1)_i-(Y'_2)_i}{\alpha_1-\alpha_2}\prod_{(i,j)\in B_2(Y_1)}\frac{-h_{Y_1}((i,j))+(Y'_1)_i-(Y'_2)_i}{\alpha_1-\alpha_2-h_{Y_1}((i,j))+(Y'_1)_i-(Y'_2)_i}\prod_{(i,j)\in B_2(Y_1)}\frac{h_{Y_1}((i,j))}{-h_{Y_1}((i,j))+(Y'_1)_i-(Y'_2)_i}.
\end{aligned}
\end{equation}
We will consider the three products in the last line one by one.

\begin{lemma}\label{lemma3.7}
The first product in the last line of \eqref{threeproducts} can be bounded as follows
\begin{equation}\label{primopezzo}
\prod_{(i,j)\in B_1(Y_1)}\bigg|\frac{(Y'_1)_i-(Y'_2)_i}{\alpha_1-\alpha_2}\bigg|\le\frac{2^k}{\min\{1,|\alpha_1-\alpha_2|\}^k}.
\end{equation}
\end{lemma}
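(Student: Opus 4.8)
The plan is to use the defining relation of $B_1(Y_1)$ to collapse each denominator, and then estimate the product of numerators and the product of $|\alpha_1-\alpha_2|^{-1}$ factors separately. By definition every box $(i,j)\in B_1(Y_1)$ satisfies $h_{Y_1}((i,j))=(Y'_1)_i-(Y'_2)_i$, so the shifted denominator degenerates, $\alpha_1-\alpha_2-h_{Y_1}((i,j))+(Y'_1)_i-(Y'_2)_i=\alpha_1-\alpha_2$, and the quantity to be bounded is
\begin{equation}
\prod_{(i,j)\in B_1(Y_1)}\bigg|\frac{(Y'_1)_i-(Y'_2)_i}{\alpha_1-\alpha_2}\bigg|=\frac{\prod_{(i,j)\in B_1(Y_1)}\big|(Y'_1)_i-(Y'_2)_i\big|}{|\alpha_1-\alpha_2|^{|B_1(Y_1)|}}.
\end{equation}
I would bound the numerator by $2^k$ and the remaining factor by $\min\{1,|\alpha_1-\alpha_2|\}^{-k}$.

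The crucial structural step is that $B_1(Y_1)$ contains at most one box for each value of the first index $i$. Indeed, for fixed $i$ the right-hand side $(Y'_1)_i-(Y'_2)_i$ is constant, whereas $h_{Y_1}((i,j))$ is strictly monotone in $j$, since exactly one of the arm- and leg-lengths of the box strictly decreases as $j$ grows while the other is non-increasing; hence the defining equation has at most one solution $j$. Writing $I$ for the set of first indices occurring in $B_1(Y_1)$ and $d_i:=(Y'_1)_i-(Y'_2)_i$ for the value at the corresponding box, the constraint $d_i=h_{Y_1}((i,j))\ge 1$ forces each $d_i$ to be a positive integer, so the numerator equals $\prod_{i\in I}d_i$.

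Two elementary estimates then close the argument. Since $(Y'_2)_i\ge 0$ one has $d_i\le (Y'_1)_i$, and because the column lengths of $Y_1$ sum to its number of boxes, $\sum_{i\in I}d_i\le\sum_i (Y'_1)_i=|Y_1|\le k$; combining this with the inequality $d\le 2^{d-1}$, valid for every positive integer $d$, gives
\begin{equation}
\prod_{i\in I}d_i\le\prod_{i\in I}2^{d_i-1}=2^{\sum_{i\in I}(d_i-1)}\le 2^{\sum_{i\in I}d_i}\le 2^k.
\end{equation}
For the denominator, $|B_1(Y_1)|=|I|\le k$ implies $|\alpha_1-\alpha_2|^{-|B_1(Y_1)|}\le 1$ if $|\alpha_1-\alpha_2|\ge 1$ and $|\alpha_1-\alpha_2|^{-|B_1(Y_1)|}\le|\alpha_1-\alpha_2|^{-k}$ if $|\alpha_1-\alpha_2|<1$, that is $|\alpha_1-\alpha_2|^{-|B_1(Y_1)|}\le\min\{1,|\alpha_1-\alpha_2|\}^{-k}$ in both cases. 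Multiplying the two bounds yields \eqref{primopezzo}.

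The only non-routine ingredient is the injectivity observation: it reduces the product over the a priori uncontrolled set $B_1(Y_1)$ to a product over distinct column-length differences whose sum is at most $|Y_1|\le k$, after which the bound $d\le 2^{d-1}$ and the trivial estimate $|B_1(Y_1)|\le k$ do the rest.
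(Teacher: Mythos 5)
Your proof is correct. The structural core coincides with the paper's argument: you both observe that for fixed row index $i$ the hook length $h_{Y_1}((i,j))$ is strictly decreasing in $j$ while $(Y'_1)_i-(Y'_2)_i$ is constant, so $B_1(Y_1)$ contains at most one box per row, and you both then treat the numerator product and the power of $|\alpha_1-\alpha_2|^{-1}$ separately, with the same case split $|\alpha_1-\alpha_2|\gtrless 1$ for the denominator. Where you genuinely diverge is in how the numerator is pushed to $2^k$. The paper enlarges the product to run over \emph{all} rows of $Y_1$, replacing each factor by $\max\{1,|(Y'_1)_i-(Y'_2)_i|\}$ to avoid vanishing factors, and then closes with the arithmetic--geometric mean inequality followed by $\bigl(\tfrac{k}{(Y_1)_1}\bigr)^{(Y_1)_1}\le\binom{k}{(Y_1)_1}\le 2^k$. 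You instead exploit a piece of structure the paper leaves unused: on $B_1(Y_1)$ each difference $d_i=(Y'_1)_i-(Y'_2)_i$ equals a hook length, hence is a positive integer, so $d_i\le (Y'_1)_i$, $\sum_{i\in I}d_i\le |Y_1|\le k$, and the elementary inequality $d\le 2^{d-1}$ gives $\prod_{i\in I}d_i\le 2^{|Y_1|}\le 2^k$ directly. Your route avoids the max-with-$1$ regularisation and the AM--GM/binomial step entirely, and in fact yields the marginally sharper intermediate bound $2^{|Y_1|}$; the paper's route, on the other hand, does not rely on integrality of the $d_i$ and so would survive in settings where the factors are only bounded real numbers. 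Both are valid, and either suffices for the way the lemma is used downstream.
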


\begin{proof}
See Appendix \ref{b1}.
\end{proof}

\begin{remark}
Since for every fundamental hypermultiplet there is an identical product over the boxes in $B_1(Y_2)$, we notice that, for a given index $i$ of the box, only one of the two equalities $h_{Y_1}((i,j))=(Y'_1)_i-(Y'_2)_i$ and $h_{Y_2}((i,j))=(Y'_2)_i-(Y'_1)_i$ can be satisfied, since the left hand sides are always positive, but the right hand sides are one the opposite of the other. Therefore, for a fixed index $i$, only one of the factors $$\frac{(Y'_1)_i-(Y'_2)_i}{\alpha_1-\alpha_2}\ \ \text{and}\ \ \frac{(Y'_2)_i-(Y'_1)_i}{\alpha_1-\alpha_2}$$ appears in the product over the boxes in $B_1(Y_1)$ and over the boxes in $B_1(Y_2)$, so the previous estimate actually bounds the product of the two products of the first kind (the one for $Y_1$ and the one for $Y_2$).
\end{remark}

We now pass to the second product in \eqref{threeproducts}. 
\begin{lemma}
The second product in the last line of \eqref{threeproducts} can be bounded as follows
\begin{equation}\label{secondopezzo}
\prod_{(i,j)\in B_2(Y_1)}\bigg|\frac{-h_{Y_1}((i,j))+(Y'_1)_i-(Y'_2)_i}{\alpha_1-\alpha_2-h_{Y_1}((i,j))+(Y'_1)_i-(Y'_2)_i}\bigg|\le\biggl( 1+\frac{|\alpha_1-\alpha_2|}{C_{12}(\vec{\alpha})}\biggr)^{|Y_1|},
\end{equation}
where $$C_{12}(\vec{\alpha})=\min_{n\in\mathbb{Z}}|\alpha_1-\alpha_2-n|.$$
\end{lemma}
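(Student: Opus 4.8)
The plan is to estimate the product factor by factor, exploiting the fact that the integer appearing in the numerator forces the denominator away from zero by a fixed lattice distance. First I would introduce the shorthand $w=\alpha_1-\alpha_2$ and, for each box $(i,j)$, set $n_{ij}=-h_{Y_1}((i,j))+(Y'_1)_i-(Y'_2)_i$. Since the hook length $h_{Y_1}((i,j))$ and the conjugate-partition entries $(Y'_1)_i,(Y'_2)_i$ are all integers, we have $n_{ij}\in\mathbb{Z}$, and by the very definition of $B_2(Y_1)$ one has $n_{ij}\ne 0$ on this set. With this notation each factor is exactly
\begin{equation}
\frac{-h_{Y_1}((i,j))+(Y'_1)_i-(Y'_2)_i}{\alpha_1-\alpha_2-h_{Y_1}((i,j))+(Y'_1)_i-(Y'_2)_i}=\frac{n_{ij}}{w+n_{ij}}.
\end{equation}

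Next I would bound a single factor. Writing $\frac{n_{ij}}{w+n_{ij}}=1-\frac{w}{w+n_{ij}}$ and using the triangle inequality gives
\begin{equation}
\bigg|\frac{n_{ij}}{w+n_{ij}}\bigg|\le 1+\frac{|w|}{|w+n_{ij}|}=1+\frac{|\alpha_1-\alpha_2|}{|\alpha_1-\alpha_2+n_{ij}|}.
\end{equation}
The crucial point is then to bound the denominator uniformly from below. Because $n_{ij}$ is an integer, $|\alpha_1-\alpha_2+n_{ij}|\ge \min_{n\in\mathbb{Z}}|\alpha_1-\alpha_2-n|=C_{12}(\vec{\alpha})$, which is strictly positive thanks to the genericity hypothesis $\alpha_1-\alpha_2\notin\mathbb{Z}$. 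Hence every factor is at most $1+\frac{|\alpha_1-\alpha_2|}{C_{12}(\vec{\alpha})}$, \emph{independently} of the box and of the shapes of $Y_1,Y_2$.

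Finally I would take the product over $B_2(Y_1)$. Since $1+\frac{|\alpha_1-\alpha_2|}{C_{12}(\vec{\alpha})}\ge 1$ and $B_2(Y_1)$ is a subset of the boxes of $Y_1$, so $|B_2(Y_1)|\le |Y_1|$, the product is bounded by $\bigl(1+\frac{|\alpha_1-\alpha_2|}{C_{12}(\vec{\alpha})}\bigr)^{|Y_1|}$, which is precisely \eqref{secondopezzo}. The only genuinely delicate point—and the one on which the whole argument rests—is the integrality of $n_{ij}$: it is exactly this that lets the denominator $\alpha_1-\alpha_2+n_{ij}$ be pushed off zero by the lattice distance $C_{12}(\vec{\alpha})$ uniformly, so that the per-box bound does not degrade as the diagrams grow. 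Everything else is an elementary factor-by-factor estimate.
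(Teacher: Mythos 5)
Your proof is correct and follows essentially the same route as the paper: both rewrite each factor as $1-\frac{\alpha_1-\alpha_2}{\alpha_1-\alpha_2-h_{Y_1}((i,j))+(Y'_1)_i-(Y'_2)_i}$, use the integrality of the shift to bound the denominator below by $C_{12}(\vec{\alpha})$, and then bound the number of factors by $|Y_1|$. Nothing is missing.
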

\begin{proof}
We have
\begin{equation}
\begin{aligned}
&\prod_{(i,j)\in B_2(Y_1)}\frac{-h_{Y_1}((i,j))+(Y'_1)_i-(Y'_2)_i}{\alpha_1-\alpha_2-h_{Y_1}((i,j))+(Y'_1)_i-(Y'_2)_i}=\prod_{(i,j)\in B_2(Y_1)}\frac{\alpha_1-\alpha_2-h_{Y_1}((i,j))+(Y'_1)_i-(Y'_2)_i-(\alpha_1-\alpha_2)}{\alpha_1-\alpha_2-h_{Y_1}((i,j))+(Y'_1)_i-(Y'_2)_i}=\\
&=\prod_{(i,j)\in B_2(Y_1)}\biggl(1-\frac{\alpha_1-\alpha_2}{\alpha_1-\alpha_2-h_{Y_1}((i,j))+(Y'_1)_i-(Y'_2)_i}\biggr),
\end{aligned}
\end{equation}
and so
\begin{equation}
\prod_{(i,j)\in B_2(Y_1)}\bigg|\frac{-h_{Y_1}((i,j))+(Y'_1)_i-(Y'_2)_i}{\alpha_1-\alpha_2-h_{Y_1}((i,j))+(Y'_1)_i-(Y'_2)_i}\bigg|\le\biggl( 1+\frac{|\alpha_1-\alpha_2|}{C_{12}(\vec{\alpha})}\biggr)^{|Y_1|},
\end{equation}
where $$C_{12}(\vec{\alpha})=\min_{n\in\mathbb{Z}}|\alpha_1-\alpha_2-n|.$$
\end{proof}

We finally bound the third product.
\begin{lemma}\label{lemma3.9}
The third product in the last line of \eqref{threeproducts} can be bounded as follows
\begin{equation}\label{terzopezzo}
\prod_{(i,j)\in B_2(Y_1)}\bigg|\frac{h_{Y_1}((i,j))}{-h_{Y_1}((i,j))+(Y'_1)_i-(Y'_2)_i}\bigg|\le 8^{|Y_1|}.
\end{equation}
\end{lemma}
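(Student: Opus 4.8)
The plan is to regroup the product over $B_2(Y_1)$ according to the columns of $Y_1$ and to bound the contribution of each column separately. The key observation is that the integer $(Y'_1)_i-(Y'_2)_i$ appearing in every factor of the third product in the last line of \eqref{threeproducts} depends only on the first index $i$ of the box, that is, only on the column to which the box belongs. So I would fix a column $i$ of height $c_i=(Y'_1)_i$, set $d_i=(Y'_1)_i-(Y'_2)_i$, and record two facts: first, $d_i\le c_i$ because $(Y'_2)_i\ge 0$; second, as the box runs down the column its hook length $h_{Y_1}((i,j))$ is a strictly decreasing positive integer (the leg drops by one at each step while the arm is non-increasing). Hence the hook lengths occurring in a fixed column are $c_i$ distinct positive integers, and the single box with $h_{Y_1}((i,j))=d_i$, if present, is exactly the one lying in $B_1(Y_1)$ and thus absent from the product over $B_2(Y_1)$, so no denominator vanishes.

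The case $d_i\le 0$ is immediate, since then every factor satisfies $\big|\tfrac{h}{h-d_i}\big|=\tfrac{h}{h-d_i}\le 1$, and the column contributes at most $1$. For $d_i\ge 1$ I would split the hooks of the column into those exceeding $d_i$ and those below $d_i$. Writing a hook $h>d_i$ as $h=d_i+e$ with $e\ge 1$, the factor equals $\tfrac{d_i+e}{e}$, which is decreasing in $e$; listing the $a$ values of $e$ in increasing order $e_1<\dots<e_a$ one has $e_k\ge k$, so term by term $\tfrac{d_i+e_k}{e_k}\le \tfrac{d_i+k}{k}$ and the product over this part is at most $\prod_{k=1}^{a}\tfrac{d_i+k}{k}=\binom{d_i+a}{a}$. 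Symmetrically, writing a hook $h<d_i$ as $h=d_i-f$ with $1\le f\le d_i-1$, the factor $\tfrac{d_i-f}{f}$ is decreasing in $f$, and the same term-by-term comparison $f_k\ge k$ bounds the product over this part by $\prod_{k=1}^{b}\tfrac{d_i-k}{k}=\binom{d_i-1}{b}$, where $b$ is the number of such boxes.

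To conclude I would use $\binom{n}{k}\le 2^{n}$ together with the elementary bounds $a\le c_i$, $b\le c_i$ and $d_i\le c_i$, giving a column contribution of at most $\binom{d_i+a}{a}\binom{d_i-1}{b}\le 2^{d_i+a}\,2^{d_i-1}=2^{2d_i+a-1}\le 2^{3c_i}=8^{c_i}$. Multiplying over all columns and using $\sum_i c_i=|Y_1|$ then reproduces the claimed bound $8^{|Y_1|}$. The only point requiring real care is the strict monotonicity and distinctness of the hook lengths along a single column, since this is precisely what licenses the comparisons $e_k\ge k$ and $f_k\ge k$ and hence the passage to the clean binomial estimates; everything after that is elementary.
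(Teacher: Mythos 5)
Your proposal is correct and follows essentially the same route as the paper's proof: group the boxes of $B_2(Y_1)$ by the fixed first index $i$, use the strict monotonicity (hence distinctness) of the hook lengths within that group to compare the sub-products with $h>d_i$ and $h<d_i$ term by term against $\binom{d_i+a}{a}$ and $\binom{d_i-1}{b}$, and then bound the binomials by powers of $2$ and sum over $i$ using $\sum_i (Y'_1)_i=|Y_1|$. The only discrepancy is cosmetic: in the paper's conventions the first index $i$ labels \emph{rows} and $(Y'_1)_i$ is a row length, so your ``columns'' are the paper's rows, but this relabeling does not affect any step of the argument.
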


\begin{proof}
See Appendix \ref{b2}.
\end{proof}

\begin{remark}
Referring to the proof in the appendix and considering the analogous product over the boxes in $Y_2$, we would have to bound the product
\begin{equation}
\prod_{(i,j)\in B_2(Y_2)\cap i\text{th\ row\ of\ }Y_2}\bigg|\frac{h_{Y_2}((i,j))}{h_{Y_2}((i,j))-[(Y'_2)_i-(Y'_1)_i]}\bigg|.
\end{equation}
But then, for a fixed $i$, we have either $(Y'_1)_i-(Y'_2)_i=0$, $(Y'_1)_i-(Y'_2)_i> 0$ or $(Y'_2)_i-(Y'_1)_i>0$. If we are in the first case, both products over the boxes in the $i$th row of $Y_1$ and over the $i$th row of $Y_2$ are bounded by 1. If we are in the second case, the product over the boxes in the $i$th row of $Y_2$ is bounded by 1, and, if we are in the third case,  the product over the boxes in the $i$th row of $Y_1$ is bounded by 1. Therefore, for every $i$, only one product has to be considered to give an upper bound. Hence, the previous bound, with $|Y_1|$ replaced by $k$, is a bound for the product of the two products of the third kind (the one for $Y_1$ and the one for $Y_2$).
\end{remark}

We are now finally ready to prove 
Lemma \ref{lemmaU2}.
\\

{\it Proof of Lemma \ref{lemmaU2}} :
Putting together \eqref{primopezzo}, \eqref{secondopezzo} and \eqref{terzopezzo}, and using the remarks after the previous lemmas, we conclude that
\begin{equation}\label{usareperquiver}
\begin{aligned}
&\prod_{(i,j)\in Y_1}\bigg|\frac{h_{Y_1}((i,j))}{\alpha_1-\alpha_2-h_{Y_1}((i,j))+(Y'_1)_i-(Y'_2)_i}\bigg|\prod_{(i,j)\in Y_2}\bigg|\frac{h_{Y_2}((i,j))}{\alpha_1-\alpha_2+h_{Y_2}((i,j))-(Y'_2)_i+(Y'_1)_i}\bigg|\le\\
&\le\left(\frac{16}{\min\{1,|\alpha_1-\alpha_2|\}}\left( 1+\frac{|\alpha_1-\alpha_2|}{C_{12}(\vec{\alpha})}\right)\right)^{k}.
\end{aligned}
\end{equation}
To conclude the proof of lemma \ref{lemmaU2}, it only remains to include the bounds of the products of the form analyzed in lemma \ref{lemmaU1}, both for $Y_1$ and $Y_2$.
Since the inequality
\begin{equation}
\prod_{(i,j)\in Y_l}\frac{z_l+i-j}{h_{Y_l}((i,j))}\le f(z_l,k)
\end{equation}
holds for both $l=1,2$, we can write the following estimate
\begin{equation}
\begin{aligned}
&\prod_{(i,j)\in Y_1}\bigg|\frac{z_1+i-j}{\alpha_1-\alpha_2-h_{Y_1}((i,j))+(Y'_1)_i-(Y'_2)_i}\bigg|\prod_{(i,j)\in Y_2}\bigg|\frac{z_2+i-j}{\alpha_1-\alpha_2+h_{Y_2}((i,j))-(Y'_2)_i+(Y'_1)_i}\bigg|\le\\
&\le\left(\frac{16}{\min\{1,|\alpha_1-\alpha_2|\}}\left( 1+\frac{|\alpha_1-\alpha_2|}{C_{12}(\vec{\alpha})}\right)\right)^{k}f(z_1,k)f(z_2,k).
\end{aligned}
\end{equation}
\hfill $\square$

\subsection{Proof of Theorem \ref{teoremafondamentale}}

In the $N_f=2N$ case, using Lemma \ref{lemmaU1} and Lemma \ref{lemmaU2}, we can arrange the products in the numerator and denominator of the coefficients of the instanton partition function \eqref{functionfundamental} to conclude that
\begin{equation}\label{Nf=2N}
|Z_{\mathrm{inst}}^{U(N)\ N_f=2N}|\le\sum_{k\ge 1}|q|^kp_N(k)\left[\prod_{i=1}^N\prod_{r=1}^{N_f=2N}f(\alpha_i+\mu_r,k)\right]\prod_{\{(i,j)\in\{1,\dots,N\}^2\ |\ i\ne j\}}g_{ij}(\vec{\alpha})^k,
\end{equation}
where $p_N(k)$ denotes the number of $N$-coloured partitions of the integer $k$. If $p(k)$ denotes the number of partitions of $k$, we can bound $p_N(k)$ with $p(k)^{N+1}$,
since the former can be seen as the number of partitions of $N$ integers whose sum equals $k$, and so any of these $N$ numbers has to be smaller than $k$. Moreover, we can use the following estimate, known as Ramanujan-Hardy formula \cite{ramanujan}:
\begin{prop} 
If $p(k)$ is the number of partitions of the natural number $k$, the following holds:
\begin{equation}
p(k)\sim\frac{1}{4\sqrt{3}k}\exp(\pi\sqrt{\frac{2k}{3}}), \ \ \text{for\ } k\to\infty.
\end{equation}
\end{prop}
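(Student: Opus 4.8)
The plan is to prove this classical Hardy--Ramanujan asymptotic by the circle method in its saddle-point form, taking as input the generating function
\[
F(q)=\sum_{k\ge 0}p(k)q^k=\prod_{n=1}^{\infty}\frac{1}{1-q^n},
\]
together with Cauchy's formula, which for any radius $0<r<1$ gives
\[
p(k)=\frac{1}{2\pi i}\oint_{|q|=r}\frac{F(q)}{q^{k+1}}\,dq=\frac{1}{2\pi}\int_{-\pi}^{\pi}F(re^{i\theta})\,r^{-k}e^{-ik\theta}\,d\theta .
\]
Since $F$ has its dominant singularity at $q=1$, the whole mass of the integral will concentrate near $\theta=0$; the strategy is to extract the leading term there by a Gaussian (Laplace) approximation and to show that the remainder of the circle contributes negligibly.

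First I would pin down the behaviour of $F$ at the cusp. Writing $q=e^{-t}$, $t\to 0^+$, and using $\log F(e^{-t})=\sum_{m\ge 1}\frac{1}{m}\,(e^{mt}-1)^{-1}$ together with the Mellin representation $(e^{x}-1)^{-1}=\frac{1}{2\pi i}\int_{(c)}\Gamma(s)\zeta(s)x^{-s}\,ds$ one gets
\[
\log F(e^{-t})=\frac{1}{2\pi i}\int_{(c)}\Gamma(s)\zeta(s)\zeta(s+1)\,t^{-s}\,ds .
\]
Shifting the contour leftwards and collecting the residues at $s=1$ (simple pole of $\zeta(s)$) and at $s=0$ (a double pole of $\Gamma(s)\zeta(s+1)$) yields
\[
\log F(e^{-t})=\frac{\pi^{2}}{6t}+\frac{1}{2}\log\frac{t}{2\pi}+O(t)\qquad (t\to 0^+),
\]
which is the same content as the modular transformation $\eta(-1/\tau)=\sqrt{-i\tau}\,\eta(\tau)$ of the Dedekind eta function; either derivation is acceptable.

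Next comes the saddle point. Near $\theta=0$ set $q=e^{-u}$ with $u=t_0-i\theta$, so that the integrand becomes $\sqrt{u/2\pi}\,\exp(h(u))$ with $h(u)=\frac{\pi^{2}}{6u}+ku$ (the factor $r^{-k}e^{-ik\theta}$ being exactly $e^{ku}$). I would fix $r=e^{-t_0}$ at the real saddle $h'(t_0)=0$, i.e.
\[
t_0=\frac{\pi}{\sqrt{6k}},\qquad h(t_0)=\pi\sqrt{\tfrac{2k}{3}},\qquad h''(t_0)=\frac{2\sqrt{6}\,k^{3/2}}{\pi}.
\]
The quadratic expansion $h(t_0-i\theta)\approx h(t_0)-\tfrac12 h''(t_0)\theta^{2}$ turns the major-arc integral into a Gaussian,
\[
p(k)\approx\frac{1}{2\pi}\sqrt{\frac{t_0}{2\pi}}\,e^{h(t_0)}\int_{-\infty}^{\infty}e^{-\frac12 h''(t_0)\theta^{2}}\,d\theta=\frac{1}{2\pi}\sqrt{\frac{t_0}{h''(t_0)}}\,e^{h(t_0)},
\]
and substituting these three quantities collapses the prefactor to $\sqrt{t_0/h''(t_0)}=\pi/(2\sqrt{3}\,k)$, producing exactly $p(k)\sim \frac{1}{4\sqrt{3}\,k}\exp(\pi\sqrt{2k/3})$.

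The \emph{main obstacle} is the rigorous minor-arc estimate, i.e. showing that everything off a small arc around $\theta=0$ is negligible and that the Laplace step above is legitimate. Here I would use the elementary lower bound $\log F(r)-\log|F(re^{i\theta})|\ge\sum_{n\ge1}r^{n}(1-\cos n\theta)$ (the $m=1$ part of the double series), which for $r=e^{-t_0}$ is, to leading order, $\theta^{2}/[t_0(t_0^{2}+\theta^{2})]$; this is $O(1)$ exactly on the Gaussian scale $\theta\sim t_0^{3/2}$ and grows to order $1/t_0\sim\sqrt{k}$ once $\theta$ is bounded away from $0$. Choosing the major arc $|\theta|\le\delta$ with $\delta=t_0^{3/2}\log k$ makes the cubic remainder $h'''(t_0)\delta^{3}=O(k^{-1/4}(\log k)^{3})$ negligible there, while on $|\theta|\ge\delta$ the suppression exceeds $(\log k)^{2}$, so the minor arc is smaller than the main term by a factor $e^{-(\log k)^{2}}\to 0$. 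The one point needing care is uniformity across the whole minor arc, in particular near secondary roots of unity such as $q=-1$, where $|F|$ may grow like $e^{\pi^{2}/24t_0}$; but there $r^{-k}|F|\le e^{5\pi^{2}/24t_0}$, which is still exponentially smaller than $e^{h(t_0)}=e^{8\pi^{2}/24t_0}$, so these regions are harmless. Assembling the major-arc Gaussian with the negligible minor arc then delivers the stated asymptotic equivalence.
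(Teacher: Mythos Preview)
The paper does not actually prove this proposition; it simply states the Hardy--Ramanujan formula as a classical result and cites the original reference. Your proposal therefore goes well beyond what the paper does: you supply the standard circle-method (saddle-point) derivation, and the sketch is correct. The cusp expansion $\log F(e^{-t})=\pi^{2}/(6t)+\tfrac12\log(t/2\pi)+O(t)$, the saddle location $t_0=\pi/\sqrt{6k}$ with $h(t_0)=\pi\sqrt{2k/3}$ and $h''(t_0)=2\sqrt{6}\,k^{3/2}/\pi$, and the resulting prefactor $\frac{1}{2\pi}\sqrt{t_0/h''(t_0)}=\frac{1}{4\sqrt{3}\,k}$ are all computed correctly, and your minor-arc argument (the $m=1$ lower bound giving suppression $\theta^{2}/[t_0(t_0^{2}+\theta^{2})]$, together with the secondary-cusp comparison $e^{\pi^{2}/24t_0}\ll e^{\pi^{2}/6t_0}$) is the right mechanism. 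As a full proof one would still have to tighten the Laplace step into an equality with explicit error bounds and make the minor-arc estimate uniform on the whole circle, but these are routine refinements of exactly the outline you give; nothing is missing at the level of ideas.
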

Therefore, applying the root test to \eqref{Nf=2N}, and using that
\begin{equation}
\begin{aligned}
&\lim_{k\to\infty}p(k)^{1/k}=\lim_{k\to\infty}\left(\frac{1}{4\sqrt{3}k}\exp(\pi\sqrt{\frac{2k}{3}})\right)^{1/k}=1\\
&\lim_{k\to\infty}\left(f(\alpha_i+\mu_r,k)\right)^{1/k}=1\ \ \forall i=1,\dots,N\ \forall r=1,\dots,N_f,
\end{aligned}
\end{equation}
we conclude that the radius of convergence of the right hand side is given by
\begin{equation}
\prod_{\{(i,j)\in\{1,\dots,N\}^2\ |\ i\ne j\}}\left[g_{ij}(\vec{\alpha})\right]^{-1}=\prod_{\{(i,j)\in\{1,\dots,N\}^2\ |\ i\ne j\}}\left[\frac{16}{\min\{1,|\alpha_i-\alpha_j|\}}\left( 1+\frac{|\alpha_i-\alpha_j|}{C_{ij}(\vec{\alpha})}\right)\right]^{-1}.
\end{equation}
Therefore, we can conclude the first part of the theorem, that is the fact that the instanton partition function of the $U(N)$ gauge theory with $N_f=2N$ fundamental multiplets with the Omega background $\epsilon_1+\epsilon_2=0$ is absolutely convergent at least for
\begin{equation}\label{radiusfund}
|q|<\prod_{\{(i,j)\in\{1,\dots,N\}^2\ |\ i\ne j\}}\left[\frac{16}{\min\{1,|\alpha_i-\alpha_j|\}}\left( 1+\frac{|\alpha_i-\alpha_j|}{C_{ij}(\vec{\alpha})}\right)\right]^{-1}.
\end{equation}
The case $N_f<2N$ can now be easily proved by noticing that the 
 decoupling limit of fundamental hypermultiplets is achieved with the double scaling limit in which $q\to 0$ and one of the masses, say $m_1$, goes to infinity $m_1\to \infty$, in such a way that $\tilde{\Lambda}=qm_1$ remains finite. Indeed, from the expression \eqref{functionfundamentalperdecoupling}, one can see that in this limit the function becomes
\begin{equation}
\begin{aligned}
Z_{\mathrm{inst}}^{U(N),\  N_f}=\sum_{k\ge 0}\left(qm_1\right)^k\sum_{|\vec{Y}|=k}\prod_{i,j=1}^N&\prod_{(m,n)\in Y_i}\frac{1}{a_i-a_j-\epsilon_1L_{Y_j}((m,n))+\epsilon_2\left(A_{Y_i}((m,n))+1\right)}\\
&\prod_{(m,n)\in Y_j}\frac{1}{a_i-a_j+\epsilon_1\left(L_{Y_i}((m,n))+1\right)-\epsilon_2A_{Y_j}((m,n))}\\
\prod_{i=1}^N&\prod_{(m,n)\in Y_i}\left[1+\frac{a_i+\epsilon_1(m-1)+\epsilon_2(n-1)}{m_1}\right]\\
\prod_{i=1}^N&\prod_{(m,n)\in Y_i}\prod_{r=2}^{N_f}\left[a_i+\epsilon_1(m-1)+\epsilon_2(n-1)+m_r\right]\to\\
\sum_{k\ge 0}\tilde{\Lambda}^k\sum_{|\vec{Y}|=k}\prod_{i,j=1}^N&\prod_{(m,n)\in Y_i}\frac{1}{a_i-a_j-\epsilon_1L_{Y_j}((m,n))+\epsilon_2\left(A_{Y_i}((m,n))+1\right)}\\
&\prod_{(m,n)\in Y_j}\frac{1}{a_i-a_j+\epsilon_1\left(L_{Y_i}((m,n))+1\right)-\epsilon_2A_{Y_j}((m,n))}\\
\prod_{i=1}^N&\prod_{(m,n)\in Y_i}\prod_{r=2}^{N_f}\left[a_i+\epsilon_1(m-1)+\epsilon_2(n-1)+m_r\right],
\end{aligned}
\end{equation}
which is the instanton partition function with one fundamental hypermultiplet less. The radius of convergence of this latter series in $\tilde\Lambda$ can be obtained by multiplying \eqref{radiusfund} by $m_1$ and letting $m_1\to\infty$, which means that the series is absolutely convergent for any $\tilde\Lambda$.
The proof for lower $N_f$ is obtained by repeated application of the above argument.
\hfill$\square$

\section{On the convergence of Painlev\'e $\tau$-functions}

The Kiev formula conjectured in \cite{Gamayun:2013auu}
states that Painlev\'e $\tau$-functions can be expressed as discrete Fourier transforms of 
suitable full Nekrasov partition functions. This is the core issue of Painlev\'e/gauge theory correspondence \cite{Bonelli:2016qwg}.
Concretely, according to the Kiev formula, the PVI $\tau$-function is related to the Nekrasov function 
as follows
\begin{equation}\label{kiev}
\tau^{\rm VI}(q;\alpha,s)=q^{-\theta_0^2-\theta_t^2}(1-q)^{\theta_1\theta_t}\sum_{n\in{\mathbb Z}}s^n q^{(\alpha+n)^2}
Z_{\rm 1loop}^{U(2)\ N_f=4}(\alpha+n)Z_{\rm inst}^{U(2)\ N_f=4}(q,\alpha+n),
\end{equation}
where 
\begin{equation}\label{1loop}
Z_{\rm 1loop}^{U(2)\ N_f=4}(\alpha) =\frac{\prod_{\sigma,\sigma'=\pm}G(1+\theta_t+\sigma \theta_0+\sigma'(\alpha+n))G(1+\theta_1+\sigma \theta_{\infty}+\sigma'(\alpha+n))}{G(1+2(\alpha+n))G(1-2(\alpha+n))}
\end{equation}
is the one loop contribution to the full partition function written in terms of Barnes $G$ functions, and the re-scaled masses \eqref{mu}
are related to the $\theta$-parameters by 
$$\mu_1=\theta_1-\theta_\infty,\quad \mu_2=\theta_0-\theta_t,\quad\mu_3=\theta_0+\theta_t,\quad\mu_4=\theta_1+\theta_\infty. $$
The $\tau$-function \eqref{kiev} is the one associated to the isomonodromic deformation problem for the Riemann sphere with four regular singularities, with $\theta$s parameterizing the associated monodromies. 

In order to study the convergence properties of the series
\eqref{kiev}, we can make use of the results obtained in the previous Section 3 together with the asymptotic behaviour of the one-loop coefficients. 
The latter can be determined from the 
reflection formula:
\begin{equation}
G(1-z)=\frac{G(1+z)}{(2\pi)^{z}}\exp(\int_0^z\pi z'\cot(\pi z')\mathrm{d}z')
\end{equation}
and the asymptotic formula 
for $z\to\infty$ 
\cite{choi}
\begin{equation}
\log(G(1+a+z))=\frac{z+a}{2}\log(2\pi)
+\zeta'(-1)
-\frac{3z^2}{4}-az+\left(\frac{z^2}{2}-\frac{1}{12}+\frac{a^2}{2}+az\right)\log(z)+\mathcal{O}\left(\frac{1}{z}\right),
\end{equation}
which holds for all $a\in\mathbb{C}$ and where 
$\zeta'(-1)$ is a known $\zeta$-constant.
From this, we have that, for $a\in\mathbb{C}$ and $\mathbb{Z}\ni n\to\infty$,
\begin{equation}
\log(G(1+a+n))=\frac{n^2}{2}\log(n)-\frac{3n^2}{4}+\mathcal{O}(n\log(n)).
\end{equation}
To evaluate the $n\to \infty$ limit of the other set of Barnes functions, we note that the integral in the reflection formula is given by
\begin{equation}\label{integralasympt}
\int_0^z\pi z'\cot(\pi z')\mathrm{d}z'=\frac{\pi z\log(1-\exp(2\pi i z))-\frac{i}{2}\left(\pi^2z^2+\mathrm{Li}_2(\exp(2\pi i z))\right)}{\pi}.
\end{equation}
Since the asymptotic of the above integral is given by $-\frac{i}{2}\pi n^2+\mathcal{O}(n)$, 
we have that, 
for every $b\in\mathbb{C}$ and for $\mathbb{Z}\ni n\to\infty$,
\begin{equation}
\log(G(1-b-n))=\frac{n^2}{2}\log(n)-\frac{3n^2}{4}-\frac{i}{2}\pi n^2+\mathcal{O}(n\log(n)).
\end{equation}
Therefore, neglecting terms of order $n\log(n)$, which are subleading, the one-loop coefficient in the limit $\mathbb{Z}\ni n\to\infty$ reads
\begin{equation}\label{pippo}
\begin{aligned}
&\frac{\prod_{\sigma,\sigma'=\pm}G(1+\theta_t+\sigma \theta_0+\sigma'(\alpha+n))G(1+\theta_1+\sigma \theta_{\infty}+\sigma'(\alpha+n))}{G(1+2(\alpha+n))G(1-2(\alpha+n))}\quad \to\\ \to \quad 
&\frac{\left(n^{\frac{n^2}{2}}\exp(-\frac{3n^2}{4})\right)^8\left(\exp(-\frac{i\pi n^2}{2})\right)^4}{\left((2n)^{\frac{(2n)^2}{2}}\exp(-\frac{3(2n)^2}{4})\right)^2\left(\exp(-\frac{i\pi (2n)^2}{2})\right)}=\frac{1}{2^{4n^2}}.
\end{aligned}
\end{equation}
This immediately implies that the convergence radius of the $\tau^{\rm VI}$-function series is driven by the one of the $Z_{\rm inst}$ coefficient, for which we derived the lower bound \eqref{radiusfund} in Theorem \ref{teoremafondamentale}. 
Actually, as already mentioned in the Introduction, one expects from modularity that the true radius
of convergence is $|q|<1$.

The $\tau$-functions for Painlev\'e V and III${}_i$ $i=1,2,3$ equations are obtained by implementing in the gauge theory the suitable coalescence limits. These correspond to the holomorphic decoupling of fundamental masses,  already discussed in the previous Section 3 for the instanton sector. 
As far as the one-loop coefficient is concerned, the holomorphic decoupling lowers the number of factors in the numerator of \eqref{1loop}, which implies even stronger convergence properties driven by the denominator, as one can see from \eqref{pippo}. We therefore conclude that the corresponding Painlev\'e $\tau$-functions have an infinite radius of convergence. Actually, this was already shown
to hold for the PIII$_3$ equation in \cite{Its:2014lga}.

The above, together with Theorem \ref{teoremafondamentale},
provide a proof of the following
\begin{teo}
Let $2\alpha\notin \mathbb{Z}$. The $\tau$-function for 
PVI equation has at least a finite radius of absolute and uniform convergence, while those of 
PV and PIII$_i$ $i=1,2,3$ equations have an infinite radius of absolute and uniform convergence.
\end{teo}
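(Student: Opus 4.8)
The plan is to start from the Kiev formula \eqref{kiev}, viewing $\tau^{\mathrm{VI}}(q;\alpha,s)$ (up to the elementary prefactor $q^{-\theta_0^2-\theta_t^2}(1-q)^{\theta_1\theta_t}$) as the Fourier-type series $\sum_{n\in\mathbb{Z}} s^n\,q^{(\alpha+n)^2}\,Z_{\mathrm{1loop}}^{U(2),N_f=4}(\alpha+n)\,Z_{\mathrm{inst}}^{U(2),N_f=4}(q,\alpha+n)$, and to bound it term by term in absolute value. I would split each summand into three pieces: the \emph{classical} factor $q^{(\alpha+n)^2}$, the \emph{one-loop} factor, and the \emph{instanton} factor. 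The strategy is to show that the product of the first two decays super-exponentially in $n$, at a Gaussian rate, and that this decay dominates the growth in $n$ of the instanton factor, so that the whole series converges absolutely and uniformly on compacta, with the radius in $q$ inherited from $Z_{\mathrm{inst}}$.

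First I would pin down the Gaussian decay. Using the reflection formula for the Barnes $G$-function together with its asymptotic expansion and the evaluation of the integral \eqref{integralasympt}, one obtains, as in \eqref{pippo}, that $Z_{\mathrm{1loop}}^{U(2),N_f=4}(\alpha+n)\sim 2^{-4n^2}$ as $\mathbb{Z}\ni n\to\infty$ (the leading $n^{n^2/2}e^{-3n^2/4}$ and $e^{-i\pi n^2/2}$ contributions of the eight numerator $G$'s cancel against the two denominator ones, leaving $2^{-4n^2}$). Since for $\alpha$ real $|q^{(\alpha+n)^2}|=|q|^{(\alpha+n)^2}$, and in general $|q^{(\alpha+n)^2}|\le C\,|q|^{n^2}e^{c|n|}$, the combined classical-times-one-loop factor is bounded by $C'\,(|q|/16)^{n^2}e^{c|n|}$, which is Gaussian-decaying in $n$ for every $|q|<16$; this suppression also absorbs the factor $|s|^n$ for any fixed $s$, yielding uniformity in $s$ on compacta.

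The core of the argument, and its main obstacle, is the control of $Z_{\mathrm{inst}}^{U(2),N_f=4}(q,\alpha+n)$ uniformly in $n$. For fixed $n$, Theorem \ref{teoremafondamentale} and the bound \eqref{Nf=2N} give absolute convergence of the instanton series with radius at least \eqref{radiusfund}; the delicate point is that this lower bound is governed by $g_{ij}(\vec\alpha)$ evaluated at $|\alpha_1-\alpha_2|=2|\alpha+n|$, hence \emph{shrinks} like $1/n^2$ as $n\to\infty$, while the constants $C_{ij}$ stay fixed because $2\alpha\notin\mathbb{Z}$. I would therefore fix $|q|$ inside the base radius \eqref{radiusfund} (taken at $n=0$) and organise the double series by summing over $n$ first at each instanton level $k$: for fixed $k$ the coefficient $z_k(\alpha+n)$ is a rational function of $n$ growing only polynomially, so the Gaussian prefactor makes the $n$-sum converge, and the remaining task is to bound the resulting $k$-series. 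The hard part is that a naive term-by-term estimate inflates the $n$-growth of $z_k(\alpha+n)$ (the bound \eqref{Nf=2N} grows like a fixed power of $n$ per box), so to close the argument one must show that the instanton factor grows only sub-Gaussianly in $n$ — i.e. slower than any $(16/|q|)^{n^2}$ — for $q$ in a fixed disk. I expect this to be the crux: it amounts to trading the non-uniform lower bound \eqref{radiusfund} for an $n$-uniform control of the instanton radius (which, by the modularity/S-duality heuristics of the Introduction, is expected to be $n$-independent and equal to one), and it is here that a sharper combinatorial estimate, rather than the brute-force bound, seems necessary.

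Finally, for PV and PIII${}_i$ I would deduce the infinite radius from the decoupling argument already used in Section 3. Implementing the coalescence limits that send fundamental masses to infinity, $N_f=4$ descends to $N_f<2N$, for which Theorem \ref{teoremafondamentale} gives an instanton series with \emph{infinite} radius of convergence in the renormalisation-group-invariant scale; simultaneously, the holomorphic decoupling removes Barnes $G$-functions from the numerator of \eqref{1loop}, so by the same computation as \eqref{pippo} the one-loop factor decays even faster than $2^{-4n^2}$, being driven entirely by the denominator. With an instanton factor that is entire in the coupling and a one-loop prefactor whose Gaussian (indeed super-Gaussian) decay in $n$ dominates its growth for \emph{every} value of the coupling, the Fourier series converges absolutely and uniformly on every compact set, giving the stated infinite radius. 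The same super-exponential suppression, combined with the finite-radius instanton bound for $N_f=2N$, yields the finite radius for PVI, completing the proof once the sub-Gaussian instanton estimate of the previous paragraph is established.
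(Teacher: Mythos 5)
Your route is the same as the paper's: the Kiev formula \eqref{kiev}, the Barnes $G$-function asymptotics giving the $2^{-4n^2}$ decay of the one-loop factor as in \eqref{pippo}, Theorem \ref{teoremafondamentale} for the instanton factor, and holomorphic decoupling for PV and PIII$_i$. The genuine gap is the one you flag yourself and then leave open: you never establish the uniform-in-$n$ (sub-Gaussian) control of $Z_{\rm inst}(q,\alpha+n)$, and without it the argument does not close. Term by term the bound of Theorem \ref{teoremafondamentale} is of no use for large $n$: since $|\alpha_1-\alpha_2|=2|\alpha+n|$ grows while $C_{12}$ stays fixed (this is where $2\alpha\notin\mathbb{Z}$ enters), the guaranteed radius \eqref{radiusfund} shrinks like $n^{-2}$, so for any fixed $q\neq 0$ the majorant \eqref{Nf=2N} of the $n$-th Fourier term is actually infinite for all sufficiently large $n$. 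Your fallback --- exchanging the sums and summing over $n$ first at fixed instanton order $k$ --- also fails if one uses the paper's brute-force estimates: at order $k$ the majorant carries the factor $(g_{12}g_{21})^k\sim n^{2k}$ together with the $n$-growth of the $f(\alpha_i+\mu_r,k)$, and in the regime $n\sim\sqrt{k}$ this produces $k^{k}$-type growth that overwhelms both $|q|^k$ and the Gaussian suppression $(|q|/16)^{n^2}\sim e^{-ck}$, so no positive radius comes out of the interchange either. What is missing is an estimate in which the growth in $a$ of the $k$-instanton coefficient comes with factorial suppression, for instance $|z_k(a)|\le C^k(1+|a|)^{2k}/k!$, i.e. $|Z_{\rm inst}(q,a)|$ growing at most like $e^{C|a|^2|q|}$ uniformly on compacta; combined with the $(|q|/16)^{n^2}$ prefactor this would give absolute and uniform convergence for $|q|$ small, which is the assertion to be proved. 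The brute-force bounds of Section 3 do not deliver such an estimate, so the sharper combinatorial input you ask for is indeed necessary and is not supplied in your proposal.

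For completeness of the comparison: this is precisely the step that the paper itself passes over. After deriving \eqref{pippo} it asserts that the convergence radius of the $\tau^{\rm VI}$-series is driven by that of $Z_{\rm inst}$ and invokes Theorem \ref{teoremafondamentale}, without addressing the $n$-dependence of the bound \eqref{radiusfund}; the PV and PIII$_i$ cases are treated in the same way, with the stronger one-loop decay and the fact that at fixed $n$ the instanton series is entire in the decoupled coupling. So you have reproduced the paper's argument and correctly isolated its only delicate point, but your write-up, by its own admission, does not contain the uniform-in-$n$ instanton estimate needed to turn the finite-radius statement for PVI (and, strictly speaking, also the infinite-radius statement for PV and PIII$_i$, where the $n$-sum must still be controlled at arbitrary fixed coupling) into a complete proof.
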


Let us also mention that an extension of Kiev formula for the isomonodromic deformation problem on the torus was introduced in \cite{Bonelli:2019boe,Bonelli:2019yjd}. For the one-punctured torus the corresponding equations are given by Manin's elliptic form of PVI equation with specific values of the monodromy parameters, and the related $\tau$-function is obtained in terms of the partition function of the $U(2)$ $\mathcal{N}=2^*$ theory
\begin{equation}\label{kiev*}
\tau^{U(2)\ {\mathcal N}=2^*}(q;\alpha,s)=Z_D/Z_{\rm twist},
\end{equation}
where
$$
Z_{\rm twist}= 
q^{\alpha^2} \eta(q)^{-2} \theta_1(\alpha \tau+ \rho + Q(\tau))
\theta_1(\alpha \tau+ \rho - Q(\tau))
$$
is given in terms of the solution of the corresponding Painlev\'e equation $Q(\tau)$
and
\begin{equation}
Z_D=
\sum_{n\in{\mathbb Z}}s^n q^{(\alpha+n)^2}
Z_{\rm 1loop}^{U(2)\ {\mathcal N}=2^*}(\alpha+n)Z_{\rm inst}^{U(2)\ {\mathcal N}=2^*}(q,\alpha+n)
\end{equation}
with $s=e^{2\pi i \rho}$ and $q=e^{2\pi i \tau}$.
The one-loop coefficient is given by 
\begin{equation}
Z_{\rm 1loop}^{U(2)\ {\mathcal N}=2^*}= \frac{G(1-\mu-2(\alpha+n))G(1-\mu+2(\alpha+n))}{G(1+2(\alpha+n))G(1-2(\alpha+n))},
\end{equation}
where $\mu=m/\epsilon$ is the re-scaled adjoint mass.

\begin{teo}\label{tau2}
Let $2\alpha\notin \mathbb{Z}$. The $\tau$-function \eqref{kiev*} has at least a finite radius of absolute and uniform convergence.  
\end{teo}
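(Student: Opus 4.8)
The plan is to reduce the convergence of \eqref{kiev*} to that of the Fourier series $Z_D$, since $Z_{\rm twist}$ is a fixed, summation-independent holomorphic function which is nonzero for generic data; hence $\tau^{U(2)\,\mathcal{N}=2^*}$ inherits the convergence properties of $Z_D$. I would then control the $n$-th summand $s^n q^{(\alpha+n)^2} Z_{\rm 1loop}^{U(2)\,\mathcal{N}=2^*}(\alpha+n)\, Z_{\rm inst}^{U(2)\,\mathcal{N}=2^*}(q,\alpha+n)$ through three independent ingredients: a bound on $Z_{\rm inst}$ that is \emph{uniform in} $n$, a sub-Gaussian estimate for the one-loop factor, and the Gaussian decay supplied by $q^{(\alpha+n)^2}$.

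The key step, and the one I expect to be the main obstacle, is the uniform-in-$n$ instanton bound. Since we work in the self-dual background $\epsilon_1=-\epsilon_2=\epsilon$ we are in the second subcase of Theorem \ref{teoN=2} (as $\epsilon_2/\epsilon_1=-1\in\mathbb{R}_{<0}$), and the lattice $\Lambda(\epsilon_1,\epsilon_2)$ degenerates to the one-dimensional lattice $\epsilon\mathbb{Z}$. For $U(2)$ the only relevant vev difference is $a_1-a_2=2\epsilon(\alpha+n)$, so
\[
D(\vec a;\epsilon,-\epsilon)=\min_{k\in\mathbb{Z}}|2\epsilon(\alpha+n)-\epsilon k|=|\epsilon|\min_{k\in\mathbb{Z}}|2\alpha+2n-k|=|\epsilon|\,C,\qquad C:=\min_{k\in\mathbb{Z}}|2\alpha-k|,
\]
where the crucial point is that the integer shift $2n$ is absorbed into the lattice, so $D$ — and hence $C>0$ (guaranteed by $2\alpha\notin\mathbb{Z}$) — is \emph{independent of} $n$. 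Plugging this into the bound obtained in the proof of Theorem \ref{teoN=2} (with $N=2$, $\gamma=1$, $\epsilon_1+\epsilon_2=0$, so the exponent becomes $-2(|\mu|^2+1)$) shows that $|Z_{\rm inst}^{U(2)\,\mathcal{N}=2^*}(q,\alpha+n)|\le M(q)$ for a finite constant $M(q)$ valid throughout the disk $|q|<(1+|\mu|/C)^{-2}$, uniformly in $n$.

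Next I would estimate the one-loop factor using the same Barnes-$G$ asymptotics employed above for PVI, now with effective large parameter $2n$ and fixed shifts. Both numerator and denominator of $Z_{\rm 1loop}^{U(2)\,\mathcal{N}=2^*}$ consist of two $G$-factors, and the leading contributions $\tfrac{(2n)^2}{2}\log(2n)$, $-\tfrac34(2n)^2$ and $-\tfrac{i\pi}{2}(2n)^2$ cancel pairwise between them; the same holds as $n\to-\infty$ by symmetry. Consequently $\log|Z_{\rm 1loop}^{U(2)\,\mathcal{N}=2^*}(\alpha+n)|=\mathcal{O}(n\log n)$, i.e. the one-loop coefficient grows only sub-Gaussianly. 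Unlike PVI, it therefore provides no Gaussian suppression of its own, and all the decay must come from the quadratic factor.

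Finally I would combine the three estimates. Since $\log|q^{(\alpha+n)^2}|=\mathrm{Re}\big((\alpha+n)^2\log q\big)=n^2\log|q|+\mathcal{O}(n)$ with $\log|q|<0$ for $|q|<1$ (the conclusion being robust even for complex $\alpha$, which affects only the $\mathcal{O}(n)$ terms), the $n$-th summand is bounded by $\exp\!\big(n^2\log|q|+\mathcal{O}(n\log n)\big)\,M(q)$, the Gaussian term dominating both $|s|^n=\exp(n\log|s|)$ and the sub-Gaussian one-loop growth. This yields absolute convergence of $Z_D$ for every $|q|<(1+|\mu|/C)^{-2}$ and every $s$, hence at least a finite radius of convergence; local uniformity of these bounds in $(q,s)$ upgrades this to uniform convergence. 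The substantive work is the instanton step, where the $n$-independence of $D$ is essential; the one-loop asymptotics are a routine, if careful, bookkeeping of the $G$-function expansion.
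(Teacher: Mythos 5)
Your proposal is correct and follows essentially the same route as the paper: bound the instanton factor via Theorem \ref{teoN=2} in its second subcase (self-dual background), where the integer shift $2n$ is absorbed into the degenerate lattice $\epsilon\mathbb{Z}$ so that $D$ is $n$-independent and positive precisely because $2\alpha\notin\mathbb{Z}$, and then use the Barnes-$G$ asymptotics of Appendix \ref{appe} to see that the one-loop factor grows only (sub-)exponentially in $n$, so the Gaussian factor $q^{(\alpha+n)^2}$ ensures convergence of the Fourier sum. The only cosmetic difference is that the paper records the sharper one-loop asymptotic $(2n)^{\mu^2}\bigl(\sin(\pi(\mu+2\alpha))/\sin(2\pi\alpha)\bigr)^{2n}$ and leaves the $n$-independence of $D$ implicit, whereas you make the latter explicit and settle for an $\mathcal{O}(n\log n)$ bound, which suffices.
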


\begin{proof}
With the same asymptotic formulas used before, see Appendix \ref{appe}, we have that, as $n\to\infty$,
\begin{equation}
\frac{G(1-\mu-2(\alpha+n))G(1-\mu+2(\alpha+n))}{G(1+2(\alpha+n))G(1-2(\alpha+n))}
\propto
(2n)^{\mu^2}
\left(\frac{\sin(\pi (\mu+2\alpha))}{\sin(2\pi \alpha)}\right)^{2n}
\end{equation}
up to $1/n$ corrections, where the proportionality constant is independent on $n$.
This does not get worst the convergence radius of the instanton sector and the proof follows from Theorem \ref{teoN=2}.
\end{proof}

\newpage
\appendix

\section{Conventions and Notations}\label{appendixA}

In this appendix we fix the conventions that we use in the main part of the work. We will mostly follow the notations of \cite{Bruzzo:2002xf, Alday:2009aq}.

\begin{defin}
A \emph{partition} of a positive integer $k$ is a finite non-increasing sequence of positive integers $Y_1\ge\dots\ge Y_r>0$ such that $\sum_{i=1}^rY_i=k$.
\end{defin}
We denote the number of partitions of $k$ as $p(k)$. The $Y_i$s that appear in a given partition are called \emph{parts} of the partition.
\begin{defin}
We say that a partition is \emph{$N$-coloured} if each part of the partition can have $N$ possible colours.
\end{defin}
We denote the number of $N$-coloured partitions of $k$ as $p_N(k)$.

We introduce some important functions related to the partitions of integers. Let $\tau$ be a complex number with $\mathrm{Im}\tau>0$, and let $q=e^{2\pi i\tau}$.
\begin{defin}
The \emph{Dedekind $\eta$ function} is defined as $$\eta(q)=q^{\frac{1}{24}}\prod_{n=1}^{\infty}(1-q^n).$$
\end{defin}
The requests on $\tau$ and $q$ are justified by the following:
\begin{prop}
The infinite product $$\prod_{n=1}^{\infty}(1-q^n)$$ converges absolutely if $|q|<1$.
\end{prop}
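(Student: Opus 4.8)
The plan is to deduce the claim from the standard criterion for absolute convergence of infinite products: a product $\prod_{n=1}^{\infty}(1+a_n)$ with $a_n\in\mathbb{C}$ converges absolutely if and only if the series $\sum_{n=1}^{\infty}|a_n|$ converges. Writing the Euler function as $\prod_{n=1}^{\infty}(1+a_n)$ with $a_n=-q^n$, the problem reduces to the convergence of the geometric series $\sum_{n=1}^{\infty}|q|^n$, which is immediate for $|q|<1$.

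First I would recall the criterion together with a short justification. Since $0\le\log(1+x)\le x$ for all $x\ge 0$, the partial sums of $\sum_{n=1}^{\infty}\log(1+|a_n|)$ are nondecreasing and bounded above by $\sum_{n=1}^{\infty}|a_n|$; hence $\prod_{n=1}^{N}(1+|a_n|)=\exp\bigl(\sum_{n=1}^{N}\log(1+|a_n|)\bigr)$ converges as $N\to\infty$ precisely when $\sum_{n=1}^{\infty}|a_n|<\infty$. By definition this is exactly absolute convergence of $\prod_{n=1}^{\infty}(1+a_n)$.

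Then I would set $a_n=-q^n$, so that $|a_n|=|q|^n$, and note that for $|q|<1$ the geometric series $\sum_{n=1}^{\infty}|q|^n=\frac{|q|}{1-|q|}$ converges. Applying the criterion yields absolute convergence of $\prod_{n=1}^{\infty}(1-q^n)$, which is the assertion.

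There is no real obstacle here: the single nontrivial input is the elementary comparison between the logarithm of the product and the geometric series, and the hypothesis $|q|<1$ supplies the convergence of the latter directly. The one point worth stating explicitly is that absolute convergence entails ordinary convergence of the product to a finite limit, and that this limit is nonzero because every factor $1-q^n$ is nonzero for $|q|<1$ while $1-q^n\to 1$; this is what makes $\phi(q)$, and hence the Dedekind $\eta$ function, well defined throughout the disk $|q|<1$.
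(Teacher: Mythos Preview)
Your argument is correct: the standard criterion for absolute convergence of infinite products reduces the question to the convergence of the geometric series $\sum_{n\ge1}|q|^n$, which is immediate for $|q|<1$. The paper states this proposition without proof (it is a classical background fact recalled in the appendix on notation), so there is nothing to compare against; your short justification is the standard one.
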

\begin{defin}
The \emph{Euler function} is defined as $$\phi(q)=\prod_{n=1}^{\infty}(1-q^n).$$
\end{defin}
Note that the Euler function coincides with the Dedekind $\eta$ function up to a factor $q^{\frac{1}{24}}$.
\begin{prop}\label{colouredpartition}
For every $N\ge 1$, the generating function for $p_N(k)$ is given by 
\begin{equation}
\sum_{k=0}^{\infty}p_N(k)q^k=\prod_{j=1}^{\infty}\frac{1}{(1-q^j)^N}.
\end{equation}
\end{prop}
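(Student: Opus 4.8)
The plan is to interpret the right-hand side combinatorially and match it coefficient by coefficient with the definition of $p_N(k)$. I would work at the level of formal power series in $q$; the resulting identity then also holds as an equality of analytic functions on the disk $|q|<1$ by the Proposition on absolute convergence of $\prod_n(1-q^n)$ stated just above, which moreover guarantees that all the rearrangements below are legitimate.

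First I would record the classical single-colour case. Expanding each factor as a geometric series,
\[
\prod_{j=1}^\infty \frac{1}{1-q^j} = \prod_{j=1}^\infty \sum_{m_j \ge 0} q^{j m_j},
\]
and distributing the product, the coefficient of $q^k$ counts the sequences $(m_1,m_2,\ldots)$ of nonnegative integers, almost all zero, with $\sum_j j\,m_j = k$. Reading $m_j$ as the number of parts equal to $j$, such a sequence is exactly a partition of $k$, so the coefficient equals $p(k)$; that is, $\sum_{k\ge 0} p(k)\,q^k = \prod_j (1-q^j)^{-1}$.

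Next I would upgrade this to $N$ colours by the same bookkeeping, now keeping a separate geometric series for each (size, colour) pair. Writing
\[
\prod_{j=1}^\infty \frac{1}{(1-q^j)^N} = \prod_{j=1}^\infty \prod_{c=1}^N \sum_{m_{j,c}\ge 0} q^{j\, m_{j,c}},
\]
the coefficient of $q^k$ counts the collections $(m_{j,c})_{j\ge 1,\,1\le c\le N}$, almost all zero, with $\sum_{j,c} j\,m_{j,c} = k$. Interpreting $m_{j,c}$ as the number of parts of size $j$ carrying colour $c$, each such collection is precisely an $N$-coloured partition of $k$, and distinct collections yield distinct coloured partitions; hence the coefficient equals $p_N(k)$, which is the claim. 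Equivalently---and this is the one-line version I would actually write---an $N$-coloured partition of $k$ is the same datum as an ordered $N$-tuple $(Y^{(1)},\ldots,Y^{(N)})$ of ordinary partitions with $\sum_c |Y^{(c)}| = k$ (collect the parts of each colour into their own diagram), so that $p_N(k) = \sum_{k_1+\cdots+k_N=k} p(k_1)\cdots p(k_N)$ is the $k$-th coefficient of the $N$-th power of $\sum_k p(k)q^k$, giving $\sum_k p_N(k)\,q^k = \bigl(\prod_j (1-q^j)^{-1}\bigr)^N$.

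Since none of these steps is delicate, there is no substantial obstacle: the only point needing a word of justification is that the infinite product converges absolutely for $|q|<1$, so that expanding the factors and regrouping the monomials is valid, and this is exactly the content of the preceding Proposition.
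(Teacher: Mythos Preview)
Your argument is correct. The paper states this proposition without proof, treating it as a standard combinatorial fact, so there is nothing to compare against; your write-up supplies exactly the classical derivation one would expect, and both formulations you give (the double-indexed geometric expansion and the identification of an $N$-coloured partition with an ordered $N$-tuple of ordinary partitions) are valid and equivalent.
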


In the text, we will always identify a partition of a natural number $k$ with a Young diagram $Y$ with $k$ boxes, arranged in left-justified rows, with the row lengths in non-increasing order, such that the parts $Y_1\ge Y_2\ge\dots\ge Y_r>0$ of $Y$ (such that $Y_1+\dots+Y_r=k$) denote the heights of the columns of the diagram.
Moreover, we will denote with $Y'_1\ge Y'_2\ge\dots\ge Y'_s>0$ the lengths of the rows of $Y$. We will denote with $\mathbb{Y}$ the set of all Young diagrams.

If every box $s$ is labeled with a pair of indices $(i,j)$, with $1\le i\le Y_j$ and $1\le j\le Y'_i$, that denotes its position in the diagram, we define the \emph{arm length} and the \emph{leg length} of $s$ as 
\begin{equation}
\begin{aligned}
A_Y(s)&=Y_j-i,\\
L_Y(s)&=Y'_i-j,
\end{aligned}
\end{equation}
respectively.

Moreover, we will use the following
\begin{defin}
If $Y$ is a Young diagram, and $s=(i,j)$ is one of its box, we call \emph{hook} of $s$ the set of boxes with indices $(a,b)$ such that $a=i$ and $b\ge j$ or $a\ge i$ and $b=j$.
\end{defin}
We denote with $h_Y((i,j))$ or $h_Y(s)$ the number of boxes in the hook of $s$ in $Y$. It is easy to see that, if $s\in Y$, then 
\begin{equation}
h_Y(s)=A_Y(s)+L_Y(s)+1.
\end{equation}

For a box $s=(i,j)$, we define the following quantities, crucial for the definitions of the instanton partition functions:
\begin{equation}
\begin{aligned}
E(a,Y_1,Y_2,s)&=a-\epsilon_1L_{Y_2}(s)+\epsilon_2(A_{Y_1}(s)+1)\\
\varphi\left(a,s=(i,j)\right)&=a+\epsilon_1(i-1)+\epsilon_2(j-1).
\end{aligned}
\end{equation}

We are now ready to define the useful contributions for the $U(N)$ instanton partition functions \cite{Flume:2002az, Bruzzo:2002xf}. We begin with the contribution of a bifundamental hypermultiplet of mass $m$:
\begin{equation}
\begin{aligned}
z_{\mathrm{bifund}}(\vec{a},\vec{Y};\vec{b},\vec{W};m)=\prod_{i,j=1}^N&\prod_{s\in Y_i}(E(a_i-b_j,Y_i,W_j,s)-m)\\ &\prod_{t\in W_j}(\epsilon_1+\epsilon_2-E(b_j-a_i,W_j,Y_i,t)-m),
\end{aligned}
\end{equation}
where with $\vec{Y}$ we denote an $N$-tuple $\vec{Y}=(Y_1,\dots,Y_N)$ of Young diagrams, and the same for $\vec{W}$, while $\vec{a}=(a_1,\dots,a_N)$ and $\vec{b}=(b_1,\dots,b_N)$ denote the vevs of the scalar component of the vector multiplets on the Coulomb branch.

From this, the contributions of an adjoint hypermultiplet of mass $m$ and of a vector multiplet can be written as
\begin{equation}
\begin{aligned}
z_{\mathrm{adj}}(\vec{a},\vec{Y},m)&=z_{\mathrm{bifund}}(\vec{a},\vec{Y},\vec{a},\vec{Y},m),\\
z_{\mathrm{vect}}(\vec{a},\vec{Y})&=[z_{\mathrm{adj}}(\vec{a},\vec{Y},0)]^{-1}.
\end{aligned}
\end{equation}

Finally, the contributions for fundamental and antifundamental hypermultiplets read as follows:
\begin{equation}
\begin{aligned}
z_{\mathrm{fund}}(\vec{a},\vec{Y},m)&=\prod_{i=1}^2\prod_{s\in Y_i}(\varphi(a_i,s)-m+\epsilon_1+\epsilon_2),\\
z_{\mathrm{antifund}}(\vec{a},\vec{Y},m)&=z_{\mathrm{fund}}(\vec{a},\vec{Y},\epsilon_1+\epsilon_2-m).
\end{aligned}
\end{equation}

We finally recall the expressions of the instanton partition functions analyzed in the text.

The instanton partition function of the $\mathcal{N}=2^*$ gauge theory with gauge group $U(N)$ can be written as
\begin{equation}\label{N=2*UN}
\begin{aligned}
Z_{\mathrm{inst}}^{\mathcal{N}=2^*, U(N)}=\sum_{k\ge 0}q^k\sum_{|\vec{Y}|=k}\prod_{i,j=1}^N&\prod_{s\in Y_i}\frac{a_i-a_j-\epsilon_1L_{Y_j}(s)+\epsilon_2(A_{Y_i}(s)+1)-m}{a_i-a_j-\epsilon_1L_{Y_j}(s)+\epsilon_2(A_{Y_i}(s)+1)}\\
&\prod_{t\in Y_j}\frac{-a_j+a_i+\epsilon_1(L_{Y_i}(t)+1)-\epsilon_2A_{Y_j}(t)-m}{-a_j+a_i+\epsilon_1(L_{Y_i}(t)+1)-\epsilon_2A_{Y_j}(t)},
\end{aligned}
\end{equation}
where the sum over $|\vec{Y}|=k$ means that we are summing over $N$-tuples of Young diagrams $(Y_1,\dots,Y_N)$ such that the sum of the number of the boxes in all the diagram is equal to $k$.

The instanton partition function of the $\mathcal{N}=2$ super Yang--Mills gauge theory with gauge group $U(N)$ can be written as
\begin{equation}\label{N=2UN}
\begin{aligned}
Z_{\mathrm{inst}}^{\mathcal{N}=2, U(N)}=\sum_{k\ge 0}q^k\sum_{|\vec{Y}|=k}\prod_{i,j=1}^N&\prod_{s\in Y_i}\frac{1}{a_i-a_j-\epsilon_1L_{Y_j}(s)+\epsilon_2(A_{Y_i}(s)+1)}\\
&\prod_{t\in Y_j}\frac{1}{-a_j+a_i+\epsilon_1(L_{Y_i}(t)+1)-\epsilon_2A_{Y_j}(s)}.
\end{aligned}
\end{equation}

For what concerns the instanton partition function of the $U(N)$ gauge theory with $N_f$ (anti)fundamental hypermultiplets, our analysis does not depend on whether the matter is in the fundamental or antifundamental representation, and in order to simplify the notation we will restrict to consider only the antifundamental matter.
Hence, we can write
\begin{equation}\label{functionfundamentalperdecoupling}
\begin{aligned}
Z_{\mathrm{inst}}^{\mathcal{N}=2\ U(N),\ N_f}=\sum_{k\ge 0}q^k\sum_{|\vec{Y}|=k}\prod_{i,j=1}^N&\prod_{(m,n)\in Y_i}\frac{1}{a_i-a_j-\epsilon_1L_{Y_j}((m,n))+\epsilon_2\left(A_{Y_i}((m,n))+1\right)}\\
&\prod_{(m,n)\in Y_j}\frac{1}{a_i-a_j+\epsilon_1\left(L_{Y_i}((m,n))+1\right)-\epsilon_2A_{Y_j}((m,n))}\\
\prod_{i=1}^N&\prod_{(m,n)\in Y_i}\prod_{r=1}^{N_f}\left[a_i+\epsilon_1(m-1)+\epsilon_2(n-1)+m_r\right],
\end{aligned}
\end{equation}
where $m_r$, $r=1,\dots,N_f$, are the masses of the antifundamental hypermultiplets.

\section{Proof of Lemma \ref{lemma3.7}}
\label{b1}

We know that the boxes in $B_1(Y_1)$ satisfy $h_{Y_1}((i,j))=(Y'_1)_i-(Y'_2)_i$. This can happen at most for one box in each row of $Y_1$, since the left hand side strictly decreases moving on the right on a fixed row of the diagram, while the right hand side remains constant. Therefore, we can bound the product as follows:
\begin{equation}\label{A}
\begin{aligned}
&\prod_{(i,j)\in B_1(Y_1)}\bigg|\frac{(Y'_1)_i-(Y'_2)_i}{\alpha_1-\alpha_2}\bigg|=\frac{\prod_{(i,j)\in B_1(Y_1)}|(Y'_1)_i-(Y'_2)_i|}{|\alpha_1-\alpha_2|^{|B_1(Y_1)|}}\le\\
&\frac{\max\{1,|(Y'_1)_1-(Y'_2)_1|\}\cdots \max\{1,|(Y'_1)_{(Y_1)_1}-(Y'_2)_{(Y_1)_1}|\}}{|\alpha_1-\alpha_2|^{|B_1(Y_1)|}},
\end{aligned}
\end{equation}
where we bounded the product in the numerator with the product of all the differences between rows' lenghts ($(Y_1)_1$ is the height of the first column of $Y_1$, that is the number of rows of $Y_1$), and we modified the factors taking the maximum with 1, because it could happen that, in a fixed row $i$ of $Y_1$, there is not a box which is in $B_1(Y_1)$ and $(Y'_1)_i=(Y'_2)_i$ holds, and we want to avoid that the right hand side vanishes for this reason.

From the last term in \eqref{A}, we can bound the numerator using the geometric-arithmetic mean inequality:
\begin{equation}\label{B}
\begin{aligned}
&\max\{1,|(Y'_1)_1-(Y'_2)_1|\}\cdots \max\{1,|(Y'_1)_{(Y_1)_1}-(Y'_2)_{(Y_1)_1}|\}\le\biggl(\frac{\max\{1,|(Y'_1)_1-(Y'_2)_1|\}+\dots+  \max\{1,|(Y'_1)_{(Y_1)_1}-(Y'_2)_{(Y_1)_1}|\}}{(Y_1)_1}\biggr)^{(Y_1)_1}\\
&\le\biggl(\frac{k}{(Y_1)_1}\biggr)^{(Y_1)_1}\le \binom{k}{(Y_1)_1}\le 2^k,
\end{aligned}
\end{equation}
where we used that for the binomial coefficient, for every $1\le k\le n$, the following bounds always hold $$\left(\frac{n}{k}\right)^k\le\binom{n}{k}<\left(\frac{n\cdot e}{k}\right)^k.$$
For the denominator, we have to distinguish the cases in which $|\alpha_1-\alpha_2|\ge 1$ and $|\alpha_1-\alpha_2|< 1$. In the first case, we simply bound the fraction with the bound of the numerator; in the second case, we have that, since $(Y_1)_1\le k$, $|\alpha_1-\alpha_2|^{|B_1(Y_1)|}\ge |\alpha_1-\alpha_2|^k$. Therefore, 
\begin{equation}
\prod_{(i,j)\in B_1(Y_1)}\bigg|\frac{(Y'_1)_i-(Y'_2)_i}{\alpha_1-\alpha_2}\bigg|\le\frac{2^k}{\min\{1,|\alpha_1-\alpha_2|\}^k}.
\end{equation}

\section{Proof of Lemma \ref{lemma3.9}}
\label{b2}

Let us first find a bound on the product over the boxes in $B_2(Y_1)$ in one fixed row of $Y_1$. After that, we will multiply the bounds on all the rows of $Y_1$.
We can write 
\begin{equation}
\begin{aligned}
\prod_{(i,j)\in B_2(Y_1)\cap i\text{th\ row\ of\ }Y_1}\bigg|\frac{h_{Y_1}((i,j))}{-h_{Y_1}((i,j))+(Y'_1)_i-(Y'_2)_i}\bigg|=\prod_{(i,j)\in B_2(Y_1)\cap i\text{th\ row\ of\ }Y_1}\bigg|\frac{h_{Y_1}((i,j))}{h_{Y_1}((i,j))-[(Y'_1)_i-(Y'_2)_i]}\bigg|.
\end{aligned}
\end{equation}
Note that the denominator is different from 0 for all the factors, since we are only multiplying over the boxes in $B_2(Y_1)$.

We will suppose $(Y'_1)_i> (Y'_2)_i$ for every $i$, since otherwise the previous product would be clearly bounded by 1 in the $i$th row.

Then, for a given row $i$, the product over the boxes in the $i$th row of $Y_1$ can be splitted in two parts: the product over the boxes for which $h_{Y_1}((i,j))-[(Y'_1)_i-(Y'_2)_i]$ is positive, and the product over the boxes for which the same quantity is negative. Note that, since we are assuming $(Y'_1)_i-(Y'_2)_i>0$, the latter product is present if and only if in the $i$th row there is a box, let us denote it with $(i,j^*)$, such that $h_{Y_1}((i,j^*))=[(Y'_1)_i-(Y'_2)_i]$, since the quantity $h_{Y_1}((i,j))-[(Y'_1)_i-(Y'_2)_i]$ is strictly decreasing moving to the right on a fixed row. 

Therefore, we first consider the product over the boxes for which that quantity is positive (that correspond to the boxes at the left of $(i,j^*)$ if this box is present in the $i$th row).
We can rewrite the factors of this first part of the product as
\begin{equation}
\frac{h_{Y_1}((i,j))}{h_{Y_1}((i,j))-[(Y'_1)_i-(Y'_2)_i]}=\frac{[(Y'_1)_i-(Y'_2)_i]+(Y'_2)_i-j+A_{Y_1}((i,j))+1}{(Y'_2)_i-j+A_{Y_1}((i,j))+1},
\end{equation}
which is of the form $$\prod_{j=1}^n\frac{a+b_j}{b_j},$$ with $b_j\in\mathbb{N}$ and $b_{j+1}<b_j$ for all $j$, and $a>0$ constant (since moving to the right $A_{Y_1}((i,j))$ decreases). But then a product of this form is bounded by
\begin{equation}
\frac{a+1}{1}\cdot\frac{a+2}{2}\cdots\frac{a+n}{n}.
\end{equation}
Indeed, if $a>0$ and $b>c>0$, it is always true that $$\frac{a+b}{b}\le\frac{a+c}{c},$$ since, under those hypothesis, $$\frac{a+b}{b}\le\frac{a+c}{c}\iff (a+b)c\le (a+c)b\iff ac\le ab\iff c\le b.$$ But then, $b_n\ge 1$ (since it is an integer number and for hypothesis it is positive), and, since $b_{j-1}>b_j$ for all $j=2,\dots,n$, we have that $b_j\ge n-j+1$ for all $j=1,\dots n-1$; so the previous bound holds. 

In our case, $n$ is at most $j^*-1$, so we can bound this first part of the product with
\begin{equation}\label{leftboxes}
\prod_{r=1}^{j^*-1}\frac{[(Y'_1)_i-(Y'_2)_i]+r}{r}=\binom{(Y'_1)_i-(Y'_2)_i+j^*-1}{j^*-1}.
\end{equation}

We can bound the second part of the product (if there are boxes on the right of $(i,j^*)$) as follows. First, from $h_{Y_1}((i,j^*))=(Y'_1)_i-(Y'_2)_i$, it follows that $$A_{Y_1}((i,j^*))+(Y'_2)_i+1=j^*.$$ Then, we rewrite
\begin{equation}
\begin{aligned}
h_{Y_1}((i,j^*+r))&=(Y'_1)_i-j^*-r+A_{Y_1}((i,j^*+r))+1=\\
&=(Y'_1)_i-(Y'_2)_i+(Y'_2)_i-j^*-r+A_{Y_1}((i,j^*))-[A_{Y_1}((i,j^*))-A_{Y_1}((i,j^*+r))]+1=\\
&=(Y'_1)_i-(Y'_2)_i+(Y'_2)_i-[A_{Y_1}((i,j^*))+(Y'_2)_i+1]-r+A_{Y_1}((i,j^*))-[A_{Y_1}((i,j^*))-A_{Y_1}((i,j^*+r))]+1=\\
&=(Y'_1)_i-(Y'_2)_i-r-[A_{Y_1}((i,j^*))-A_{Y_1}((i,j^*+r))],
\end{aligned}
\end{equation}
for every $0<r\le(Y'_1)_i-j^*$. Moreover,
\begin{equation}
[(Y'_1)_i-(Y'_2)_i]-h_{Y_1}((i,j^*+r))=r+[A_{Y_1}((i,j^*))-A_{Y_1}((i,j^*+r))].
\end{equation}
Since the quantity $[A_{Y_1}((i,j^*))-A_{Y_1}((i,j^*+r))]$ is positive, we have that the product over the boxes on the right of $(i,j^*)$ is bounded by
\begin{equation}\label{rightboxes}
\prod_{j=j^*+r}\frac{(Y'_1)_i-(Y'_2)_i-r}{r}=\prod_{r=1}^{(Y'_1)_i-j^*}\frac{(Y'_1)_i-(Y'_2)_i-r}{r}=\binom{(Y'_1)_i-(Y'_2)_i-1}{(Y'_1)_i-j^*}.
\end{equation}
Putting together \eqref{leftboxes} and \eqref{rightboxes}, the product over the boxes of the $i$th row of $Y_1$ is bounded by
\begin{equation}\label{productquantity}
\binom{(Y'_1)_i-(Y'_2)_i+j^*-1}{j^*-1}\binom{(Y'_1)_i-(Y'_2)_i-1}{(Y'_1)_i-j^*}.
\end{equation}

Since $j^*\le (Y'_1)_i$ and $j^*>(Y'_2)_i$, we have that
\begin{equation}
\begin{aligned}
&\binom{(Y'_1)_i-(Y'_2)_i+j^*-1}{j^*-1}\le \binom{2(Y'_1)_i-(Y'_2)_i-1}{j^*-1}\le 2^{2(Y'_1)_i-(Y'_2)_i},\\
&\binom{(Y'_1)_i-(Y'_2)_i-1}{(Y'_1)_i-j^*}\le 2^{(Y'_1)_i-(Y'_2)_i}.
\end{aligned}
\end{equation}

We conclude that
\begin{equation}
\binom{(Y'_1)_i-(Y'_2)_i+j^*-1}{j^*-1}\binom{(Y'_1)_i-(Y'_2)_i-1}{(Y'_1)_i-j^*}\le 2^{3(Y'_1)_i}.
\end{equation}
Considering the product of this bound for all the rows of $Y_1$, we can conclude
\begin{equation}
\prod_{(i,j)\in B_2(Y_1)}\bigg|\frac{h_{Y_1}((i,j))}{-h_{Y_1}((i,j))+(Y'_1)_i-(Y'_2)_i}\bigg|\le 8^{|Y_1|}.
\end{equation}

\section{Useful asymptotics}
\label{appe}

Here we collect some useful asymptotic formulae used in Section 4.

\begin{equation}
\begin{aligned}
\log(G(1-\mu+2(\alpha+n)))=&\frac{2n+2\alpha-\mu}{2}\log(2\pi)-\log(A)+\frac{1}{12}-\frac{3(2n)^2}{4}-(2\alpha-\mu)(2n)+\\
&+\left(\frac{(2n)^2}{2}-\frac{1}{12}+\frac{(2\alpha-\mu)^2}{2}+(2\alpha-\mu)(2n)\right)\log(2n)+\mathcal{O}\left(\frac{1}{n}\right);\\
\log(G(1-\mu-2(\alpha+n)))=&\log(G(1+\mu+2(\alpha+n)))-(
\mu+2(\alpha+n))\log(2\pi)+\int_0^{\mu+2(\alpha+n)}\pi z'\cot(\pi z')\mathrm{d}z'=\\
=&\frac{2n+2\alpha+\mu}{2}\log(2\pi)-\log(A)+\frac{1}{12}-\frac{3(2n)^2}{4}-(2\alpha+\mu)(2n)+\\
&+\left(\frac{(2n)^2}{2}-\frac{1}{12}+\frac{(2\alpha+\mu)^2}{2}+(2\alpha+\mu)(2n)\right)\log(2n)-(\mu+2(a+n))\log(2\pi)+\\
&+(\mu+2(\alpha+n))\log(1-\exp(2\pi i (\mu+2\alpha)))+\\
&-\frac{i\left(\pi^2(\mu+2(\alpha+n))^2+\mathrm{Li}_2(\exp(2\pi i (\mu+2\alpha)))\right)}{2\pi}+\mathcal{O}\left(\frac{1}{n}\right);\\
\log(G(1+2(\alpha+n)))=&\frac{2n+2\alpha}{2}\log(2\pi)-\log(A)+\frac{1}{12}-\frac{3(2n)^2}{4}-(2\alpha)(2n)+\\
&+\left(\frac{(2n)^2}{2}-\frac{1}{12}+\frac{(2\alpha)^2}{2}+(2\alpha)(2n)\right)\log(2n)+\mathcal{O}\left(\frac{1}{n}\right);\\
\log(G(1-2(\alpha+n)))=&\log(G(1+2(\alpha+n)))-(2(\alpha+n))\log(2\pi)+\int_0^{2(\alpha+n)}\pi z'\cot(\pi z')\mathrm{d}z'=\\
=&\frac{2n+2\alpha}{2}\log(2\pi)-\log(A)+\frac{1}{12}-\frac{3(2n)^2}{4}-(2\alpha)(2n)+\\
&+\left(\frac{(2n)^2}{2}-\frac{1}{12}+\frac{(2\alpha)^2}{2}+(2\alpha)(2n)\right)\log(2n)-(2(\alpha+n))\log(2\pi)+\\
&+(2(\alpha+n))\log(1-\exp(4\pi i \alpha))+\\
&-\frac{i\left(\pi^2(2(\alpha+n))^2+\mathrm{Li}_2(\exp(4\pi i \alpha))\right)}{2\pi}+\mathcal{O}\left(\frac{1}{n}\right).
\end{aligned}
\end{equation}
Hence,
\begin{equation}
\begin{aligned}
&\log(G(1-\mu-2(\alpha+n)))+\log(G(1-\mu+2(\alpha+n)))-\log(G(1+2(\alpha+n)))-\log(G(1-2(\alpha+n)))=\\
&=2n\log(\frac{1-\exp(2\pi i(\mu+2\alpha))}{1-\exp(4\pi i \alpha)})-2\pi i\mu n+\mu^2\log(2n)-\mu\log(2\pi)+2\alpha\log(\frac{1-\exp(2\pi i(\mu+2\alpha))}{1-\exp(4\pi i \alpha)})+\\
&+\mu\log(1-\exp(2\pi i(\mu+2\alpha)))-\frac{\mathrm{Li}_2(\exp(2\pi i(\mu+2\alpha)))-\mathrm{Li}_2(\exp(4\pi i\alpha))}{2\pi}-\frac{i\pi}{2}(\mu^2+4\alpha\mu)+\mathcal{O}\left(\frac{1}{n}\right).
\end{aligned}
\end{equation}
Therefore, up to $1/n$ corrections,
\begin{equation}
\frac{G(1-\mu-2(\alpha+n))G(1-\mu+2(\alpha+n))}{G(1+2(\alpha+n))G(1-2(\alpha+n))}
\propto
(2n)^{\mu^2}
\left(\frac{\sin(\pi (\mu+2\alpha))}{\sin(2\pi \alpha)}\right)^{2n}
\end{equation}
where the proportionality constant is independent on $n$.

\newpage
 	
 	\color{black}

		\bibliographystyle{unsrtaipauth4-1}
		\bibliography{Biblio}

	\end{document}